\pgfplotsset{compat=1.9}
\newtheorem{theorem}{Theorem}
\newtheorem{assumption}{Assumption}
\newtheorem{lemma}{Lemma}
\newtheorem{definition}{Definition}
\title{Testing for a Threshold in Models with Endogenous Regressors}
\author{Mario P. Rothfelder\thanks{Amsterdam School of Economics, University of Amsterdam, and Tinbergen Institute. Address: University of Amsterdam, Roetersstraat 11, 1018WB Amsterdam, The Netherlands. Email: m.p.rothfelder@uva.nl} \, and Otilia Boldea\thanks{Corresponding Author. Department of Econometrics and Operations Research, Tilburg School of Economics and Management, Tilburg University. Address: Tilburg University, Warandelaan 2, 5037AB, Tilburg, The Netherlands. Email: o.boldea@tilburguniversity.edu}\thanks{We would like to thank for very valuable comments and suggestions on this work: Jeffrey Campbell, Mehmet Caner, Pavel \u{C}\'{i}\u{z}ek, Bruce Hansen, Frank Kleibergen, Andr\'{e} Lucas, Bertrand Melenberg, David Pacini, Valerie Ramey, Nikolaus Schweizer, Bas Werker, and the participants at the Netherlands Econometrics Study Group Conference in Tilburg (2014) and Maastricht (2015), the Bristol Econometric Study Group (2015), the IAAE Conference in Thessaloniki (2015), the NBER-NSF Time Series Conference in Vienna (2015), the ESEM Conference in Geneva (2016) and the NASM Conference in Philadelphia (2016). Otilia Boldea gratefully acknowledges the support of the NWO VENI Grant 451-11-001.}}
\begin{document}

\maketitle
\begin{abstract}
We show by simulation that the test for an unknown threshold in models with endogenous regressors - proposed in \cite{canerhansen2004} - can exhibit severe size distortions both in small and in moderately large samples, pertinent to empirical applications. We propose three new tests that rectify these size distortions. The first test is based on GMM estimators. The other two are based on unconventional 2SLS estimators, that use additional information about the linearity (or lack of linearity) of the first stage. Just like the test in \cite{canerhansen2004}, our tests are non-pivotal, and we prove their bootstrap validity. The empirical application revisits the question in \cite{rameyzubairy2018} whether government spending multipliers are larger in recessions, but using tests for an unknown threshold. Consistent with  \cite{rameyzubairy2018}, we do not find strong evidence that these multipliers are larger in recessions.\\

\textbf{Keywords: 2SLS, GMM, instrumental variables, government spending, unknown threshold, wild bootstrap }
\end{abstract}
\newpage

\section{Introduction}
In the aftermath of the 2008 financial crisis, there has been a surge in the macroeconomic literature investigating whether the response of many key macroeconomic variables to monetary and fiscal policies depends on the state of the economy - see, among others, \cite{auerbachgorodnichenko2013}, \cite{owyangetal2013}, \cite{Caggiano:2015}, \cite{Cug:2015}, \cite{rameyzubairy2018}, \cite{Alloza:2022} and \cite{Jo:2022} for fiscal policy examples, and
\cite{Santoro:2014}, \cite{Barnichon:2018}, \cite{Jorda:2020}, \cite{Alpanda:2021},  \cite{Bruns:2021} and \cite{Klepacz} for monetary policy examples. These papers model state dependence in various ways, including via threshold models, in which case the state dependence is typically driven by a particular variable such as the unemployment rate, interest rates, or credit conditions.

Threshold models were also widely used in economics to model unemployment, growth, bank profits, asset prices, exchange rates,  and interest rates; see \cite{hansen2011} for a survey of  economic applications. While threshold models with exogenous regressors  have been widely studied and their asymptotic properties are well known\footnote{See \textit{inter alia} \cite{tong1990}, \cite{hansen1996,hansen1999,hansen2000} and \cite{gonwolf2005} for inference, \cite{gonpit2002} for multiple threshold regression and model selection, \cite{canerhansen2001} and \cite{gonpit2006} for threshold regression with unit roots, \cite{seolin2007} for smoothed estimators of threshold models, \cite{leeetal2011} for testing for thresholds, and \cite{hansen2016} for threshold regressions with a kink.}, the literature on threshold models with endogenous regressors remains relatively scarce.\footnote{For some contributions with endogenous regressors, see \textit{inter alia}: for time-series, \cite{canerhansen2004}, who consider exogenous threshold variables and \cite{kourtellosetal2013} who consider endogenous threshold variables; for cross-sections and (short) panels,  \cite{seoshin2016} (and references therein), \cite{yu2018} and \cite{mcadam2019}, who consider endogenous threshold variables.} Nevertheless, in many applications, the regressors are \textit{endogenous} and the existence of a threshold has important policy implications. For example, among the empirical papers cited above, \cite{owyangetal2013}, \cite{Cug:2015}, \cite{rameyzubairy2018} and \cite{Jo:2022} use a threshold model with endogenous regressors, where the state dependence of the macroeconomic response is driven by a  threshold variable being above or below a certain a-priori fixed value. \cite{rameyzubairy2018} (RZ henceforth) used a threshold model with endogenous regressors to investigate whether the government spending multiplier is larger  in recessions, where recessions were defined by the unemployment rate being below or above a threshold parameter. This has important policy implications, because if the government spending multiplier is larger (above one) in recessions, it implies that governments should spend more in recessions to boost the economy.

In their analysis, RZ fix this threshold parameter at an unemployment rate of $6.5\%$.\footnote{This is based on the Federal Reserve's use of this threshold in a policy announcement. They later do robustness checks with a larger threshold, and modelled time-varying thresholds.} As the threshold parameter is typically \textit{unknown}, we revisit their question and test for an unknown threshold, using - to our knowledge - the only parametric test available for linear \textit{time series} models with endogenous regressors that directly applies to the RZ model. This test was proposed in \cite{canerhansen2004} (CH henceforth). CH first compute a Wald test statistic for all candidate threshold values between the $\epsilon$ and $(1-\epsilon)$ quantiles of the threshold variable, then take the maximum over this sequence to obtain a test for the null hypothesis of no threshold against the alternative hypothesis of an unknown threshold in (otherwise) linear models with endogenous regressors and exogenous threshold variables. 

Our simulations show that this test has serious size distortions, with rejection frequencies up to three times the nominal size in small samples (see Tables \ref{tab:GMM.1} and \ref{tab:GMM.2}), accompanied by a reversal to severe under-rejections for larger samples of $1000$ observations. Tables \ref{tab:GMM.1} and \ref{tab:GMM.2} show that these size distortions are already present in just-identified models with strong instruments and homoskedastic data. 
We identify two problems with the CH test that lead to these size distortions, and proceed to correct them.
 
The first problem is illustrated in Figure \ref{fig:TestSeq}, where we test for an unknown threshold in the RZ model, and plot the sequence of the CH test statistics over the candidate threshold values, along with the same sequence for three tests we propose.\footnote{Section \ref{sec:TMTS} explains how these tests are calculated. Section \ref{sec:Appl} describes the threshold estimator, the model and the data. } The plot shows erratic behavior of the CH test sequence, switching frequently between low and high values, especially around the sample edges, but starting already at the 25\% and  the 75\% sample quantiles of the threshold variable. Therefore, the CH test, the maximum of the plotted sequence of tests, can change by a large amount when slightly changing the trimming.\footnote{Note that this is not due to the actual threshold estimate being between cut-off points: if there was a threshold, its consistent estimate, based on 2SLS or in CH,  with 25\% cut-off, is $8.33$; however, in our application in Section \ref{sec:Appl}, and in line with RZ, we do not find evidence of such a threshold.} This is problematic for its application in practice, as in general, it may lead to both over- or under-rejection of the null hypothesis, especially since this non-monotonic behavior is not well replicated by the bootstrap critical values even for samples of $1000$ observations, as shown in our simulations.
\FloatBarrier
\begin{figure}[ht!]
\begin{center}
\caption{Plot of the CH test sequence (black) versus our proposed test sequences (blue, red, green) for the model in RZ }\label{fig:TestSeq}
\includegraphics[width=0.9\textwidth]{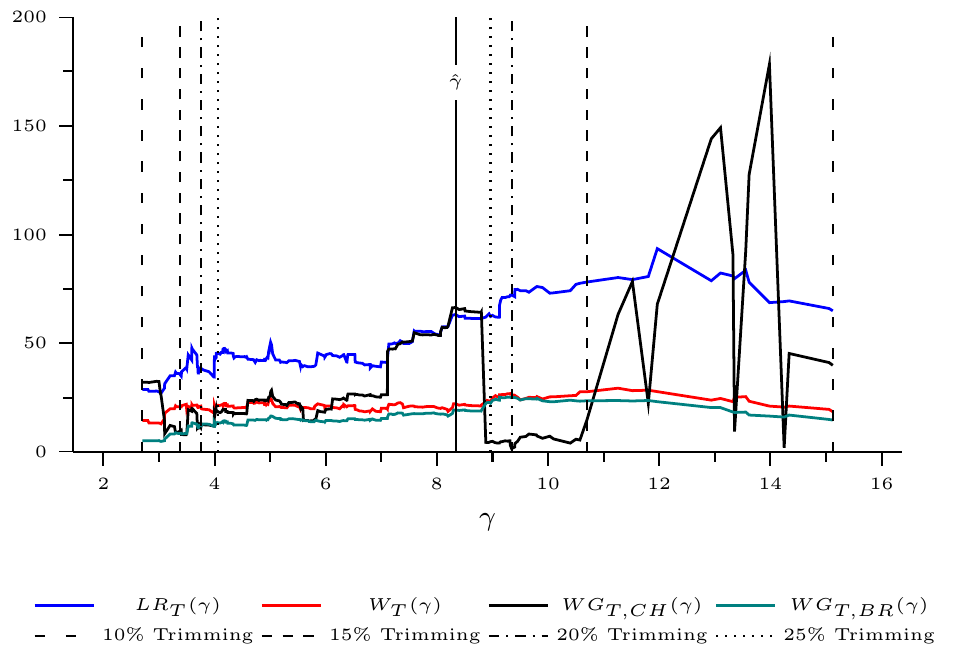}
\end{center}
Note: Above, $WG_{T,CH}(\gamma)$ refers to the CH test sequence, and $WG_{T,BR}(
\gamma)$, $W_T(\gamma)$ and $LR_T(\gamma)$ to our three test sequences: the first one is based on GMM estimators, and the last two on 2SLS estimators. The sample size is $500$.
\end{figure}
\FloatBarrier
We identify the source of this problem to lie in the computation of the variance estimator in the middle of each Wald test for a candidate threshold. The residuals in the variance estimator are obtained with sub-sample parameter estimators, using observations only below or above each candidate threshold value. When the threshold value is close to the sample edges, these residuals can be very inaccurate approximations of the true underlying errors, because of the slow convergence rate of the sub-sample estimators employed to obtain them. We correct this by obtaining the residuals with full-sample estimators instead. Figure \ref{fig:TestSeq} shows that all three test statistics we propose no longer display this non-monotonic behavior, whether computed with generalized method of moment estimators (GMM) estimators as in CH, or with two-stage least squares estimators (2SLS).

A second, yet related issue arises in the construction of the critical values of the CH test. The critical values of unknown threshold tests typically depend on the data, and therefore need to be simulated or bootstrapped. CH propose to bootstrap the critical values via a wild fixed regressor bootstrap and prove the bootstrap validity of their test. However, just like the variance estimator, the bootstrap residuals (and therefore the bootstrap samples) are computed with estimators under the alternative of each candidate threshold value.\footnote{Note that bootstrapping under the alternative is not necessary even when the variance estimator is computed with residuals under each alternative hypothesis of a candidate threshold value.}

 While bootstrapping under the alternative does not affect the asymptotic validity of the CH test, it is problematic for two reasons. First, it is computationally much more intensive than computing the bootstrap samples just once, under the null hypothesis, using full-sample estimators. This is because for each bootstrapped test, one needs to compute many bootstrap samples corresponding to each candidate threshold value. Second, just like their sample equivalents, the bootstrap residuals will be inaccurate at the sample edges due to slow convergence of the sub-sample estimators used to employ them. When taking the maximum over the sequence of bootstrapped tests, then doing so for many bootstrap samples, the bootstrapped critical values can become highly unreliable for the original test statistic, even for sample sizes up to $1000$ observations. Tables \ref{tab:GMM.1} and \ref{tab:GMM.2} in the simulation section show severe under-rejection of the null hypothesis  for sample sizes of $1000$ observations.  They also  show that bootstrapping under the null hypothesis fixes this issue, leading to correctly sized tests, but only if the variance correction discussed earlier is also employed.\footnote{Note that all test statistics for an unknown threshold we consider are non-pivotal, so one cannot expect any bootstrap to provide asymptotic refinements. While for the (trimmed) edges of the sample, the residuals computed with sub-sample estimators and their bootstrapped version are clearly inaccurately estimating the true underlying errors, because of slow convergence of the sub-sample estimator employed to construct them, this is not the case for the middle of the sample. Because both our tests and the CH test take the maximum over all candidate threshold values, around the (trimmed) sample edges or not, it is not possible to derive uniform asymptotic refinements of our tests over the CH test; these refinements will only hold for  candidate threshold values around the sample edges. We would like to thank a referee for raising this issue.}  

In this paper, we propose three test statistics for testing the null hypothesis of an unknown threshold in threshold models with endogenous regressors and exogenous regressors, and because both their computation and the bootstrap is different than CH, we derive for all three tests their asymptotic distribution and bootstrap validity. The first test we propose is similar to the CH test and uses sub-sample GMM estimators, but, unlike CH,  employs a different variance estimator and a null bootstrap. The other two tests are a likelihood ratio (LR) test and a Wald test, both based on 2SLS estimators. The 2SLS estimators are not conventional and therefore not a special case of the sub-sample GMM estimators used in the CH test, because they use additional information about the first stage being either linear or having itself a threshold, while the GMM estimators do not use this information by construction. Therefore, the resulting test statistics can be equally reliable to the test based on GMM estimators, as shown in our simulations. Because the 2SLS with a first stage threshold require consistent estimators of the first stage threshold parameter, as a by-product of our analysis, we also prove the consistency of ordinary least-squares threshold estimators with a fixed threshold, a result we could not find in the extant literature, only for very specific regression models.\footnote{ See Theorem \ref{theo:0} in the Online Supplement.}

\indent Our paper is closely related to several papers in the change-point literature. \cite{bch2019} study the same 2SLS-based test statistics as this paper but for change-points. They also prove bootstrap validity of their tests, however we employ different proof techniques in this paper because the threshold variable is typically correlated with regressors, while the change-points are not, and the asymptotic distributions will also be different. \cite{mm2014} use information about change-points in the first stage to improve the power of tests for moment conditions, while we use similar information to improve the size of our tests. \cite{antoineboldea2015b} and \cite{antoineboldea2018} also use a full sample first stage or change points in the first stage for more efficient estimation, while we focus on testing.\\
\indent It should be noted that we allow for endogenous regressors, but not for endogenous threshold variables. For the latter, see \textit{inter alia} \cite{kourtellosetal2013}, \cite{yu2018}, \cite{mcadam2019} and \cite{liao2019}. To account for regressor endogeneity, we use instruments for constructing parametric test statistics for thresholds. As a result, our tests have nontrivial local power for $O(T^{-1/2})$ threshold shifts, where $T$ is the sample size. This is in contrast to \cite{yu2018}, who do not use instruments, but rather local shifts around the threshold to construct a nonparametric threshold test.  As a result, their test covers more general functional forms, at the cost of losing power in $O(T^{-1/2})$ neighborhoods. Additionally, the later paper focuses on cross-sectional models, while our tests are applicable to both cross-sectional models and  time series models.\\
\indent In the empirical application, using the same data and model specification as in RZ, we revisit the question of whether the government spending multipliers are larger in recessions. As in RZ, we cannot rule out that the cumulative government spending multipliers are the same in recessions and expansions. However, we  estimate the threshold unemployment rate to be $8.3\%$, rather than $6.5\%$ as imposed in RZ. This new threshold  causes the military spending instrument constructed in RZ to become weaker for deep recessions, suggesting that this instrument is probably most informative at moderate unemployment rates somewhere between $6.5\%$ and $8.3\%$. \\
\indent The paper is organized as follows. Section 2 describes the model, the CH test and our test statistics,  the proposed bootstrap, as well as the assumptions and all the bootstrap validity results. Section 3 contains simulations and Section 4 contains the empirical application. Section 5 concludes. The Online Supplement, at the end of this document, contains all the proofs.
\section{Threshold Model, Test Statistics and Bootstrap Validity}\label{sec:TMBoot}
\subsection{Threshold Model and Test Statistics}\label{sec:TMTS}
Our framework is a linear model with a possible threshold at $\gamma^0$:
\begin{align} \nonumber
y_t & =\left(x_t^\top\thetaxo +z_{1t}^\top\thetazo\right)\indgamo+\left(x_t^\top\thetaxt+z_{1t}^\top\thetazt\right)\indinvgamo+\epsilon_t\\
& =w_t^\top\theta_1^0\, \indgamo+w_t^\top\theta_2^0\, \indinvgamo+\epsilon_t, \label{equ:TMTS.1}
\end{align}
where $y_t$ is the scalar dependent variable, $x_t$ is a $p_1\times1$ vector of endogenous variables, $z_{1t}$ a $p_2\times1$ vector of exogenous variables including the intercept and possibly lags of $y_t$, $q_t$ is the scalar exogenous threshold variable,  $\mathbf{1}[\cdot]$ is the indicator function, $w_t=(x_t^\top,z_{1t}^\top)^\top$ and $\theta_i^0=(\theta_{ix}^{0\top},\theta_{iz}^{0\top})^\top$. Let $\gamma^0\in \Gamma$, a strict subset of the support of $q_t$, and let $p=p_1+p_2$. The threshold variable is assumed exogenous and it can be a function of the exogenous regressors. As in CH, the first stage can be a linear model:      
\begin{equation}\label{equ:TMTS.2}
x_t= \Pi^{0\top}z_t+ u_t,
\end{equation}
or a threshold model:
\begin{equation}\label{equ:TMTS.3}
x_t=\Pi_1^{0\top}z_t \indrhoo +\Pi^{0\top}_2z_t \indinvrhoo+u_t,
\end{equation}
where $\rho^0 \in \Gamma$ is a threshold not necessarily coinciding with $\gamma^0$, and $z_t$ are $q \times 1$ strong and valid instruments, including $z_{1t}$, with $q-p_2 \geq p_1$. We assume that $\mathbb E[(\epsilon_t, u_t^\top)|\mathfrak F_{t}]=0$, where $\mathfrak F_t=\sigma\{z_{t-s},v_{t-s-1},q_{t-s}|s\geq0\}$, so that equation \eqref{equ:TMTS.1} can be estimated by either 2SLS or by GMM. \\
\indent We are interested in testing for an unknown threshold, i.e. the null hypothesis $\mathbb H_0: \theta_1^0=\theta_2^0 = \theta^0$. CH proposed a test based on GMM estimators of $\theta_i^0, (i=1,2)$ for each $\gamma \in \Gamma$. Because $z_t$ and $q_t$ are exogenous, the moment conditions 
\begin{equation}\label{equ:TMTS.4}
E[z_t \epsilon_t \indgam]=0, \qquad E[z_t \epsilon_t \indinvgam]=0
\end{equation} 
hold for all $\gamma \in \Gamma$. Based on these moment conditions, they construct the two-step GMM estimators:
\begin{equation}\label{equ:TMTS.5}
\hat \theta_{i \gamma,(2)}  =\Big(\hat N_{i\gamma} \hat H_{\epsilon,i\gamma}^{-1} \hat N_{i\gamma}^\top\Big)^{-1} \hat N_{i\gamma} \hat H_{\epsilon,i\gamma}^{-1} \Big( T^{-1} \sum_{i\gamma} z_t y_t \Big),
\end{equation} 
with $\sum_{1\gamma}(\cdot) = \sum_{t=1}^T (\cdot) \indgam$, $\sum_{2\gamma}(\cdot) = \sum_{t=1}^T (\cdot)\indinvgam$, $\hat N_{i\gamma} = T^{-1} \sum_{i\gamma} w_t z_t^\top$ and with $\hat H_{\epsilon,i\gamma} = T^{-1} \sum_{i\gamma} \hat \epsilon_{t,\gamma,(1)}^2 z_t z_t^\top$. Here, $\hat \epsilon_{t,\gamma,(1)} = y_t - w_t^\top \hat \theta_{1\gamma,(1)}\indgam - w_t^\top \hat \theta_{2\gamma,(1)}\indinvgam$ are the first step GMM residuals for each $\gamma$, and $\hat \theta_{i\gamma,(1)}$ are consistent first-step versions of $\hat \theta_{i \gamma,(2)}$, for example by replacing $\hat  H_{\epsilon,i\gamma}$ with $\hat M_{i\gamma} = T^{-1} \sum_{i\gamma} z_t z_t^\top$. These estimators can be used to construct a Wald test for each $\gamma$, and taking the supremum of this sequence of Wald tests over $\gamma \in \Gamma$ yields the test  in CH: 
\begin{equation}
WG_{T,CH}=\supl_{\gamma\in\Gamma}WG_{T,CH}(\gamma)=\supl_{\gamma\in\Gamma}T\left(\hat\theta_{1\gamma,(2)}-\hat\theta_{2\gamma,(2)}\right)^\top \hat V_{\gamma,(1)}^{-1}\left(\hat\theta_{1\gamma,(2)}-\hat\theta_{2\gamma,(2)}\right)\label{equ:TMTS.6},
\end{equation}
where $\hat V_{\gamma,(1)} = \sum_{i=1}^2 \Big(\hat N_{i\gamma} \hat H_{\epsilon,i\gamma}^{-1} \hat N_{i\gamma}^\top\Big)^{-1}$.\\  
\indent As shown in CH, the asymptotic distribution of the test statistic \eqref{equ:TMTS.6} is non-pivotal and therefore needs to be simulated/bootstrapped for a given application. CH propose to generate new pseudo-dependent variables $y^b_{t,\gamma}=\hat\epsilon_{t,\gamma,(2)}\eta_t$, where $\eta_t\stackrel{iid}\sim\mathcal N(0,1)$ and $\hat\epsilon_{t,\gamma,(2)}$ denote the second step GMM residuals for each value of $\gamma$, and recalculate the test statistic \eqref{equ:TMTS.6} for each $\gamma$ using $y^b_{t,\gamma}$ instead of $y_t$, and then for many bootstrap samples.\footnote{Note that the pseudo-dependent variables are generated without adding back the estimated mean to the bootstrap residuals. This is inconsequential to the analysis because the test statistic is based on mean differences across regimes of low or high $q_t$, and these are zero under the null of no threshold.} Even though CH prove validity of their bootstrap procedure in large samples, Tables \ref{tab:GMM.1} and \ref{tab:GMM.2} show that their bootstrap does not replicate well the empirical distribution of the test statistic in finite samples, being severely over-/undersized for small/large samples.\\
\indent Tables \ref{tab:GMM.1} and \ref{tab:GMM.2} in the simulation section show that these size distortions are due to two interacting phenomena: the type of bootstrap employed, and the way the heteroskedasticity-robust variance estimators are computed. We therefore employ two corrections. First, we adjust the bootstrap such that the pseudo-dependent variable $y_t^b$ is constructed using full-sample residuals. That is, we replace $y_t^b=\hat\epsilon_{t,\gamma,(2)}\eta_t$ by $y_t^b=\hat\epsilon_{t,(2)}\eta_t$ where $\hat\epsilon_{t,(2)}=y_t-w_t^\top\hat\theta_{(2)}$ and $\hat\theta_{(2)}$ is the second step GMM estimate under $\mathbb H_0$. This gets rid of the undersizing of the CH test statistic documented in the simulation section: the residuals become more accurate around the sample edges as they are not constructed with sub-sample estimators. 
 However, the simulations now indicate that the test is oversized (see Tables \ref{tab:GMM.1} and \ref{tab:GMM.2} , column ``Mix"). 

Therefore, we employ a second correction, where the heteroskedasticity-robust variance estimators are also computed with full-sample parameter estimates. More exactly, rather than using $\hat\epsilon_{t,\gamma,(1)}$ and $\hat\epsilon_{t,\gamma,(2)}$ in the expression for $\hat H_{\epsilon,i\gamma}$, we use $\hat\epsilon_{t,(1)}=y_t-w_t^\top\hat\theta_{(1)}$ instead, where $\hat\theta_{(1)}$ is the first step full-sample GMM estimate (so we redefine $\hat H_{\epsilon,i\gamma}=\sum_{i\gamma}
\hat\epsilon_{t,(1)}^2z_tz_t^\top$). As Tables 1 and 2 show (column ``BR", ``bootstrap/rectification''), this yields correctly sized sample test statistics in all samples considered. 

Note that both effects that we correct for are due to unstable estimates of the residuals at the sample edges below/above the 15\%/85\%-quantiles of the empirical distribution of $q_t$. The test employing these two corrections is denoted by $WG_{T,BR}=\supl_{\gamma\in\Gamma}WG_{T,BR}(\gamma)$.\\

\indent We also consider two 2SLS-based test statistics, because the GMM estimators involved in the computation of the tests above do not use information about the linearity or lack of linearity of the first stage. Therefore, they are not more efficient than the 2SLS estimators that use this information (see \cite{antoineboldea2015b} for a formal proof of this statement for change-point models), so there is no reason to expect that 2SLS-based tests will be inferior to the GMM-based tests.   

 The likelihood-ratio type and a Wald-type test statistic for $\mathbb H_0:\,\theta_1^0=\theta_2^0$ based on 2SLS estimators are:

\begin{subequations}
\begin{align}
LR_T & =  \sup_{\gamma\in\Gamma}\frac{SSR_0-SSR_1(\gamma)}{SSR_1(\gamma)/(T-2p)}\label{equ:TMTS.7a}\\
W_T & =  \sup_{\gamma\in\Gamma}T\Big(\hat\theta_{1\gamma}-\hat\theta_{2\gamma}\Big)^\top\hat V_{\gamma}^{-1}\Big(\hat\theta_{1\gamma}-\hat\theta_{2\gamma}\Big)\label{equ:TMTS.7b}
\end{align}
\end{subequations}
where $SSR_0=\sum_{t=1}^T(y_t-\hat w_t^\top\hat\theta)^2$, with $\hat\theta=(\sum_{t=1}^T\hat w_t\hat w_t^\top)^{-1}(\sum_{t=1}^T\hat w_ty_t)$ the full-sample 2SLS estimator, $SSR_1(\gamma)=\sum_{i=1}^2\sum_{i\gamma}(y_t-\hat w_t^\top\hat\theta_{i\gamma})^2$, with $\hat\theta_{i\gamma}=(\sum_{i\gamma}\hat w_t\hat w_t^\top)^{-1}(\sum_{i\gamma}\hat w_ty_t)$ the split-sample 2SLS estimators. Here, $\hat w_t=(\hat x_t^\top,z_{1t}^\top)^\top$ stacks the predicted endogenous variables $\hat x_t$  and the exogenous variables $z_{1t}$. The predicted endogenous variables are obtained either via estimating the linear first stage  equation \eqref{equ:TMTS.2}:
\begin{equation}
\hat\Pi = \Big(\sum_{t=1}^Tz_tz_t^\top\Big)^{-1}\Big(\sum_{t=1}^Tz_tx_t^\top\Big),\quad\hat x_t = \hat\Pi^\top z_t\label{equ:TMTS.8}
\end{equation}
or via estimating the threshold first-stage equation \eqref{equ:TMTS.3}:
\begin{subequations}
\begin{align}
\hat\Pi_{i\rho} & = \Big(\sum_{i\rho}z_tz_t^\top\Big)^{-1}\Big(\sum_{i\rho}z_tx_t^\top\Big)\,i=1,2, & \hat\Pi_{t\rho} & =\hat\Pi_{1\rho}\indrho+\hat\Pi_{2\rho}\indinvrho\label{equ:TMTS.9a}\\
\hat\rho & = \argmin_{\rho\in\Gamma}\left[\tr\sum_{t=1}^T(x_t-\hat\Pi^\top_{t\rho}z_t)(x_t-\hat\Pi^\top_{t\rho}z_t)^\top \right] & \hat\Pi_i & = \Big(\sum_{i\hat\rho}^Tz_tz_t^\top\Big)^{-1}\Big(\sum_{i\hat\rho}z_tx_t^\top\Big),\,i=1,2\label{equ:TMTS.9b}\\
\hat x_t& =\hat\Pi_1^\top z_t\mathbf{1}[q_t\leq\hat\rho]+\hat\Pi_2^\top z_t\mathbf{1}[q_t>\hat\rho] \label{equ:TMTS.9c}
\end{align}
\end{subequations}
Lastly, $\hat V_\gamma\inp V_\gamma=\lim\Var[T^{1/2}(\hat\theta_{1\gamma}-\hat\theta_{2\gamma})]$.\footnote{ The explicit expressions for $\hat V_\gamma$ and $V_\gamma$ are given in Online Supplement Section \ref{sec:ProofsLFSAsyms}, and Definition \ref{def:QuantsLFS} for a linear first stage, and in Online Supplement Section \ref{sec:ProofsTFSAsyms}, Definition \ref{def:QuantsTFS} for a threshold first stage, together with the expressions in the asymptotic distributions of the 2SLS test-statistics.} Unlike the sup Wald test in \cite{halletal2012}, which is the change-point counterpart of the test here, our test -- through the way $\hat V_\gamma$ is defined -- takes into account that the 2SLS estimators $\hat\theta_{1\gamma}$ and $\hat\theta_{2\gamma}$ are correlated through either a full-sample first-stage or through misalignment of $\rho^0$ and $\gamma$. Moreover, as in the case of CH's GMM-test, the 2SLS test-statistics are non-pivotal and, therefore, need to be simulated/bootstrapped. The next subsection describes the bootstrap we propose and contains results for asymptotic validity of this bootstrap for all three tests proposed.
\subsection{Bootstraps and their Validity}\label{sec:BootVal}
The bootstrap employed for both CH  GMM test and our GMM test is a wild bootstrap with fixed regressors because it does not bootstrap the regressors $w_t, x_t$ and the instruments $z_{1,t}$. We already alluded to the proposed change in the bootstrap procedure for the CH test in the previous section. These changes are summarized in the Algorithms \ref{alg:CHGMM} and \ref{alg:UsGMM} below. The difference between the CH test and our test are highlighted in lines 3--6 of the below algorithms. Since CH construct their pseudo-dependent variable for \textit{each} $\gamma$ separately, the for-loop over $\gamma$ starts already in line 3 of Algorithm \ref{alg:CHGMM}, as opposed to line 5 in Algorithm \ref{alg:UsGMM} when the \textit{same} pseudo-dependent variable is used for all values of $\gamma$. Line 6 in both algorithms indicates the difference in constructing the heteroskedasticity-robust variance estimators $\hat H_{\epsilon,i\gamma}^b$ and in $\hat H_{\epsilon,i\gamma}$.\\
\begin{minipage}[t]{0.48\textwidth}
\begin{algorithm}[H]
\centering
\caption{CH Bootstrap}\label{alg:CHGMM}
\footnotesize
\begin{algorithmic}[1]
\State \textbf{Input:} $\{\hat\epsilon_{t,\gamma,(2)}\}_{t,\gamma},\{w_t\}_t,\Gamma,B$
\For{$b\in\{1,..,B\}$}
\For{$\gamma\in\Gamma$}
\State $\eta_t\sim i.i.d.\, \cN(0,1)$
\State $y_{t,\gamma}^b\gets \hat\epsilon_{t,\gamma,(2)}\eta_t$
\State $WG_{T,CH}^b(\gamma)\gets$ use $\{(y_{t,\gamma}^b,w_t)\}_t$ to compute RHS in \eqref{equ:TMTS.6} using $\hat\epsilon_{t,\gamma,(1)}^b$ in $\hat H_{\epsilon,i\gamma}^b$
\EndFor
\State $WG_{T,CH}^b\gets\sup_{\gamma\in\Gamma}WG_{T,CH}^b(\gamma)$
\EndFor
\State \textbf{Return} $\{WG_{T,CH}^b\}_b$
\end{algorithmic}
\end{algorithm}
\end{minipage}
\hfil
\begin{minipage}[t]{0.48\textwidth}
\begin{algorithm}[H]
\centering
\caption{Modified CH Bootstrap}\label{alg:UsGMM}
\footnotesize
\begin{algorithmic}[1]
\State \textbf{Input:} $\{\hat\epsilon_{t,(2)}\}_{t},\{w_t\}_t,\Gamma,B$
\For{$b\in\{1,..,B\}$}
\State $\eta_t\sim i.i.d. \,(0,1)$
\State $y_{t}^b\gets \hat\epsilon_{t,(2)}\eta_t$
\For{$\gamma\in\Gamma$}
\State $WG_{T,BR}^b(\gamma)\gets$ use $\{(y_{t}^b,w_t)\}_t$ to compute RHS in \eqref{equ:TMTS.6} using $\hat\epsilon_{t,(1)}^b$ in $\hat H_{\epsilon,i\gamma}^b$
\EndFor
\State $WG_{T,BR}^b\gets\sup_{\gamma\in\Gamma}WG_{T,BR}^b(\gamma)$
\EndFor
\State \textbf{Return} $\{WG_{T,BR}^b\}_b$
\end{algorithmic}
\end{algorithm}
\end{minipage}\par\vspace*{1cm}
\indent Algorithm 3 below describes the wild fixed regressor bootstrap for our proposed 2SLS test-statistics, where regressors  $w_t$ and instruments $z_{1,t}$ are kept fixed in the bootstrap.  The first stage linearity or lack thereof is taken into account in computing $\hat x_t$ - equation \eqref{equ:TMTS.8} or \eqref{equ:TMTS.9c} respectively. For these tests, we need to know whether the first stage is linear or not; however, this is not necessarily a drawback in empirical work, because such knowledge is required for estimating the threshold parameter $\gamma^0$ consistently (see CH).\\
\begin{minipage}[t]{\textwidth}
\begin{algorithm}[H]
\centering
\caption{2SLS Bootstrap}\label{alg:Us2SLS}
\begin{algorithmic}[1]
\State\textbf{Input:} $\{y_t,x_t,z_t,\hat x_t\}_t,\,\hat\theta,\,\Gamma,\,B$
\State $w_t\gets(x_t^\top,z_{1t}^\top)^\top$
\State $\hat\epsilon_t\gets y_t-w_t^\top\hat\theta$
\State $\hat u_t\gets x_t-\hat x_t$
\For{$b\in\{1,...,B\}$}
\State $\eta_t\sim(0,1)$
\State $u_t^b\gets\hat u_t\eta_t$
\State $x_t^b\gets\hat x_t+u_t^b$
\State $w_t^b\gets(x_t^{b\top},z_{1,t}^\top)^\top$
\State $\epsilon_t^b\gets\hat \epsilon_t\eta_t$
\State $y_t^b\gets w_t^{b\top}\hat\theta+\epsilon_t^b$
\For{$\gamma\in\Gamma$}
\State $LR_T^b(\gamma)\gets$ use $(y_t^b,w_t^b,z_t)$ to compute RHS in \eqref{equ:TMTS.7a} using either \eqref{equ:TMTS.8} or \eqref{equ:TMTS.9a}--\eqref{equ:TMTS.9c}
\State $W_T^b(\gamma)\gets$ use $(y_t^b,w_t^b,z_t)$  \,\,\,to compute RHS in \eqref{equ:TMTS.7b} using either \eqref{equ:TMTS.8} or \eqref{equ:TMTS.9a}--\eqref{equ:TMTS.9c}
\EndFor
\State $LR_T^b\gets\sup_{\gamma\in\Gamma}LR_T^b(\gamma)$
\State $W_T^b\gets\sup_{\gamma\in\Gamma}W_T^b(\gamma)$
\EndFor
\State \textbf{Return:} $\{LR_T^b,W_T^b\}_b$
\end{algorithmic}
\end{algorithm}
\end{minipage}\par\vspace*{1cm}
We now derive the asymptotic properties of our tests\footnote{We focus on the GMM-based test; the asymptotic distributions of the 2SLS tests are given in the Online Supplement, Sections \ref{sec:ProofsLFSAsyms} and \ref{sec:ProofsTFSAsyms}.} and show their bootstrap validity.  First define $g_t = x_t - u_t$, $h_t = y_t-\epsilon_t$, $M_{1\gamma}=E[z_tz_t^\top\indgam]$, $M=\plim_{\gamma\to\infty}M_{1\gamma}=E[z_tz_t^\top ]$, $M_{2\gamma}=M-M_{1\gamma},$ and $v_t=(\epsilon_t,u_t^\top)^\top$. Let $\|\cdot \|$ be the Euclidean norm. The following assumptions are similar to CH.     
\begin{assumption}\label{assn:1}\hfill 
\begin{enumerate}[(a)]
\item $E[v_t|\mathfrak{F}_t]=0$ with $\mathfrak F_t=\sigma\{z_{t-s},v_{t-s-1},q_{t-s}|s\geq0\}$;
\item The series $(v_t,g_t, h_t,q_t, z_t)$ is strictly stationary with $\rho$-mixing coefficient $\rho(m)=\mathcal O(m^{-A})$ for some $A>\frac{a}{a-1}$ and $1<a\leq r$; 
\item $E\Big[\|z_t\|^{4r}\Big]  <\infty$, $E\Big[\|v_t\|^{4r}\Big]<\infty$ for some $r>1$;
\item $\infl_{\gamma\in\Gamma^0}\det M_{1\gamma}>0$, and if \eqref{equ:TMTS.3} holds, then $M_{1\gamma_1}-M_{1\gamma_2}$ has eigenvalues bounded away from zero for any $\gamma_1>\gamma_2$. Moreover, $M_{1\gamma}$ is continuous at $\gamma=\rho^0$;
\item The threshold variable $q_t$ has a continuous pdf $f(q_t)$ with $\supl_{q_t \in \Gamma} |f(q_t)|<\infty$;
\item $E[v_tv_t^\top]$ and $E [(v_t v_t^\top) \otimes (z_t z_t^\top) ]$ are two p.d. matrices of constants;     
\item The coefficient matrices $\Pi^0$  (for the linear first stage \eqref{equ:TMTS.2}) or  $\Pi_1^0, \Pi_2^0$ (for the threshold first stage \eqref{equ:TMTS.3}) are full rank, and $\Pi_1^0-\Pi_2^0 \neq 0$.
\end{enumerate}
\end{assumption}
Most of these assumptions are also used in CH. Assumption \ref{assn:1} $(a)$ is typically needed for nonlinear models.  Assumption \ref{assn:1} $(b)$ is also needed, as the only uniform law of large numbers and functional central limit theorem for partial sums in $\indgam$ that we are aware of derives from \cite{hansen1996} and require strict stationarity (see Lemma \ref{lem:1}-\ref{lem:2} in the Online Supplement). Assumption \ref{assn:1} $(c)$ is a typical moment condition. Assumption \ref{assn:1} $(d)$ is slightly different than CH: they also impose that $M_{1\gamma}$ is p.d. for all $\gamma$, but we require that the increments in $M_{1\gamma}$ are p.d. in the limit with eigenvalues bounded away from zero. The latter is technical in nature and required to obtain quantities bounded in probability in order to provide a self-contained proof of super-consistency of $\hat \rho$ in a threshold first stage model. Assumptions \ref{assn:1} $(e)$ is standard in the threshold literature, and \ref{assn:1} $(g)$ is an identification condition for a possible threshold in the first stage. Assumption \ref{assn:1} $(f)$ is needed for uniqueness of the asymptotic distributions of the test statistics proposed.\\
\indent With this assumption, we first show that employing the new heteroskedasticity-robust estimators do not alter the distribution of the CH test.
\begin{theorem}[Asymptotics Modified GMM Wald-Test]\label{theo:GMM.1}
Let $y_t$ be generated by \eqref{equ:TMTS.1}, $x_t$ be generated by either a linear first stage \eqref{equ:TMTS.2} or a threshold first stage \eqref{equ:TMTS.3}. Moreover, let the GMM test be computed as in \eqref{equ:TMTS.6} but with $\hat H_{\epsilon,i\gamma}$ constructed using $\hat\epsilon_{t,(1)}$. Then, under $\mathbb H_0$ and Assumption \ref{assn:1}, $WG_{T,BR}$ has the same limiting distribution as $WG_{T,CH}$.
\end{theorem}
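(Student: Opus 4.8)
The plan is to show that replacing $\hat H_{\epsilon,i\gamma}=T^{-1}\sum_{i\gamma}\hat\epsilon_{t,\gamma,(1)}^2 z_tz_t^\top$ by $\tilde H_{\epsilon,i\gamma}=T^{-1}\sum_{i\gamma}\hat\epsilon_{t,(1)}^2 z_tz_t^\top$ wherever it enters \eqref{equ:TMTS.5}--\eqref{equ:TMTS.6} perturbs the Wald sequence $WG_{T,CH}(\gamma)$ by a quantity that vanishes in probability \emph{uniformly} over $\gamma\in\Gamma$; then the supremum, and hence the limiting distribution, is unchanged. Denote by $\tilde\theta_{i\gamma,(2)}$, $\tilde V_{\gamma,(1)}$ and $WG_{T,BR}(\gamma)$ the second-step GMM estimators, the variance estimator and the Wald sequence computed with $\tilde H_{\epsilon,i\gamma}$ in place of $\hat H_{\epsilon,i\gamma}$.

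First I would record that, under $\mathbb H_0$, $y_t=w_t^\top\theta^0+\epsilon_t$, so the first-step estimators are all consistent for the common $\theta^0$, uniformly in $\gamma$: the full-sample $\hat\theta_{(1)}$ satisfies $\|\hat\theta_{(1)}-\theta^0\|\inp 0$, and $\supl_{\gamma\in\Gamma}\|\hat\theta_{i\gamma,(1)}-\theta^0\|\inp 0$, by the uniform law of large numbers for sums indexed by $\indgam$ (Lemma \ref{lem:1}) together with Assumption \ref{assn:1}(b)--(d),(g), which make $T^{-1}\sum_{i\gamma}w_tz_t^\top$ and $T^{-1}\sum_{i\gamma}z_tz_t^\top$ converge uniformly on $\Gamma$ to limits of full rank there. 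Next, writing $\hat\epsilon_{t,(1)}=\epsilon_t-w_t^\top(\hat\theta_{(1)}-\theta^0)$ and $\hat\epsilon_{t,\gamma,(1)}=\epsilon_t-w_t^\top(\hat\theta_{1\gamma,(1)}-\theta^0)\indgam-w_t^\top(\hat\theta_{2\gamma,(1)}-\theta^0)\indinvgam$ and expanding the squares, $\tilde H_{\epsilon,i\gamma}-\hat H_{\epsilon,i\gamma}$ becomes a finite sum of averages of the form $T^{-1}\sum_{i\gamma}(z_tz_t^\top)(\epsilon_t w_t^\top)$ and $T^{-1}\sum_{i\gamma}(z_tz_t^\top)(w_tw_t^\top)$ multiplied by estimation errors. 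The $4r$-th moment bounds in Assumption \ref{assn:1}(c) and Lemma \ref{lem:1} make those averages $O_p(1)$ uniformly in $\gamma$, so with the previous step $\supl_{\gamma\in\Gamma}\|\tilde H_{\epsilon,i\gamma}-\hat H_{\epsilon,i\gamma}\|\inp 0$; the same computation with $\epsilon_t^2$ replacing $\hat\epsilon_{t,(1)}^2$ shows that both $\tilde H_{\epsilon,i\gamma}$ and $\hat H_{\epsilon,i\gamma}$ converge uniformly to $H_{\epsilon,i\gamma}=E[\epsilon_t^2 z_tz_t^\top\indgam]$ (respectively with $\indinvgam$), which is uniformly positive definite on $\Gamma$ by Assumptions \ref{assn:1}(d),(f).

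The rest is propagation. Matrix inversion being Lipschitz on sets of uniformly positive-definite matrices, $\supl_{\gamma\in\Gamma}\|\tilde V_{\gamma,(1)}^{-1}-\hat V_{\gamma,(1)}^{-1}\|\inp 0$ with common limit $V_\gamma^{-1}$. Under $\mathbb H_0$ one has the exact identity
\[
T^{1/2}\big(\hat\theta_{i\gamma,(2)}-\theta^0\big)=\big(\hat N_{i\gamma}\hat H_{\epsilon,i\gamma}^{-1}\hat N_{i\gamma}^\top\big)^{-1}\hat N_{i\gamma}\hat H_{\epsilon,i\gamma}^{-1}\Big(T^{-1/2}\sum_{i\gamma}z_t\epsilon_t\Big),
\]
and its analogue for $\tilde\theta_{i\gamma,(2)}$ with $\tilde H_{\epsilon,i\gamma}$; since $\supl_{\gamma\in\Gamma}\|T^{-1/2}\sum_{i\gamma}z_t\epsilon_t\|=O_p(1)$ by the functional central limit theorem for partial sums (Lemma \ref{lem:2}), $\hat N_{i\gamma}$ converges uniformly, and $\tilde H_{\epsilon,i\gamma}-\hat H_{\epsilon,i\gamma}\inp 0$ uniformly with a common uniformly p.d.\ limit, the two right-hand sides differ by $o_p(1)$ uniformly in $\gamma$. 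Hence $a_\gamma:=T^{1/2}(\hat\theta_{1\gamma,(2)}-\hat\theta_{2\gamma,(2)})$ and $b_\gamma:=T^{1/2}(\tilde\theta_{1\gamma,(2)}-\tilde\theta_{2\gamma,(2)})$ are both $O_p(1)$ uniformly and $\supl_{\gamma\in\Gamma}\|b_\gamma-a_\gamma\|\inp 0$. Using
\[
b_\gamma^\top\tilde V_{\gamma,(1)}^{-1}b_\gamma-a_\gamma^\top\hat V_{\gamma,(1)}^{-1}a_\gamma=(b_\gamma-a_\gamma)^\top\tilde V_{\gamma,(1)}^{-1}b_\gamma+a_\gamma^\top\tilde V_{\gamma,(1)}^{-1}(b_\gamma-a_\gamma)+a_\gamma^\top\big(\tilde V_{\gamma,(1)}^{-1}-\hat V_{\gamma,(1)}^{-1}\big)a_\gamma,
\]
the bounds above give $\supl_{\gamma\in\Gamma}|WG_{T,BR}(\gamma)-WG_{T,CH}(\gamma)|\inp 0$, so $|WG_{T,BR}-WG_{T,CH}|\le\supl_{\gamma\in\Gamma}|WG_{T,BR}(\gamma)-WG_{T,CH}(\gamma)|\inp 0$. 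Since $WG_{T,CH}$ converges in distribution to the limit derived in CH, $WG_{T,BR}$ converges to the same limit.

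The hard part will be the uniformity in $\gamma$ in the first two steps — the uniform consistency of the sub-sample first-step estimators and the uniform control of the cross-product averages, including near the edges of $\Gamma$ — which is exactly where Assumption \ref{assn:1}(b)--(d) and the $4r$-th moment bounds are needed so that the \cite{hansen1996}-type uniform law of large numbers and functional central limit theorem for sums indexed by $\indgam$ apply on the (trimmed) set $\Gamma$. A secondary bookkeeping point is that changing the weighting matrix also changes the second-step GMM estimator itself; this perturbation is asymptotically negligible precisely because $\hat H_{\epsilon,i\gamma}$ and $\tilde H_{\epsilon,i\gamma}$ share the same probability limit, which is established in the second step.
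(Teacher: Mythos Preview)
Your argument is correct, and it reaches the conclusion by a slightly different route than the paper. The paper simply derives the asymptotic distribution of $WG_{T,BR}$ directly: it shows $\hat N_{i\gamma}\inp N_{i\gamma}$, shows that the \emph{modified} $\hat H_{\epsilon,i\gamma}$ built from the full-sample residuals $\hat\epsilon_{t,(1)}$ converges uniformly to $H_{\epsilon,i\gamma}$, combines these with $T^{-1/2}\sum_{i\gamma}z_t\epsilon_t\Rightarrow\mathcal G_{\epsilon,i\gamma}$ to get the weak limit of $T^{1/2}(\hat\theta_{i\gamma,(2)}-\theta^0)$, and then observes that this coincides with CH's limit. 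By contrast, you couple the two test sequences and show $\sup_{\gamma\in\Gamma}|WG_{T,BR}(\gamma)-WG_{T,CH}(\gamma)|\inp 0$ directly, which is a stronger statement (asymptotic equivalence, not just equality of limit laws). The price you pay is that you must also control CH's version of $\hat H_{\epsilon,i\gamma}$, and for that you need the \emph{uniform} consistency of the sub-sample first-step estimators $\hat\theta_{i\gamma,(1)}$ over $\gamma\in\Gamma$; the paper avoids this by only touching the full-sample $\hat\theta_{(1)}$. Your sketch handles this correctly via Assumption \ref{assn:1}(d) (uniform positive definiteness of $M_{i\gamma}$ on the trimmed set $\Gamma$) together with the ULLN, so the extra step is routine. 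In short: the paper's proof is shorter because it outsources the CH side to CH, while your argument is self-contained and yields a sharper conclusion.
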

To show the validity of the null bootstrap, we require the following additional assumption: 
\begin{assumption}\label{assn:2}\hfill
\begin{enumerate}[(a)]
\item $\eta_t \stackrel{iid}{\sim} (0,1)$ with $E^b[\eta_t^4]<\infty$, where $E^b[\cdot]$ is the expectation with respect to the bootstrap probability measure; 
\item $E\Big[\|z_t\|^{8r}\Big]<\infty$ and $E\Big[\|v_t\|^{8r}\Big]<\infty$ for some $r>1$.
\end{enumerate}
\end{assumption}
Assumption \ref{assn:2} $(a)$ is common for the wild bootstrap \citep[also see][]{bch2019}, and typical choices for $\eta_t$ are the normal distribution, the Rademacher distribution, and the asymmetric two-point distribution in \cite{mammen1993}. CH propose using the normal distribution, but we use both the normal distribution and the \cite{mammen1993} distribution, as the latter yields better results for the GMM-Wald test, see Tables \ref{tab:GMM.1}--\ref{tab:2SLS.2}. Assumption \ref{assn:2} $(b)$ is only needed for $\hat H_{\epsilon,i\gamma}^b$ to weakly converge to $H_{\epsilon,i\gamma}^b$ in probability under the bootstrap measure. Theorem \ref{theo:GMM.2} proves the asymptotic validity of the null bootstrap for $WG_{T,BR}$.\begin{theorem}[Bootstrap Validity Modified GMM Wald-Test]\label{theo:GMM.2}
Let $y_t$ be generated by \eqref{equ:TMTS.1}, $x_t$ be generated by either a linear first stage \eqref{equ:TMTS.2} or a threshold first stage \eqref{equ:TMTS.3}. Moreover, let the GMM Wald-test be computed as in \eqref{equ:TMTS.6} but with $\hat H_{\epsilon,i\gamma}$ constructed using $\hat\epsilon_{t,(1)}$ and using the bootstrap in Algorithm \ref{alg:UsGMM}. Then, under $\mathbb H_0$ and Assumptions \ref{assn:1} and \ref{assn:2},
\[
\sup_{c\in\mathbb R}|P^b(WG_{T,BR}^b\leq c)-P(WG_{T,BR}\leq c)|\xrightarrow{p}0.
\]
\end{theorem}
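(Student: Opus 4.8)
The plan is to show that, conditionally on the data and in probability under the bootstrap measure $P^b$, the statistic $WG_{T,BR}^b$ converges weakly to the same non-pivotal limit $\mathcal W_\infty$ that CH obtain for $WG_{T,CH}$ — a supremum over $\gamma\in\Gamma$ of a quadratic form in a mean-zero Gaussian process $\mathbb G(\cdot)$ on $\Gamma$, whose covariance kernel is built from $H_{\epsilon,i,\gamma_1\wedge\gamma_2}=E[\epsilon_t^2 z_tz_t^\top\,\mathbf 1[q_t\le\gamma_1\wedge\gamma_2]]$, $N_{i\gamma}=E[w_tz_t^\top\indgam]$ and $M_{i\gamma}$. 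Since the paths of $\mathbb G$ are a.s.\ continuous and its kernel is non-degenerate on $\Gamma$ by Assumption \ref{assn:1}(d),(f), the distribution function of $\mathcal W_\infty$ is continuous; hence conditional weak convergence in $P^b$-probability upgrades, by a standard subsequence argument together with P\'olya's theorem, to $\sup_{c\in\mathbb R}|P^b(WG_{T,BR}^b\le c)-P(\mathcal W_\infty\le c)|\xrightarrow{p}0$, and because $WG_{T,BR}\Rightarrow\mathcal W_\infty$ by Theorem \ref{theo:GMM.1} (and continuity again), the triangle inequality delivers the stated conclusion. Thus everything reduces to the conditional weak convergence of $WG_{T,BR}^b$.

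First I would decompose the bootstrap score. Under $\mathbb H_0$, $\hat\epsilon_{t,(2)}=\epsilon_t-w_t^\top(\hat\theta_{(2)}-\theta^0)$ with $\hat\theta_{(2)}-\theta^0=O_p(T^{-1/2})$, so with $y_t^b=\hat\epsilon_{t,(2)}\eta_t$,
\[
T^{-1/2}\sum_{i\gamma} z_t y_t^b \;=\; T^{-1/2}\sum_{i\gamma} z_t\epsilon_t\eta_t \;-\; \Big(T^{-1/2}\sum_{i\gamma} z_t w_t^\top\eta_t\Big)\big(\hat\theta_{(2)}-\theta^0\big),
\]
and the second term is $o_p(1)$ uniformly in $\gamma$ in $P^b$-probability, since $T^{-1/2}\sum_{i\gamma}z_tw_t^\top\eta_t$ is $O_p(1)$ uniformly under $P^b$ (bounded conditional variance by Assumption \ref{assn:1}(c)) while $\hat\theta_{(2)}-\theta^0\to0$. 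Hence the bootstrap score is asymptotically $T^{-1/2}\sum_{i\gamma}z_t\epsilon_t\eta_t$. The reason the \emph{null} wild bootstrap reproduces the right limit is Assumption \ref{assn:1}(a): $z_t\epsilon_t$ is a martingale difference sequence, so the original process $T^{-1/2}\sum_{i\gamma}z_t\epsilon_t$ carries no long-run-variance term and its kernel is exactly $H_{\epsilon,i,\gamma_1\wedge\gamma_2}$, which the wild bootstrap matches because its conditional covariance kernel, $T^{-1}\sum_{t:\,q_t\le\gamma_1\wedge\gamma_2}\epsilon_t^2 z_tz_t^\top$, converges in probability to $H_{\epsilon,i,\gamma_1\wedge\gamma_2}$ (using $E^b\eta_t^2=1$).

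Next I would establish a bootstrap functional CLT: conditionally and in $P^b$-probability, $T^{-1/2}\sum_{i\gamma}z_t\epsilon_t\eta_t\Rightarrow\mathbb G_i^*(\gamma)$ in $D_{\mathbb R^q}(\Gamma)$, Gaussian with the kernel above. Finite-dimensional convergence is a Lindeberg CLT for the conditionally independent array $\{z_t\hat\epsilon_{t,(2)}\eta_t\}$, the Lindeberg condition following from Assumptions \ref{assn:1}(c) and \ref{assn:2}; stochastic equicontinuity in $\gamma$ I would obtain by transferring to the wild-bootstrap sums the maximal inequality for indicator-indexed partial sums (in $\gamma$) that underlies Lemmas \ref{lem:1}--\ref{lem:2} of the Online Supplement (cf.\ \cite{hansen1996}), exploiting monotonicity of $\gamma\mapsto\mathbf 1[q_t\le\gamma]$ and the higher moments of Assumption \ref{assn:2}. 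In parallel: $\hat N_{i\gamma}\xrightarrow{p}N_{i\gamma}$ uniformly (a ULLN, no bootstrap involved); $\hat H_{\epsilon,i\gamma}^b\xrightarrow{p}H_{\epsilon,i\gamma}$ uniformly in $P^b$-probability, obtained by writing $\hat\epsilon_{t,(1)}^b=\hat\epsilon_{t,(2)}\eta_t-w_t^\top\hat\theta_{(1)}^b$ with $\hat\theta_{(1)}^b=O_p(T^{-1/2})$ under $P^b$, so that $\hat H_{\epsilon,i\gamma}^b=T^{-1}\sum_{i\gamma}\hat\epsilon_{t,(2)}^2\eta_t^2 z_tz_t^\top+o_p(1)$ under $P^b$, followed by a bootstrap LLN using $E^b\eta_t^2=1$, $E^b\eta_t^4<\infty$ and the eighth-moment bounds of Assumption \ref{assn:2}(b) to replace $\hat\epsilon_{t,(2)}^2$ by $\epsilon_t^2$; and $\hat H_{\epsilon,i\gamma}\xrightarrow{p}H_{\epsilon,i\gamma}$ uniformly as a by-product of Theorem \ref{theo:GMM.1}. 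Substituting into the closed form $T^{1/2}\hat\theta_{i\gamma,(2)}^b=\big(\hat N_{i\gamma}(\hat H_{\epsilon,i\gamma}^b)^{-1}\hat N_{i\gamma}^\top\big)^{-1}\hat N_{i\gamma}(\hat H_{\epsilon,i\gamma}^b)^{-1}\big(T^{-1/2}\sum_{i\gamma}z_t y_t^b\big)$ and applying the continuous mapping theorem — invertibility uniform in $\gamma$ by Assumption \ref{assn:1}(d),(f) — gives $T^{1/2}(\hat\theta_{1\gamma,(2)}^b-\hat\theta_{2\gamma,(2)}^b)\Rightarrow\mathbb G(\gamma)$ and $\hat V_{\gamma,(1)}^b\xrightarrow{p}$ its limiting variance, so $WG_{T,BR}^b(\gamma)$ converges as a process on $\Gamma$ to the quadratic form in $\mathbb G(\gamma)$; one final continuous-mapping step for $\sup_{\gamma\in\Gamma}$ yields $WG_{T,BR}^b\Rightarrow\mathcal W_\infty$ conditionally in probability, completing the reduction.

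I expect the main obstacle to be the bootstrap functional CLT, and within it the stochastic equicontinuity of the $\gamma$-indexed wild-bootstrap partial-sum process in $P^b$-probability — i.e.\ transferring the Hansen (1996)-type maximal inequality to the conditional law uniformly over $\gamma\in\Gamma$ under only $\rho$-mixing and the polynomial moment bounds of Assumption \ref{assn:2}. The remaining steps — the score decomposition, consistency of $\hat H_{\epsilon,i\gamma}^b$, and the final continuous-mapping arguments — are comparatively routine given Theorem \ref{theo:GMM.1} and the uniform laws already established in the Online Supplement.
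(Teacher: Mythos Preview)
Your proposal is correct and follows essentially the same route as the paper: the paper packages your score decomposition, bootstrap FCLT, and consistency of $\hat H_{\epsilon,i\gamma}^b$ into Lemma~A.3(iii)--(iv) and Lemma~A.4(i)--(iii), then assembles them via Slutsky and the continuous mapping theorem exactly as you describe. Two minor remarks: (i) in the score decomposition the paper writes the remainder as $\big(T^{-1}\sum_{i\gamma}z_tw_t^\top\eta_t\big)\,T^{1/2}(\hat\theta_{(2)}-\theta^0)$ and kills the first factor by a direct Chebyshev bound (Lemma~A.3(iii)), whereas you instead bound $T^{-1/2}\sum_{i\gamma}z_tw_t^\top\eta_t=O_{P^b}(1)$ --- both are fine, but the paper's ordering avoids appealing to uniform tightness before the FCLT is established; (ii) you are right that the crux is the bootstrap functional CLT, and you are in fact more explicit than the paper about the stochastic-equicontinuity step --- the paper simply invokes its Lemma~A.2 (the Hansen (1996) FCLT) for the wild-bootstrap array in Lemma~A.3(iv), together with $\operatorname{Var}^b(T^{-1/2}\sum_{1\gamma}v_t\eta_t\otimes z_t)\to H_{1\gamma}$, and does not spell out the P\'olya step you add at the end.
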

For the 2SLS-based test-statistics, the  asymptotic distributions are cumbersome and not of main interest. Therefore,  we relegate these results to the Online Supplement, Sections \ref{sec:ProofsLFSAsyms} and \ref{sec:ProofsTFSAsyms}. However, in order to derive these asymptotic distributions, we also provide in the Online Supplement, Theorem 5, a self-contained proof of super-consistency of the  first stage (ordinary least-squares) threshold parameter estimate $\hat\rho$. This was also shown in \cite{chan1993}, but for a threshold autoregressive model where $z_t$ and $q_t$ are lags of $x_t$. This proof may be of interest in its own right, as it extends proof techniques from change point analysis to threshold models. 

The asymptotic distributions of the 2SLS based tests are also non-pivotal, and we conclude this section by stating the asymptotic validity of the bootstrap for these tests.
\begin{theorem} [Bootstrap Validity -- 2SLS Tests]\label{thm:1}
Let $y_{t}$ be generated by \eqref{equ:TMTS.1} and $x_t$ be generated by the linear first stage \eqref{equ:TMTS.2} or by the threshold first stage \eqref{equ:TMTS.3}. Moreover, let the 2SLS-based test-statistics be computed as described in \eqref{equ:TMTS.7a} and \eqref{equ:TMTS.7b}, and using the bootstrap in Algorithm \ref{alg:Us2SLS}. Then, under $\mathbb H_0$ and Assumption \ref{assn:1}, 
\begin{itemize}
\item[$(i)$] $\sup_{c\in\mathbb{R}}\left| P^{b} \left(LR_T^b\leq c)- P(LR_T\leq c\right)\right|\inp 0$ if Assumption \ref{assn:2} $(a)$ holds in addition;
\item[$(ii)$] $\sup_{c\in\mathbb{R}}\left| P^{b} \left(W_T^b\leq c)-P(W_T\leq c\right)\right|\inp 0$ if Assumption \ref{assn:2} holds in addition.
\end{itemize}
\end{theorem}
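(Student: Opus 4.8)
The plan is to reduce both claims to conditional weak convergence of the bootstrap statistics to the \emph{same} limiting laws as $LR_T$ and $W_T$, and then to invoke P\'olya's theorem. By the asymptotic results in the Online Supplement (Sections \ref{sec:ProofsLFSAsyms} and \ref{sec:ProofsTFSAsyms}), under $\mathbb H_0$ and Assumption \ref{assn:1} one has $LR_T \xrightarrow{d} \mathcal L$ and $W_T \xrightarrow{d} \mathcal W$, where $\mathcal L$ and $\mathcal W$ are suprema over $\gamma\in\Gamma$ of functionals of a mean-zero Gaussian process with a.s.\ continuous paths, and Assumption \ref{assn:1}$(f)$ guarantees that the distribution functions of $\mathcal L$ and $\mathcal W$ are continuous on $\mathbb R$. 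It therefore suffices to show that, conditionally on the sample and in $P^b$-probability, $LR_T^b \xrightarrow{d}\mathcal L$ (for part $(i)$) and $W_T^b \xrightarrow{d}\mathcal W$ (for part $(ii)$); since a sequence of random distribution functions that converges pointwise in probability to a continuous limit also converges uniformly in probability, the displayed Kolmogorov-distance statements follow.

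Because Algorithm \ref{alg:Us2SLS} is a fixed-regressor bootstrap --- it holds $z_t$, $z_{1t}$, $q_t$ and $\hat x_t$ fixed and injects randomness only through the i.i.d.\ multipliers $\eta_t$ --- the indicators $\indgam$ (and $\mathbf 1[q_t\le\rho]$ in the threshold first stage) coincide with those of the original sample, so the empirical-process-in-$\gamma$ structure is inherited directly and only the ``noise'' part of the statistics needs to be matched. I would assemble three ingredients. \emph{First}, a bootstrap uniform law of large numbers: $T^{-1}\sum_{i\gamma}\hat w_t^b\hat w_t^{b\top}$, $T^{-1}\sum_{i\gamma}\hat w_t^b z_t^\top$ and the relevant first-stage sums converge in $P^b$-probability, uniformly in $\gamma$, to the same limits as their sample counterparts; here one uses $\hat x_t^b=\hat x_t+(\hat\Pi^b-\hat\Pi)^\top z_t$ with $\hat\Pi^b-\hat\Pi=(\sum_s z_s z_s^\top)^{-1}\sum_s z_s\hat u_s^\top\eta_s=O_p^b(T^{-1/2})$ (and its threshold analogue evaluated at $\hat\rho$), the consistency of $\hat x_t$, and $E^b[\eta_t]=0$, $E^b[\eta_t^2]=1$. \emph{Second}, in the threshold-first-stage case, bootstrap super-consistency of $\hat\rho^b$, namely $T(\hat\rho^b-\rho^0)=O_p^b(1)$, so that the first-stage threshold estimation error does not enter the $T^{1/2}$-limit; this follows by transcribing the self-contained super-consistency argument for $\hat\rho$ in the Online Supplement (Theorem~\ref{theo:0}, in the style of \citet{chan1993}) to the bootstrap data-generating process $x_t^b=\hat x_t+\hat u_t\eta_t$, using Assumption \ref{assn:1}$(d)$,$(g)$ and the moment bounds on $(z_t,u_t,\eta_t)$. \emph{Third}, a conditional functional central limit theorem, in the style of \citet{hansen1996}, for the score process
\[
\mathcal S_T^b(\gamma)=T^{-1/2}\sum_{t=1}^T \hat w_t\big(\hat u_t^\top\hat\theta_x+\hat\epsilon_t\big)\eta_t\,\indgam ,
\]
together with the generated-regressor correction stemming from $\hat x_t^b-\hat x_t$, showing that $\mathcal S_T^b(\cdot)$ converges weakly in $P^b$-probability to the Gaussian process that drives $\mathcal L$ and $\mathcal W$. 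The covariance kernel matches because the single multiplier $\eta_t$ premultiplies both $\hat\epsilon_t$ and $\hat u_t$: conditionally, $E^b[\epsilon_t^b\epsilon_s^b]=\hat\epsilon_t^2\mathbf 1[t=s]$ and $E^b[\epsilon_t^b u_t^{b\top}]=\hat\epsilon_t\hat u_t^\top$ reproduce, respectively, the conditional heteroskedasticity and the regressor endogeneity present in the sample asymptotics, while $\hat\epsilon_t\to\epsilon_t$, $\hat u_t\to u_t$ and $\hat\theta_x\xrightarrow{p}\theta_x^0$ in the relevant sample-moment sense; the same device also generates the correlation between $\hat\theta_{1\gamma}^b$ and $\hat\theta_{2\gamma}^b$ induced by the common (full-sample or threshold) first stage.

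Combining these ingredients yields, exactly as in the derivation of the sample limits, the uniform-in-$\gamma$ asymptotic linear representation $T^{1/2}(\hat\theta_{1\gamma}^b-\hat\theta_{2\gamma}^b)=A_\gamma^{-1}\{\text{partial sums of }\mathcal S_T^b\}+o_p^b(1)$ (with $A_\gamma$ the same deterministic invertible matrix as in the sample expansion), together with the corresponding expansions of $SSR_0^b-SSR_1^b(\gamma)$ and of $SSR_1^b(\gamma)/(T-2p)$; taking the supremum over $\gamma$ and applying the continuous mapping theorem delivers $LR_T^b\xrightarrow{d}\mathcal L$ and $W_T^b\xrightarrow{d}\mathcal W$ conditionally in $P^b$-probability, and P\'olya's theorem then closes the argument. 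Part $(i)$ requires only Assumption \ref{assn:2}$(a)$: the $LR$ statistic normalizes by the homoskedastic denominator $SSR_1(\gamma)/(T-2p)$, whose bootstrap analogue converges under the moments already in Assumption \ref{assn:1} together with $E^b[\eta_t^2]=1$. Part $(ii)$ additionally invokes Assumption \ref{assn:2}$(b)$, so that the bootstrap heteroskedasticity-robust variance estimator $\hat V_\gamma^b$ converges in $P^b$-probability to $V_\gamma$ uniformly in $\gamma$, just as Assumption \ref{assn:2}$(b)$ is used for $\hat H_{\epsilon,i\gamma}^b$ in Theorem \ref{theo:GMM.2}.

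The main obstacle is the third ingredient coupled with the generated-regressor terms: proving the conditional FCLT uniformly over $\gamma\in\Gamma$ for a triangular array whose summands contain both the estimated residuals and the first-stage projection error $\hat x_t^b-\hat x_t$, and --- in the threshold-first-stage case --- establishing that the bootstrap first-stage threshold error is genuinely negligible at the $T^{1/2}$ scale, so that $\hat x_t^b$ may be replaced by the infeasible regressor built with $\rho^0$ without affecting the limit. Verifying that the resulting bootstrap covariance kernel coincides \emph{identically} with the one in $V_\gamma$ --- not merely up to a heteroskedasticity- or endogeneity-driven discrepancy --- is where the precise structure of Algorithm \ref{alg:Us2SLS}, namely one common multiplier $\eta_t$ for $\hat\epsilon_t$ and $\hat u_t$ and $q_t$ and the regressors held fixed, is essential.
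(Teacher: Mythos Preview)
Your proposal is correct and follows essentially the same route as the paper: the paper's proof is built on Lemma \ref{lem:5}, which packages exactly your three ingredients (bootstrap ULLN for $T^{-1}\sum_{i\gamma}\hat w_t^b\hat w_t^{b\top}$, the first-stage relation $T^{1/2}(\hat\Pi^b-\hat\Pi)=T^{1/2}(\hat\Pi-\Pi^0)+o_p^b(1)$, and the conditional FCLT yielding $T^{-1/2}\sum_{i\gamma}\hat w_t^b\tilde\epsilon_t^b\indb\mathcal B_{i\gamma}$), and then proceeds via the same $SSR$-expansion, $\hat V_\gamma^b\inpb V_\gamma$, and continuous mapping/sup arguments you outline. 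If anything, you are more explicit than the paper on two points it leaves implicit: the appeal to P\'olya's theorem to pass from conditional weak convergence to the Kolmogorov-distance statement, and the need for bootstrap super-consistency of $\hat\rho^b$ in the TFS case, which the paper handles only by the remark that ``there are no essential differences'' from the LFS proof.
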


\section{Simulations}\label{sec:Simus}
Consider the following data generating process (DGP) for $t=1,\ldots, T$:
\begin{align*}
y_t & =1 + x_t+ \delta_x\, x_t\indinvgamo+ \epsilon_t, & x_t & =1+z_t+\delta_\Pi\, z_t\indinvrhoo+u_t,
\end{align*}
where $z_t\stackrel{iid}{\sim}\mathcal N(1,1)$, $q_t=z_t+1$, and $z_t,\,x_t$, and $q_t$ are scalars. We set $\delta_\Pi=0$ for a linear first stage (LFS) and $\delta_\Pi\in\{-0.5,0.5,1\}$ for a threshold first stage (TFS) with $\rho^0=1.75$. Under the null hypothesis, $\delta_x=0$, and under the alternative hypothesis, $\delta_x =0.25$ with $\gamma^0=2.25$.\footnote{Note that because of just-identification, there is no difference between the first and the second-step GMM estimators, therefore $\hat \theta_{(2)} =\hat \theta_{(1)} = (\sum_{t=1}^T w_t z_t^\top )^{-1} ( T^{-1} \suml_{t=1}^T z_t y_t )$, and $\hat \theta_{i\gamma,(2)} = \hat \theta_{i\gamma,(1)} = (\hat N_{i\gamma}^\top)^{-1}  ( T^{-1} \suml_{i\gamma} z_t y_t )$.} To generate $\epsilon_t$, we define $e_t\stackrel{iid}\sim\cN(0,1)$ and consider the following three cases.\\
\indent In case (a), the errors are homoskedastic, i.e. $\epsilon_t=e_t$, and the econometrician knows this. Therefore, we use the i.i.d. bootstrap instead of the wild bootstrap, and make two adjustments to the computation of the test statistics. First, $v_t^{b}\stackrel{iid}\sim\mathcal N(0,\hat{\Sigma}_v)$ with $\hat\Sigma_v=T^{-1}\sum_{t=1}^T\hat v_t\hat v_t^\top$ for 2SLS. For GMM, $\epsilon_t^b \stackrel{iid}\sim\mathcal N(0,\hat\sigma^2_\epsilon)$ given the data, with $\hat\sigma^2_\epsilon=T^{-1}\sum_{t=1}^T\hat\epsilon_{t,(1)}^2$, and $\hat \epsilon_{t,(1)} = y_t- w_t^\top \hat \theta$, with $\hat \theta$ the full-sample 2SLS estimator. Second, all heteroskedasticity-robust estimators are replaced by their homoskedastic analogs. For example, $E[z_tz_t^\top\epsilon_t^2\indgam]$ is no longer estimated by $T^{-1}\sum_{1\gamma} z_tz_t^\top\hat\epsilon_t^2\indgam$, but by $\hat\sigma_\epsilon^2(T^{-1}\sum_{1\gamma}z_tz_t^\top\indgam)$. The same applies to the case of CH's original bootstrap, except that the residuals are computed for each value of $\gamma$ rather than under $\mathbb H_0$.\\ 
\indent In case (b), the errors are still homoskedastic, i.e. $\epsilon_t=e_t$, but this is not known to the econometrician. Therefore, the heteroskedasticity-robust variance estimators described in Sections \ref{sec:TMTS} and \ref{sec:BootVal} are employed. \\
\indent In case (c), the errors are conditional heteroskedastic, i.e. $\epsilon_t=e_t\cdot z_t/\sqrt{2}$ with $Var(e_t)=Var(u_t)=1$ and $Cov(u_t,e_t)=0.5$, and heteroskedasticity-robust variance estimators are employed.\\
\indent In Tables \ref{tab:GMM.1} and \ref{tab:2SLS.1} cases (b)-(c), the bootstrap is performed using $\eta_t\stackrel{iid}{\sim}\mathcal N(0,1)$. For all other results, $\eta_t \stackrel{iid}{\sim}(0,1)$ with draws from the asymmetric two-point distribution proposed by \cite{mammen1993}. In all cases besides (a), we use the wild bootstrap as described in Section \ref{sec:BootVal}.\\
\indent There are $500$ bootstrap samples. For each simulation, we compute the 95\% quantile of the bootstrap distribution of the test statistic, and if the test in the original sample is above this quantile, we reject, else we do not reject. $\gamma$ is varied between all sample realizations of $q_t$ from its $15\%$ quantile to its $85\%$ quantile. We report the rejection frequency of each test statistic in $1000$ simulations under the null and at 5\% nominal size (Tables \ref{tab:GMM.1}--\ref{tab:2SLS.2}), and under the alternative we plot the size-adjusted power, where the size-adjustment  is made relative to the null DGPs described above (Figure \ref{fig:Power}).\\
\\
\indent Tables \ref{tab:GMM.1} and \ref{tab:GMM.2} show that the bootstrap procedure originally proposed by CH has heavy size distortions in both directions. In particular, the test moves from being heavily oversized in small samples to severely undersized in large samples (columns ``CH''). This originates from imprecise residual estimates and imprecise $\hat H_{\epsilon,i\gamma}$ for small and moderate sample sizes pertinent to applications. In particular, when $\gamma$ is close to the 15\% or 85\% quantiles of $q_t$, there is not enough data to obtain precise residuals and precisely estimate $H_{\epsilon,i\gamma}$. Moreover, changing the CH bootstrap to a null bootstrap results in oversized tests for all considered sample sizes (columns ``Mix''). This problem is rectified by  modifying $\hat H_{\epsilon,i\gamma}$ in the original test statistic, as evident from columns ``BR'' in Tables \ref{tab:GMM.1} and \ref{tab:GMM.2}, where the empirical sizes are much closer to the nominal size. Tables \ref{tab:2SLS.1} and \ref{tab:2SLS.2} show that the 2SLS tests are in almost all cases close to nominal sizes, even in small samples, and that there is no clear ranking among the three proposed tests.

\FloatBarrier
\begin{table}[!ht]
\begin{center}
\caption{Rejection frequencies under the null DGP, $5\%$ nominal size and $\eta_t\stackrel{iid}\sim\cN(0,1)$ -- GMM  Tests}\label{tab:GMM.1}
\begin{threeparttable}
\begin{tabular}{ccccccccccccccccccccc}
\toprule
 & & \multicolumn{3}{c}{LFS} & \multicolumn{3}{c}{TFS, $\delta_{\Pi}=-0.5$} & \multicolumn{3}{c}{TFS, $\delta_{\Pi}=0.5$} & \multicolumn{3}{c}{TFS, $\delta_{\Pi}=1$} \\ 
\cmidrule(lr){3-5}\cmidrule(lr){6-8}\cmidrule(lr){9-11}\cmidrule(lr){12-14}
 & $T$ & $CH$ & $Mix$ & $BR$ & $CH$ & $Mix$ & $BR$ & $CH$ & $Mix$ & $BR$ & $CH$ & $Mix$ & $BR$\\
\midrule
\multirow{4}{*}{$(a)$} & 100  & 11.5\% & 17.9\% & 4.8\% & 10.8\% & 17.0\% & 3.9\% & 10.0\% & 15.0\%  & 4.5\% & 8.0\% & 13.4\% & 4.5\%\\ 
                        & 250  & 8.6\% & 12.9\% & 5.8\% & 7.8\% & 13.6\% & 4.8\% & 6.1\% & 10.1\% & 5.7\% & 5.1\% & 8.9\% & 5.7\%\\ 
                        & 500  & 4.4\% & 11.2\% & 4.3\% & 4.6\% & 14.2\% & 5.5\% & 2.9\% & 9.6\% & 4.6\% & 2.5\% & 8.8\% & 4.6\%\\ 
                        & 1000 & 1.8\% & 9.5\% & 4.9\% & 2.9\% & 11.3\% & 4.9\% & 1.7\% & 9.1\% & 4.9\% & 1.2\% & 8.8\% & 4.5\%\\ 
\midrule
\multirow{4}{*}{$(b)$} & 100 & 12.0\% & 17.4\% & 7.5\% & 11.1\% & 17.3\% & 6.2\% & 10.5\% & 15.6\% & 7.9\%& 9.6\% & 15.0\% & 8.0\%\\ 
                        & 250 & 4.9\% & 11.8\% & 7.5\% & 6.0\% & 14.1\% & 7.2\% & 4.8\% & 10.9\% & 7.2\% & 4.3\% & 10.2\% & 7.1\%\\ 
                        & 500  & 2.4\% & 10.7\% & 6.4\% & 3.5\% & 12.3\% & 6.3\% & 2.3\% & 9.4\% & 6.9\% & 2.4\% & 9.1\% & 6.8\%\\ 
                        & 1000 & 0.7\% & 8.7\% & 4.6\% & 1.3\% & 10.6\% & 4.6\% & 0.7\% & 8.2\% & 4.3\% & 0.6\% & 7.8\% & 4.4\%\\ 
\midrule
\multirow{4}{*}{$(c)$} & 100 & 6.9\% & 13.8\% & 8.1\% & 7.2\% & 12.9\% & 6.8\% & 7.8\% & 12.6\% & 8.6\% & 8.0\% & 12.6\% & 8.9\%\\ 
                        & 250  & 2.7\% & 11.2\% & 6.3\% & 3.1\% & 11.9\% & 5.6\% & 2.8\% & 9.6\% & 5.9\% & 2.2\% & 8.8\% & 6.0\%\\ 
                        & 500  & 2.0\% & 8.6\% & 6.6\% & 2.3\% & 11.0\% & 7.6\% & 1.8\% & 7.8\% & 6.7\% & 1.7\% & 7.1\% & 6.6\%\\ 
                        & 1000 & 0.4\% & 9.7\% & 4.7\% & 0.8\% & 11.2\% & 4.6\% & 0.5\% & 8.8\% & 4.4\% & 0.5\% & 8.4\% & 3.8\%\\ 
\bottomrule
\end{tabular}
\begin{tablenotes}{\footnotesize
\item $CH$ refers to the $WG_{T,CH}$ test in CH (Algorithm \ref{alg:CHGMM}) and $BR$ refers to our modified $WG_{T,BR}$ test (Algorithm \ref{alg:UsGMM}). $Mix$ refers to the mixture between the two, where the bootstrap is conducted using $\mathbb H_0$ residuals, whereas $\hat H_{\epsilon,i\gamma}$ are constructed using $\mathbb H_A$-residuals. Panel $(a)$ corresponds to known homoskedasticity, panel $(b)$ to unknown homoskedasticity and panel $(c)$ to known heteroskedasticity.
}
\end{tablenotes}
\end{threeparttable}
\end{center}
\end{table}

\begin{table}[!ht]
\begin{center}
\caption{Rejection frequencies under the null DGP, $5\%$ nominal size and $\eta_t\stackrel{iid}\sim Mammen$ -- GMM  Tests}\label{tab:GMM.2}
\begin{threeparttable}
\begin{tabular}{ccccccccccccccccccccc}
\toprule
 & & \multicolumn{3}{c}{LFS} & \multicolumn{3}{c}{TFS, $\delta_{\Pi}=-0.5$} & \multicolumn{3}{c}{TFS, $\delta_{\Pi}=0.5$} & \multicolumn{3}{c}{TFS, $\delta_{\Pi}=1$} \\ 
\cmidrule(lr){3-5}\cmidrule(lr){6-8}\cmidrule(lr){9-11}\cmidrule(lr){12-14}
 & $T$ & $CH$ & $Mix$ & $BR$ & $CH$ & $Mix$ & $BR$ & $CH$ & $Mix$ & $BR$ & $CH$ & $Mix$ & $BR$\\
\midrule
\multirow{4}{*}{$(b)$} & 100   & 12.0\% & 13.1\% & 5.5\% & 11.7\% & 14.1\% & 4.9\% & 10.6\% & 11.6\% & 6.1\% & 9.6\% & 11.0\% & 5.9\%\\ 
                                    & 250   & 4.6\%  & 11.9\% & 5.5\% & 5.5\% & 13.7\% & 4.7\%   & 4.0\% & 10.5\% & 5.6\%   & 3.9\% & 9.5\% & 5.6\%\\ 
                                    & 500   & 2.2\%  & 8.7\% & 5.3\% & 2.7\% & 11.3\% & 4.8\%   & 1.8\% & 8.6\% & 5.0\%   & 1.8\% & 8.5\% & 5.2\%\\ 
                                    & 1000 & 0.9\%  & 9.3\% & 5.6\% & 1.7\% & 9.4\% & 5.5\%   & 0.7\% & 8.8\% & 5.7\%   & 0.6\% & 8.0\% & 5.4\%\\ 
\midrule
\multirow{4}{*}{$(c)$} & 100  & 7.9\% & 10.5\% & 6.1\% & 7.3\% & 10.6\% & 4.8\% & 7.9\% & 9.4\% & 6.3\% & 8.3\% & 8.9\% & 6.3\%\\ 
                                   & 250  & 3.3\% & 9.9\% & 5.0\% & 3.7\% & 11.4\% & 4.5\% & 2.9\% & 7.7\% & 4.5\% & 2.8\% & 7.3\% & 4.1\%\\ 
                                   & 500  & 1.2\% & 6.6\% & 4.8\% & 1.5\% & 7.8\% & 4.6\% & 1.2\% & 5.0\% & 4.9\% & 1.0\% & 4.6\% & 4.9\%\\ 
                                   & 1000 & 0.8\% & 6.9\% & 5.3\% & 1.3\% & 9.3\% & 5.1\% & 0.6\% & 6.4\% & 5.4\% & 0.3\% & 5.9\% & 5.7\%\\ 
\bottomrule
\end{tabular}
\begin{tablenotes}{\footnotesize
\item See Table 1 Notes.
}
\end{tablenotes}
\end{threeparttable}
\end{center}
\end{table}
\begin{table}[!ht]
\begin{center}
\caption{Rejection frequencies under the null DGP, $5\%$ nominal size and $\eta_t\stackrel{iid}\sim\cN(0,1)$ -- 2SLS Tests}\label{tab:2SLS.1}
\begin{threeparttable}
\begin{tabular}{ccccccccccccc}
\toprule
 & & \multicolumn{2}{c}{LFS} & \multicolumn{2}{c}{TFS, $\delta_{\Pi}=-0.5$} & \multicolumn{2}{c}{TFS, $\delta_{\Pi}=0.5$} & \multicolumn{2}{c}{TFS, $\delta_{\Pi}=1$} \\ 
\cmidrule(lr){3-4}\cmidrule(lr){5-6}\cmidrule(lr){7-8}\cmidrule(lr){9-10}
 & $T$ & $LR$ & $W$ & $LR$ & $W$ & $LR$ & $W$ & $LR$ & $W$ \\
\midrule
\multirow{4}{*}{$(a)$} & 100  & 6.0\% & 6.2\% & 2.9\% & 4.5\% & 2.1\% & 2.4\% & 2.9\% & 2.6\% \\
                        & 250  & 5.0\% & 5.0\% & 2.0\% & 4.0\% & 2.6\% & 3.2\% & 3.4\% & 3.8\% \\
                        & 500  & 5.3\% & 5.0\% & 2.8\% & 3.4\% & 2.8\% & 2.8\% & 4.5\% & 4.3\% \\
                        & 1000 & 4.2\% & 4.4\% & 3.9\% & 3.4\% & 4.2\% & 3.3\% & 4.9\% & 3.7\%\\
\midrule
\multirow{4}{*}{$(b)$} & 100  & 4.6\% & 8.4\% & 2.3\% & 8.2\% & 1.5\% & 5.5\% & 2.1\% & 7.2\%\\
                        & 250  & 5.3\% & 8.8\% & 2.3\% & 6.0\% & 2.3\% & 5.4\% & 4.2\% & 7.0\%\\
                        & 500  & 5.7\% & 6.7\% & 4.0\% & 6.0\% & 3.3\% & 5.6\% & 5.5\% & 6.7\%\\
                        & 1000 & 6.5\% & 6.9\% & 3.0\% & 4.8\% & 3.7\% & 5.5\% & 4.8\% & 6.3\% \\
\midrule
\multirow{4}{*}{$(c)$} & 100  & 5.1\% & 8.2\% & 3.7\% & 7.8\% & 2.7\% & 5.1\% & 3.2\% & 6.3\% \\
                        & 250  & 5.3\% & 6.9\% & 3.9\% & 5.4\% & 3.9\% & 5.4\% & 5.7\% & 6.7\%\\
                        & 500  & 7.0\% & 7.6\% & 5.9\% & 6.6\% & 5.9\% & 5.2\% & 6.9\% & 7.2\% \\
                        & 1000 & 6.6\% & 5.3\% & 5.8\% & 4.1\% & 5.2\% & 3.5\% & 5.5\% & 4.0\% \\
\bottomrule
\end{tabular}
\begin{tablenotes}{\footnotesize
\item $LR$ and $W$ refer to the $LR_T$ and $W_T$ tests.
 Panel $(a)$ corresponds to known homoskedasticity, panel $(b)$ to unknown homoskedasticity and panel $(c)$ to known heteroskedasticity.
}
\end{tablenotes}
\end{threeparttable}
\end{center}
\end{table}

\begin{table}[!ht]
\begin{center}
\caption{Rejection frequencies under the null DGP, $5\%$ nominal size and $\eta_t\stackrel{iid}\sim Mammen$ -- 2SLS Tests}\label{tab:2SLS.2}
\begin{threeparttable}
\begin{tabular}{ccccccccccccc}
\toprule
 & & \multicolumn{2}{c}{LFS} & \multicolumn{2}{c}{TFS, $\delta_{\Pi}=-0.5$} & \multicolumn{2}{c}{TFS, $\delta_{\Pi}=0.5$} & \multicolumn{2}{c}{TFS, $\delta_{\Pi}=1$} \\ 
\cmidrule(lr){3-4}\cmidrule(lr){5-6}\cmidrule(lr){7-8}\cmidrule(lr){9-10}
 & $T$ & $LR$ & $W$ & $LR$ & $W$ & $LR$ & $W$ & $LR$ & $W$ \\
\midrule
\multirow{4}{*}{$(b)$} & 100  & 5.0\% & 5.9\% & 2.1\% & 6.3\% & 2.0\% & 4.1\% & 2.2\% & 5.1\% \\
                        & 250  & 5.6\% & 6.8\% & 2.5\% & 5.2\% & 1.9\% & 4.2\% & 3.3\% & 5.1\% \\ 
                        & 500  & 4.4\% & 4.6\% & 3.2\% & 4.0\% & 3.1\% & 4.1\% & 4.7\% & 4.3\% \\
                        & 1000 & 5.4\% & 6.0\% & 4.6\% & 4.7\% & 3.0\% & 4.5\% & 4.4\% & 5.2\% \\
\midrule
\multirow{4}{*}{$(c)$} & 100  & 5.3\% & 5.2\% & 3.8\% & 6.4\% & 3.2\% & 3.9\% & 3.7\% & 4.0\%\\
                        & 250  & 5.3\% & 5.6\% & 3.4\% & 4.8\% & 3.3\% & 3.8\% & 4.4\% & 3.9\%\\
                        & 500  & 5.3\% & 5.9\% & 4.6\% & 2.8\% & 4.7\% & 3.2\% & 5.4\% & 4.1\% \\
                        & 1000 & 5.1\% & 6.6\% & 4.0\% & 4.8\% & 4.4\% & 4.3\% & 4.4\% & 5.3\% \\
\bottomrule
\end{tabular}
\begin{tablenotes}{\footnotesize
\item See Table 3 Notes
}
\end{tablenotes}
\end{threeparttable}
\end{center}
\end{table}
\FloatBarrier

\indent We also assess the power of all tests. For a large threshold $\delta_x=1$, all tests have power virtually equal to one even for sample sizes of $T=250$ and therefore we do not report these results. Figure \ref{fig:Power} shows the power properties for a small threshold of $\delta_x=0.25$. In small samples, the Wald tests dominate the LR test for all cases (a)-(c). Note that this is not necessarily for classical reasons of correcting for heteroskedasticity, as all tests are non-pivotal and bootstrapped. The power differences among all three tests vanish as the sample size grows. Therefore, we argue that all the tests proposed provide reliable alternatives in moderate samples pertinent to macroeconomic applications.

\FloatBarrier
\begin{figure}[ht!]
\hspace*{-2cm}
\caption{Size-Adjusted Power Plots for $\delta_x=0.25$}\label{fig:Power}
\includegraphics[width=1.1\textwidth]{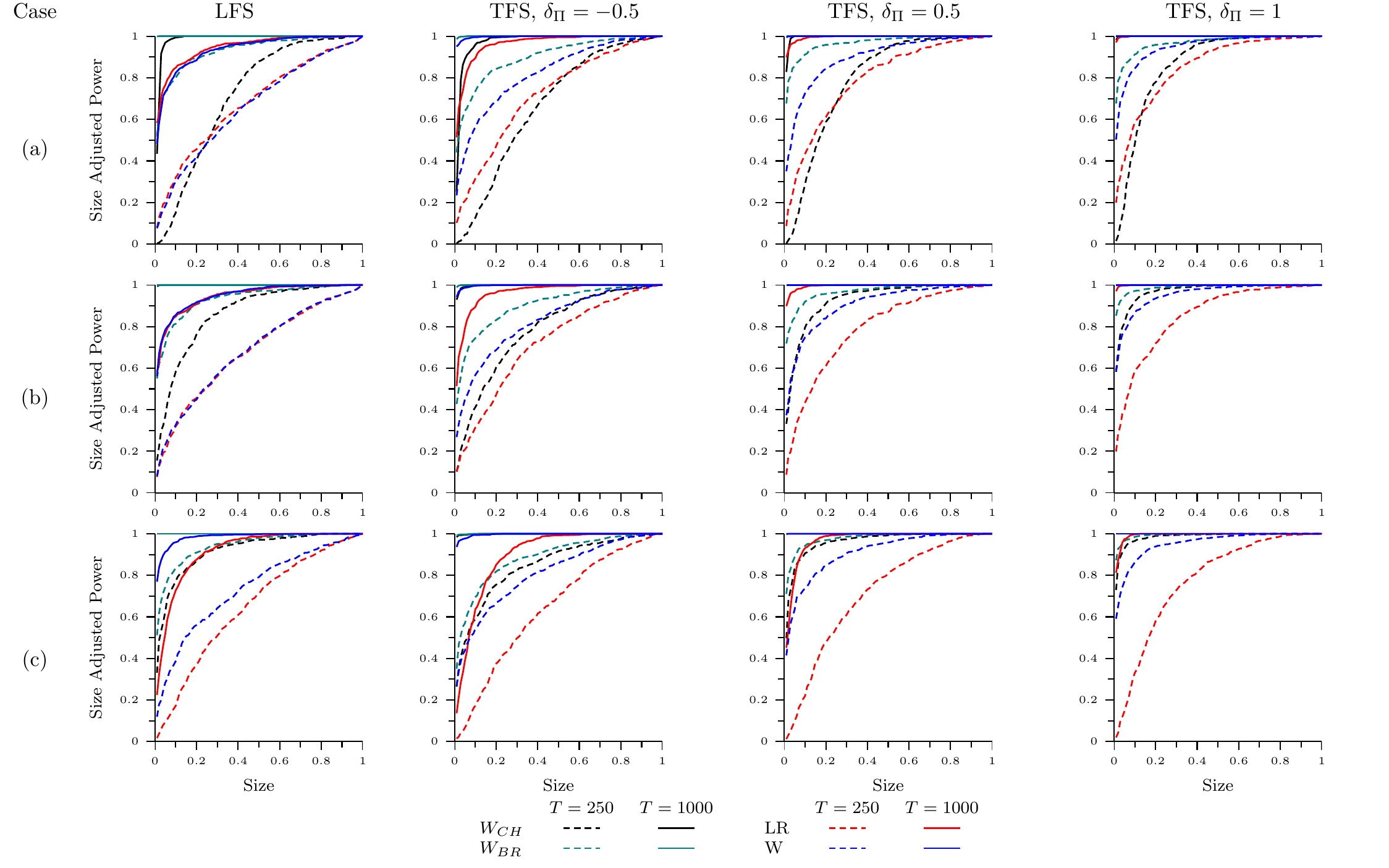}
\floatfoot{\normalsize $LR$ and $W$ refer to our $LR_T$ and $W_T$ tests. $CH$ refers to the $WG_{T,CH}$ test in CH (Algorithm \ref{alg:CHGMM}) and $BR$ refers to our modified $WG_{T,BR}$ test (Algorithm \ref{alg:UsGMM}).}
\end{figure}
\FloatBarrier

\section{Application to government spending multipliers}\label{sec:Appl}   
In this section, we revisit the question  whether government spending is more effective in recessions, and address it as in RZ, using exactly the same data and model specifications, except that we test and estimate an unknown threshold rather than imposing it. For simplicity, we first focus on the instantaneous government spending multiplier $\theta_{g,i}(i=1,2)$, estimated similarly to RZ from:
\begin{align}
y_{t} & =(\theta_{g,1} \, g_{t}+z_{1,t}^\top \theta_{z,1}) \indgamo +(\theta_{g,2} \, g_{t}+ z_{1,t}^\top \theta_{z,2}) \indinvgamo)+\epsilon_{t}\label{equ:RZ1} \\
g_t &= \Pi_1^\top z_t \indrhoo +\Pi_2^\top z_t \indinvrhoo+v_{t}\label{equ:RZ2}
\end{align}
where $y_t$ is real GDP divided by trend GDP, $g_t$ is real government spending divided by trend GDP -- which is endogenous and instrumented by military spending news $m_t$ -- and the threshold variable is $q_{t}$, the first lag of the unemployment rate. The exogenous regressors $z_{1t}$ are also included in $z_t$ and contain an intercept and four lags of $g_t,\,y_t,\,m_t$. Thus, $z_t=[z_{1t}^\top, m_t^\top]^\top$.\\
\indent The data is from the RZ replication package.\footnote{\url{http://econweb.ucsd.edu/~vramey/research/Ramey_Zubairy_replication_codes.zip}} For details on the data construction, instrument validity, or interpretation of $\theta_{g,i}(i=1,2)$ as cumulative spending multipliers, we refer the interested reader to RZ. \\
\indent Letting $\theta_i^0=(\theta_{g,i}, \theta_{z,i}^\top)^\top$ and $w_t = (g_t, z_{1,t}^\top)^\top$, the RZ estimators  of $\theta_i^0$ are exactly the just-identified GMM (or instrumental variables, IV henceforth) estimators $\hat \theta_{i\gamma,(1)}$ defined in Section \ref{sec:TMTS}, but evaluated in RZ at $\gamma=6.5$ (and ignoring the first stage which is irrelevant for conventional IV estimators).\footnote{All numbers referring to unemployment rates, such as $6.5$, should be interpreted as percentages: $6.5\%$.} The threshold $\gamma= 6.5$ is chosen by RZ as in \cite{owyangetal2013}, based on the US Federal Reserve use of this threshold in its policy announcement; RZ also do a robustness check with a threshold of $8.0$. Since it is unclear why $6.5$ or $8.0$ would be the threshold that defines recessions versus expansions,  we do not assume that the threshold $\gamma^0$ is known or even that there is a threshold $\gamma^0$; we instead test for the presence of $\gamma^0$ first. \\
\indent The 2SLS tests require first estimating $\rho^0$ in equation \eqref{equ:RZ2}. Table \ref{tab:FSThresh} reports the multivariate threshold estimates $\hat \rho$ described in Section \ref{sec:TMTS}, along with the decisions of a LFS or a TFS based on the BIC3 criterion proposed in \cite{gonpit2002} and on the ordinary least-squares (OLS) versions of $LR_T^b$ and $W_T^b$ tests described in Section \ref{sec:BootVal}, which were proposed in \cite{hansen1996}. The estimate of $\rho^0$  change with the cut-off considered, but there is considerable evidence of a threshold in the first stage. The maximizer of the OLS version of $LR_T(\gamma)$ is  exactly $\hat \rho$, a consistent estimator of $\rho^0$ as shown in Theorem \ref{theo:0}. Therefore, we use a TFS with $\hat \rho$ in Table \ref{tab:FSThresh}. \\
\FloatBarrier
\begin{table}[!ht]
\begin{center}
\caption{Presence of Thresholds in the First Stage}\label{tab:FSThresh}
\begin{threeparttable}
\begin{tabular}{ccccc}
\toprule
 Trim & $\hat{\rho}$ & BIC3 & LR & W \\ 
\midrule 
 10\% & 3.5264 & TFS & TFS & LFS \\ 
 15\% & 3.5264 & TFS & TFS & LFS \\ 
 20\% & 3.7530 & LFS & TFS & LFS \\ 
 25\% & 4.0636 & LFS & TFS & LFS \\ 
\bottomrule 
\end{tabular}
\begin{tablenotes}{\footnotesize
\item BIC3 is the BIC3 criterion in Gonzalo and Pitarakis (2002), and LR and W are the OLS bootstrap equivalents of our tests $LR_T$ and $W_T$
}
\end{tablenotes}
\end{threeparttable}
\end{center}
\end{table}

\begin{table}[!ht]
\begin{center}
\caption{Presence of Thresholds in the Equation of Interest}\label{tab:EIThresh}
\begin{threeparttable}
\begin{tabular}{ccccccccc}
\toprule
& & & \multicolumn{6}{c}{GMM Tests}\\
 \cmidrule(lr){4-9}
 Trim & TFS $\hat{\rho}$ & $\hat{\gamma}$ & $WG_{T,CH}$ & $5\%CV$ & Reject & $WG_{T,BR}$ & $5\%CV$ & Reject \\ 
 \midrule 
 10\% & 3.5264 & 11.9660 & 178.296 & 1213.488 & No & 25.258 & 27.007 & No \\ 
 15\% & 3.5264 & 10.7000 & 66.523 & 1097.046 & No & 25.258 & 27.327 & No \\ 
 20\% & 3.7530 & 9.3443 & 66.523 & 809.764 & No & 25.258 & 26.631 & No \\ 
 25\% & 4.0636 & 8.3363 & 66.523 & 250.830 & No & 22.787 & 27.133 & No \\ 
 \midrule\midrule
 & & & \multicolumn{6}{c}{2SLS Tests}\\
 \cmidrule(lr){4-9}
 Trim & TFS $\hat{\rho}$ & $\hat{\gamma}$  & $LR_T$ & $5\%CV$ & Reject & $W_T$ & $5\%CV$ & Reject \\
 \midrule
 10\% & 3.5264 & 11.9660 & 93.526 & 74.011 & Yes & 29.335 & 29.348 & No\\
 15\% & 3.5264 & 10.7000  & 78.158 & 56.963 & Yes & 27.698 & 29.348 & No \\
 20\% & 3.7530 & 9.3443  & 75.332 & 52.066 & Yes & 27.505 & 29.197 & No \\
 25\% & 4.0636 & 8.3363 & 65.719 & 50.498 & Yes & 24.365 & 29.137 & No \\
\bottomrule 
\end{tabular}
\begin{tablenotes}{\footnotesize
\item ``$5\%CV$' display the bootstrap $5\%$ critical values, ``Reject'' indicates whether the null of no threshold in (8) is rejected. For all specifications, a TFS is used with $\hat{\rho}$ obtained with the same cut-offs in the first stage as the column ``Trim'' indicates.
}
\end{tablenotes}
\end{threeparttable}
\end{center}
\end{table}

\indent Given $\hat \rho$ obtained for each cut-off, we test for an unknown threshold in equation \eqref{equ:RZ1}. Table \ref{tab:EIThresh} shows that the LR test rejects the null. The 2SLS Wald test and our modified GMM Wald-test do not reject (but their values are relatively close to the critical values at certain cut-offs). From Figure \ref{fig:TestSeq}, it is  evident that the sequence of all our test statistics  are relatively flat for all values of $\gamma$. The CH test also never rejects the null, but its sequence is not flat: its value is relatively large at 10\% trimming, and  its critical values are very large at all trimming levels. This is in line with our simulations, which indicated that the tests are undersized at 500 observations, the number of observations in our sample. Its erratic behavior near the sample edges was further illustrated in Figure 1.\\
\indent Because Equations \eqref{equ:RZ1}--\eqref{equ:RZ2} control for several lags -- in line with the RZ specification -- we choose the 25\% cut-off results with $\hat \rho=4.0636$ and $\hat \gamma=8.3363$, where the latter is the 2SLS threshold estimate proposed in CH (or, equivalently, the implicit maximizer of the $LR_T(\gamma)$ quantity in this paper).\footnote{The confidence sets for both these thresholds obtained by inverting the likelihood ratio tests in \cite{hansen2000} and CH, or by simulating the asymptotic distribution in CH, are very tight when using the default nonparametric kernel. However, since both estimators are close to the 25\% cut-off, and increase ($\hat \gamma$) or decrease ($\hat \rho$)  when decreasing the cut-offs used, we can only interpret these estimators as close to the lower bounds of the true threshold values that are identified in the sample.} \\
\indent We could conclude based on Figure \ref{fig:TestSeq} and Table \ref{tab:EIThresh} that there is little evidence that the instantaneous multipliers are different in recessions and expansions. In what follows, we also show that there is little evidence that the multipliers at other horizons than zero are different. To that end, as in RZ, we compute the cumulative government spending multipliers $\theta_{g,i}^h (i=1,2)$ at horizon $h=1,\ldots,H$ from the IV regression:
\begin{align*}
\suml_{h=0}^{H} y_{t+h}^{h} & =(\theta_{g,1}^h \, \suml_{h=0}^H g_{t+h}+z_{1,t}^\top \theta_{z,1}^h) \indgamhat \\
&+(\theta_{g,2}^h \, \suml_{h=0}^H g_{t+h}+ z_{1,t}^\top \theta_{z,2}^h) \indinvgamhat+\epsilon_{t},
\end{align*}   
where  $\suml_{h=0}^H g_{t+h}$ is instrumented by $m_t$.\footnote{It is unclear how to use the TFS specification \eqref{equ:RZ2} to obtain cumulative government spending multipliers at $h>0$, because of the misalignment between the first and the second stage threshold, and we leave this to future research.}\\
\indent Tables \ref{tab:RZMult}-\ref{tab:UsMult}  show the RZ multipliers (using $6.5$ and $8$ - robustness check in RZ - as thresholds), and our multipliers for fifteen quarters ahead, calculated exactly as in RZ but with $\hat \gamma=8.3383$. We also report classical heteroskedasticity and autocorrelation (HAC) robust standard errors, weak instrument HAC robust confidence sets, and classical and weak-instrument HAC-robust tests for the difference in multipliers at the imposed thresholds. These tables show that in all cases, there is no evidence that government spending multipliers are different in recessions, once the possibility of weak instruments is taken into account at all horizons. 
\FloatBarrier
\begin{table}[!ht]
\advance\leftskip-1.8cm
\centering
\caption{IV Multipliers with RZ threshold}\label{tab:RZMult}
\begin{threeparttable}
\scriptsize{\begin{tabular}{lccccccccrr}
\toprule
 & \multicolumn{4}{c}{State 1, $q_t\leq6.5$: 319 obs.} & \multicolumn{4}{c}{State 2, $q_t>6.5$: 181 obs. } & & \\
 \cmidrule(lr){2-5}\cmidrule(lr){6-9}
$h$ & Mult. & s.e. & AR LB & AR UB & Mult. & s.e. & AR LB & AR UB & $p$-val. & AR $p$-val.\\
\midrule
0  & 1.24 & 0.45 & -0.51 & 2.99 & -0.61 & 0.98 & -4.43 & 3.22 & 0.04 & 0.22\\
1  & 1.11 & 0.29 & -0.02 & 2.24 & -1.92 & 1.54 & -7.95 & 4.10 & 0.04 & 0.24\\
2  & 0.89 & 0.19 &  0.13 & 1.64 & -0.17 & 0.25 & -1.16 & 0.81 & 0.00 & 0.24\\
3  & 0.71 & 0.14 &  0.15 & 1.28 &  0.22 & 0.16 & -0.42 & 0.87 & 0.01 & 0.25\\
4  & 0.64 & 0.12 &  0.17 & 1.12 &  0.46 & 0.14 & -0.09 & 1.01 & 0.26 & 0.39\\
5  & 0.63 & 0.10 &  0.24 & 1.03 &  0.54 & 0.12 &  0.08 & 1.00 & 0.52 & 0.57\\
6  & 0.62 & 0.09 &  0.26 & 0.99 &  0.59 & 0.11 &  0.17 & 1.01 & 0.81 & 0.82\\
7  & 0.59 & 0.09 &  0.24 & 0.95 &  0.60 & 0.10 &  0.23 & 0.97 & 0.95 & 0.95\\
8  & 0.59 & 0.09 &  0.23 & 0.95 &  0.62 & 0.09 &  0.29 & 0.95 & 0.82 & 0.82\\
9  & 0.62 & 0.10 &  0.25 & 1.00 &  0.63 & 0.08 &  0.33 & 0.92 & 0.97 & 0.97\\
10 & 0.66 & 0.10 &  0.27 & 1.05 &  0.64 & 0.07 &  0.37 & 0.91 & 0.87 & 0.87\\
11 & 0.68 & 0.10 &  0.28 & 1.08 &  0.64 & 0.07 &  0.39 & 0.90 & 0.79 & 0.80\\
12 & 0.68 & 0.11 &  0.27 & 1.10 &  0.65 & 0.06 &  0.41 & 0.90 & 0.81 & 0.82\\
13 & 0.68 & 0.11 &  0.26 & 1.11 &  0.67 & 0.06 &  0.44 & 0.89 & 0.89 & 0.90\\
14 & 0.68 & 0.11 &  0.24 & 1.13 &  0.68 & 0.05 &  0.47 & 0.89 & 0.99 & 0.99\\
15 & 0.67 & 0.12 &  0.19 & 1.15 &  0.68 & 0.05 &  0.48 & 0.88 & 0.92 & 0.92\\
\bottomrule
\end{tabular}}
{\footnotesize
\begin{tablenotes}
\item "Mult." indicates the IV estimates at each horizon, "obs." the number of observations, and "s.e." the Newey-West HAC standard errors using the Bartlett kernel and the data-dependent bandwidth. "AR LB (AR UB)" refer to $95\%$ Anderson-Rubin confidence lower (upper) bounds. "$p$-val." indicate classical $p$-values for the $t$-test of no difference between the multipliers, and "AR $p$-val." indicate Anderson-Rubin $p$-values for the same test. All the results are computed with RZ's replication package code.
\end{tablenotes}
}
\end{threeparttable}
\end{table}

\begin{table}[!ht]
\centering
\caption{IV Multipliers with a threshold equal to $8$ (robustness check in RZ)}\label{tab:RobMult}
\begin{threeparttable}
\scriptsize{\begin{tabular}{lccccccccrr}
\toprule
 & \multicolumn{4}{c}{State 1, $q_t\leq8$: 396 obs. } & \multicolumn{4}{c}{State 2, $q_t>8$: 104 obs.} & & \\
 \cmidrule(lr){2-5}\cmidrule(lr){6-9}
$h$ & Mult. & s.e. & AR LB & AR UB & Mult. & s.e. & AR CI LB & AR CI UB & $p$-val. & AR $p$-val.\\
\midrule
0  & 1.29 & 0.37 & -0.17 & 2.74 & -0.44 & 1.11 & -4.79 & 3.92 &  0.10 & 0.26\\
1  & 1.13 & 0.24 &  0.20 & 2.05 & -1.05 & 1.25 & -5.94 & 3.83 &  0.08 & 0.30\\
2  & 0.92 & 0.15 &  0.31 & 1.52 & -0.29 & 0.41 & -1.89 & 1.31 &  0.01 & 0.31\\
3  & 0.73 & 0.12 &  0.27 & 1.19 &  0.16 & 0.25 & -0.82 & 1.13 &  0.05 & 0.36\\
4  & 0.66 & 0.10 &  0.26 & 1.07 &  0.48 & 0.23 & -0.42 & 1.38 &  0.47 & 0.59\\
5  & 0.65 & 0.09 &  0.28 & 1.02 &  0.65 & 0.21 & -0.17 & 1.47 &  0.99 & 0.99\\
6  & 0.63 & 0.09 &  0.26 & 1.00 &  0.76 & 0.21 & -0.06 & 1.57 &  0.60 & 0.58\\
7  & 0.60 & 0.09 &  0.24 & 0.97 &  0.80 & 0.21 & -0.01 & 1.60 &  0.40 & 0.42\\
8  & 0.60 & 0.09 &  0.24 & 0.95 &  0.79 & 0.18 &  0.08 & 1.50 &  0.35 & 0.38\\
9  & 0.63 & 0.09 &  0.28 & 0.98 &  0.76 & 0.15 &  0.18 & 1.34 &  0.45 & 0.48\\
10 & 0.66 & 0.09 &  0.30 & 1.01 &  0.75 & 0.13 &  0.25 & 1.24 &  0.58 & 0.58\\
11 & 0.68 & 0.09 &  0.31 & 1.04 &  0.72 & 0.11 &  0.28 & 1.16 &  0.77 & 0.77\\
12 & 0.68 & 0.09 &  0.31 & 1.05 &  0.71 & 0.11 &  0.29 & 1.12 &  0.85 & 0.84\\
13 & 0.67 & 0.10 &  0.29 & 1.05 &  0.72 & 0.10 &  0.33 & 1.12 &  0.72 & 0.72\\
14 & 0.66 & 0.10 &  0.26 & 1.07 &  0.75 & 0.10 &  0.36 & 1.13 &  0.55 & 0.55\\
15 & 0.65 & 0.12 &  0.20 & 1.10 &  0.76 & 0.10 &  0.37 & 1.15 &  0.45 & 0.43\\
\bottomrule
\end{tabular}}
{\footnotesize
\begin{tablenotes}
\item See Table \ref{tab:RZMult} notes.\end{tablenotes}
}
\end{threeparttable}
\end{table}
 \begin{table}[!ht]
\centering
\caption{IV Multipliers with our threshold}\label{tab:UsMult}
\begin{threeparttable}
\scriptsize{\begin{tabular}{lccccccccrr}
\toprule
 & \multicolumn{4}{c}{State 1, $q_t\leq8.3363$: 410 obs.} & \multicolumn{4}{c}{State 2, $q_t>8.3363$: 90 obs.} & & \\
 \cmidrule(lr){2-5}\cmidrule(lr){6-9}
$h$ & Mult. & s.e. & AR LB & AR UB & Mult. & s.e. & AR LB & AR UB & $p$-val. & AR $p$-val.\\
\midrule
0  & 1.30 & 0.38 & -0.20 & 2.80 & -0.93 & 1.36 & -6.27 &  4.41 & 0.05 & 0.19\\
1  & 1.14 & 0.25 &  0.17 & 2.11 & -1.68 & 1.37 & -7.05 &  3.68 & 0.03 & 0.23\\
2  & 0.93 & 0.16 &  0.28 & 1.57 & -0.55 & 0.48 & -2.42 &  1.33 & 0.00 & 0.25\\
3  & 0.74 & 0.13 &  0.25 & 1.23 & -0.02 & 0.20 & -0.80 &  0.76 & 0.00 & 0.26\\
4  & 0.67 & 0.11 &  0.24 & 1.09 &  0.32 & 0.17 & -0.36 &  1.00 & 0.07 & 0.31\\
5  & 0.65 & 0.10 &  0.28 & 1.03 &  0.52 & 0.18 & -0.19 &  1.22 & 0.45 & 0.52\\
6  & 0.63 & 0.10 &  0.26 & 1.01 &  0.62 & 0.19 & -0.14 &  1.38 & 0.94 & 0.94\\
7  & 0.60 & 0.09 &  0.24 & 0.96 &  0.66 & 0.20 & -0.13 &  1.45 & 0.77 & 0.78\\
8  & 0.60 & 0.09 &  0.25 & 0.94 &  0.66 & 0.18 & -0.04 &  1.37 & 0.71 & 0.73\\
9  & 0.62 & 0.09 &  0.29 & 0.96 &  0.66 & 0.15 &  0.07 &  1.25 & 0.84 & 0.85\\
10 & 0.65 & 0.09 &  0.31 & 0.10 &  0.66 & 0.13 &  0.14 &  1.18 & 0.96 & 0.96\\
11 & 0.67 & 0.09 &  0.32 & 1.02 &  0.65 & 0.12 &  0.17 &  1.13 & 0.87 & 0.87\\
12 & 0.67 & 0.09 &  0.32 & 1.03 &  0.64 & 0.12 &  0.17 &  1.11 & 0.85 & 0.85\\
13 & 0.66 & 0.09 &  0.30 & 1.03 &  0.67 & 0.11 &  0.22 &  1.11 & 0.98 & 0.98\\
14 & 0.65 & 0.10 &  0.27 & 1.03 &  0.70 & 0.11 &  0.28 &  1.11 & 0.75 & 0.76\\
15 & 0.64 & 0.11 &  0.22 & 1.05 &  0.71 & 0.10 &  0.31 &  1.18 & 0.58 & 0.60\\
\bottomrule
\end{tabular}}
{\footnotesize
\begin{tablenotes}
\item See Table \ref{tab:RZMult} notes.
\end{tablenotes}
}
\end{threeparttable}
\end{table}
\FloatBarrier
\begin{figure}[ht!]
\begin{center}
\caption{Effective F-Statistic}\label{fig:KPtest}
\includegraphics[width=1\textwidth]{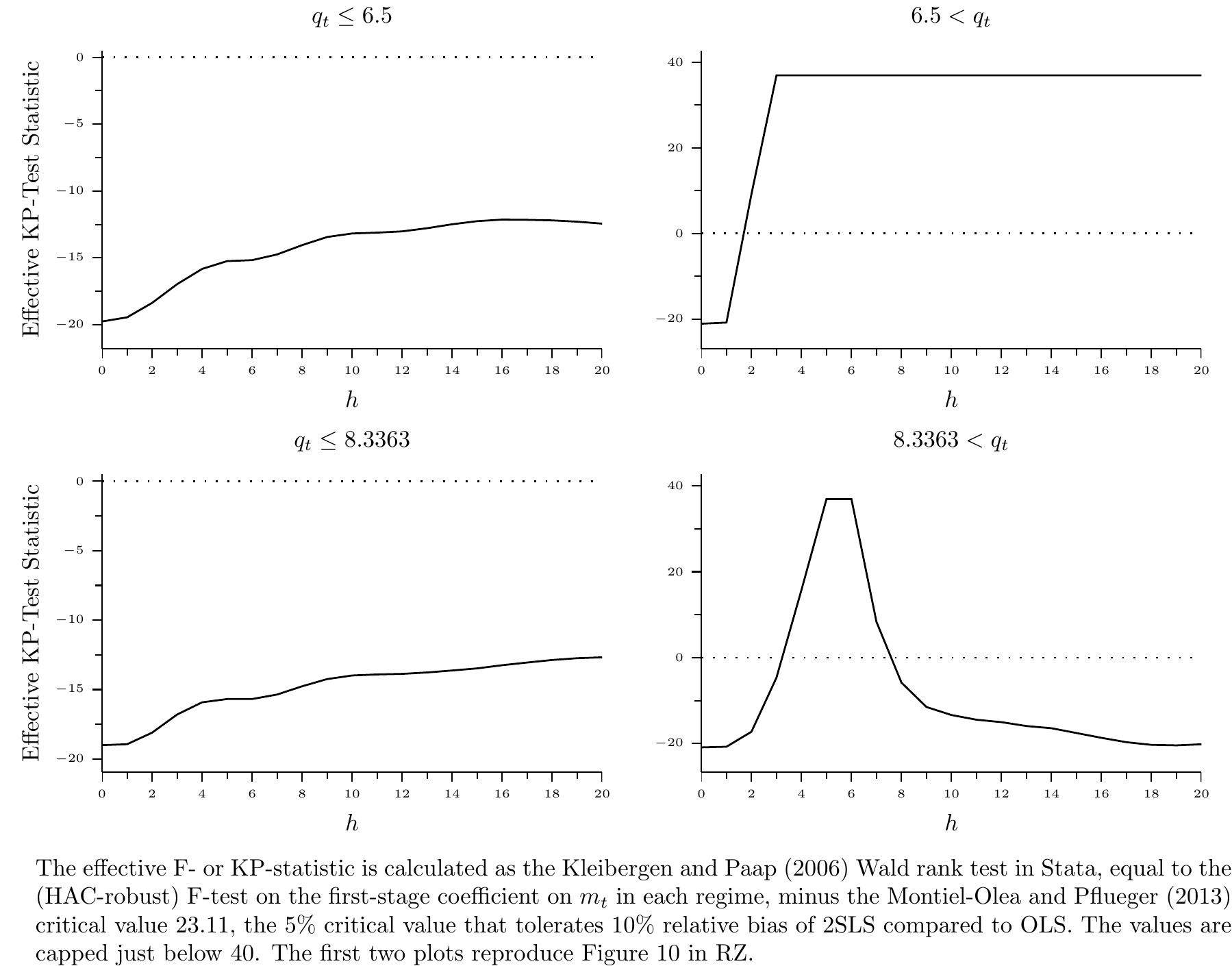}
\end{center}
\end{figure}

\begin{figure}[ht!]
\begin{center}
\caption{Effective F-Statistic}\label{fig:KPtestAdd}
\includegraphics[width=1\textwidth]{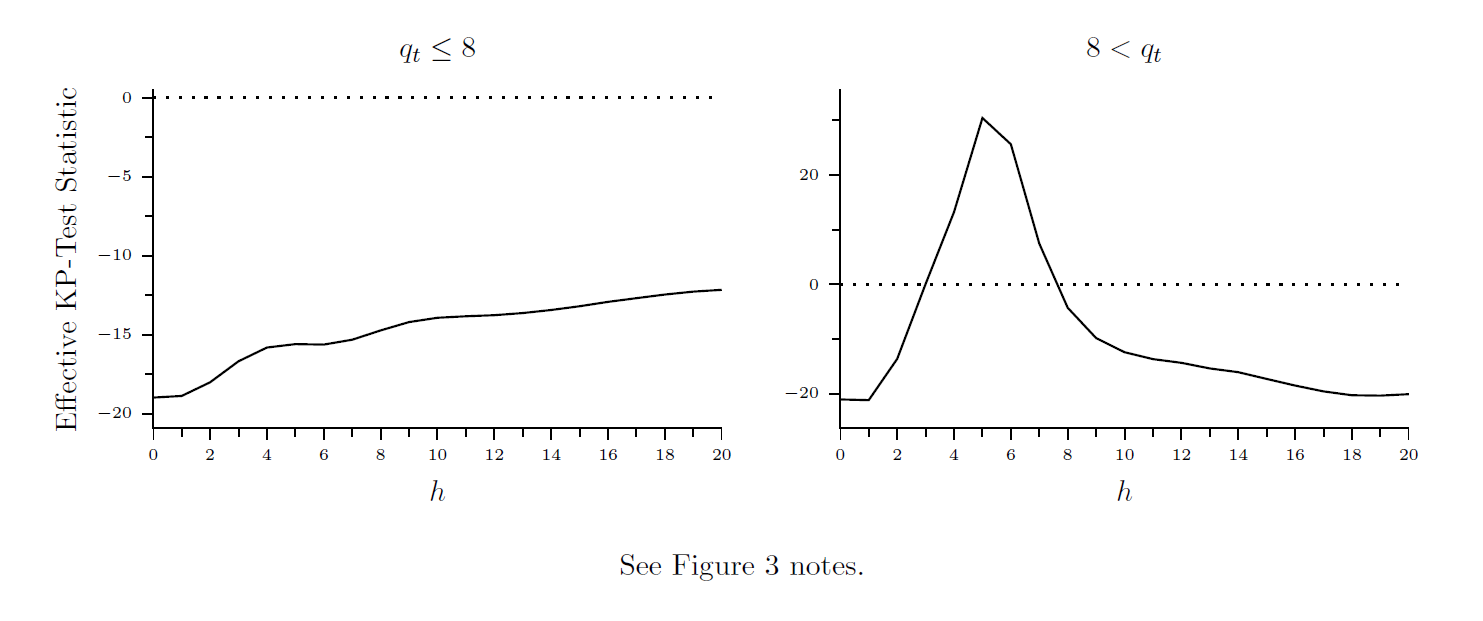}
\end{center}
\end{figure}

\begin{figure}[ht!]
\begin{center}
\caption{Data plots with with shaded areas $q_t>8.3363$}\label{fig:DataVsRegimes}
\includegraphics[width=1\textwidth]{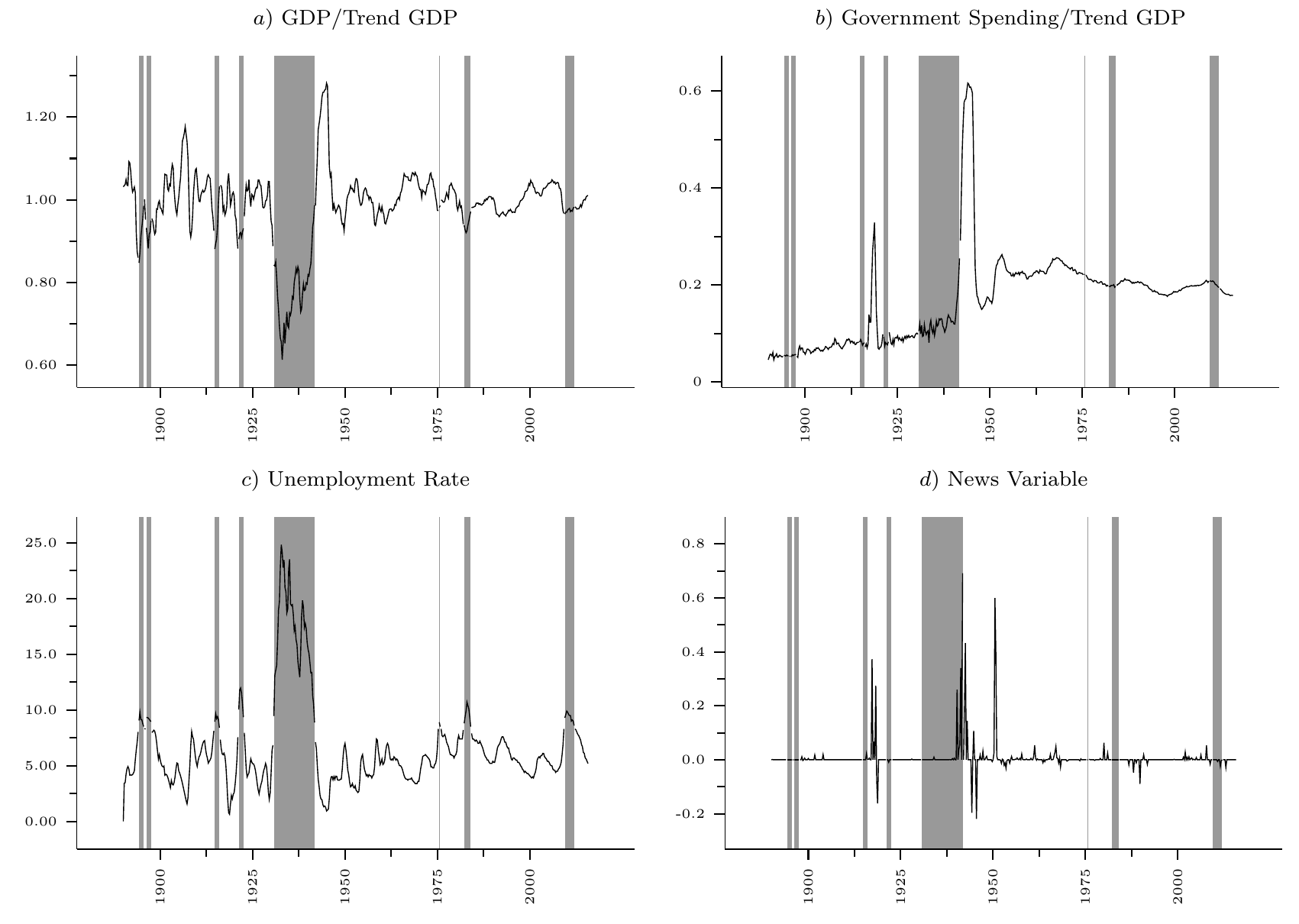}
\end{center}
\end{figure}
\FloatBarrier

\indent We therefore assess the possibility of weak instruments at various horizons in Figures \ref{fig:KPtest} and \ref{fig:KPtestAdd}, plotting the effective F-statistic  for the null hypothesis of weak instruments in each regime across horizons. These figures show evidence of weak instruments in both regimes at short horizons, for all thresholds.  This also holds for the effective F-statistics minus their critical value for our TFS specification with $\hat \rho=4.0636$: they are equal to approximately  $-19$ for $q_t \leq 4.0636$ (101 observations), and $-17.5$ for $q_t> 4.0636$ (399 observations), so well below zero. Therefore, the weak instrument robust p-values should be used, even for Table \ref{tab:UsMult}, at shorter horizons.  Hence, once  weak instruments are accounted for, there is no evidence that government spending multipliers are different in recessions, both in our paper and in RZ.\\
\indent What we do learn from the analysis is that military spending news becomes a weaker instrument for longer horizons when the threshold increases from $6.5$ to $8.0$ or to $8.3363$, and therefore that the instrument relevance is not robust to the threshold used. This is also indicated in Figure \ref{fig:DataVsRegimes}, which shows that, except for the World War II period, the news variable does not exhibit much variation when the unemployment rate is above $8.3363$. This suggests that the RZ military news instrument is more informative for intermediate values of unemployment, so for "normal" recessions rather than "deep" recessions.

\section{Conclusions}
In this paper we proposed two adjustments to the GMM Wald test of \cite{canerhansen2004}, and two new 2SLS test statistics for threshold detection in linear models with endogenous regressors and exogenous thresholds. We derived the asymptotic validity of their null bootstrap equivalents, and showed through simulations and an application that these tests have better finite sample properties than the test proposed in \cite{canerhansen2004}. \\
\indent \cite{rothfelderboldea2016} show in their Theorem 1 that under conditional homoskedasticity and one endogenous regressor, the 2SLS estimators with a linear first stage or a threshold first stage can be more efficient than the GMM estimators that ignore this information. It would be interesting to assess when this efficiency carries over to more general settings, and whether there exists an optimal GMM estimator that uses similar information from the first stage as the 2SLS estimators.

\bibliography{testing_bib}
\bibliographystyle{ecta}
\newpage
\appendix

\section{Online Supplement: Proofs}
\setcounter{theorem}{0}
\renewcommand{\theequation}{A.\arabic{equation}}
\renewcommand{\thetheorem}{A.\arabic{theorem}}
\renewcommand{\thelemma}{A.\arabic{lemma}}
\subsection{General Notation and Results}
Before proving our results, we introduce some more required notation and provide additional Lemmas with helpful results.\\
\\
\textbf{General notation.} Let $\|\cdot\|$ be the Euclidean norm for vectors, respectively the Frobenius norm for matrices: $\|P\|=\sqrt{\operatorname{tr}(P^\top P)}$. Also, for a (matrix valued) random variable  $P$, let $\| P \|_{\alpha} = (E\|P\|^{\alpha})^{1/\alpha}$, for any $\alpha >0$. Let $I_m$ the $m\times m$-identity matrix, $0_{a\times b}$ a $a\times b$ vector of zeros (we use this notation only when the dimension is not obvious from the derivations, else we use $0$), and let $K$ denote a generic constant. \\
\indent Let $P_t$ be a matrix of random variables. We define $P_{1\gamma}=E[ P_t \indgam]$, $P =\lim_{\gamma\to\infty}P_{1\gamma}=E[P_t]$, and $P_{2\gamma} = P-P_{1\gamma}$. For example, since $M_{1\gamma}= E[ z_t z_t^\top \indgam]$, we have $M = E[z_t z_t^\top]$ and $M_{2\gamma} = E[z_t z_t' \indinvgam]$. Let their sample equivalents (replacing expectations by averages and unobserved quantities with estimates)  be denoted by hats, for example, for $M_{1\gamma}$, its sample equivalent is $\hat M_{1\gamma}=T^{-1} \suma z_t z_t^\top$, for $M$ it is $\hat M = T^{-1} \sum_{t=1}^T z_t z_t^\top$, for $H_{\epsilon,1\gamma}= E[z_t z_t^\top \epsilon_t^2 \indgam]$, it is  $\hat H_{\epsilon, 1\gamma} = T^{-1} \suma z_t z_t^\top \hat \epsilon_t^2$, where $\hat \epsilon_t$ is an estimate of the residual $\epsilon_t$, and so on. When the notation $\hat P_{i\gamma}$ does not conform with this definition, it is specifically indicated in the text. \\
\indent Throughout the text, quantities of the form $v_tv_t^\top\otimes z_tz_t^\top$ should be read as $(v_tv_t^\top)\otimes (z_tz_t^\top)$. Let $\mathcal{G}_{1\gamma}=(\underset{1\times q}{\mathcal G_{\epsilon,1\gamma}^\top},\underset{1\times qp_1}{\mathcal G_{u,1\gamma}^\top})^\top$ be a $q(p_1+1)$ vector of zero mean Gaussian processes with covariance matrix
$$
H_{1\gamma} = E[(v_tv_t^\top\otimes z_tz_t^\top)\indgam]=\begin{pmatrix}
E[\epsilon_t^2z_tz_t^\top\indgam] & E[\epsilon_tu_t^\top\otimes z_tz_t^\top\indgam]\\
E[\epsilon_tu_t\otimes z_tz_t^\top\indgam] & E[u_tu_t^\top\otimes z_tz_t^\top\indgam]
\end{pmatrix}=\begin{pmatrix}
H_{\epsilon,1\gamma} & H_{\epsilon u,1\gamma}^\top\\
H_{\epsilon u,1\gamma} & H_{u,1\gamma}
\end{pmatrix}
$$
and covariance function $ E[\mathcal{G}_{1\gamma_1}\mathcal{G}_{1\gamma_2}^\top]=E[(v_tv_t^\top\otimes z_tz_t^\top)\mathbf{1}[q_t\leq(\gamma_1\wedge\gamma_2)].$  Hence, $\cG_{\epsilon,1\gamma}$, respectively $\cG_{u,1\gamma}$ corresponds to the parts of $\cG_{1\gamma}$ induced by $\epsilon_t$, respectively $u_t$. Moreover, $\mathcal G=\lim_{\gamma\to\infty}\mathcal G_{1\gamma}$ and $\mathcal G_{2\gamma} = \mathcal G-\mathcal G_{1\gamma}$.\\
\indent Let  $\sigma^2 =E[(\epsilon_t + u_t^\top \theta_x^0)^2]$, $\tilde \theta^0 =\vect(1,\theta_x^0)$ and $\check\theta^0 = \vect(0,\theta_x^0)$. Define $\tilde\epsilon_t=\epsilon_t+(x_t-\hat x_t)^\top\theta_x^0$, where $\hat x_t$ is obtained either with a LFS (linear first stage) or a TFS (threshold first stage) specification, depending on the context. Define $\hat C_{i\gamma} = T^{-1} \sum_{i\gamma} \hat w_t \hat w_t^\top$, and $\hat C = \hat C_{1\gamma} + \hat C_{2\gamma}$, both for a LFS and a TFS.\\
\indent Let $A^0=[\Pi^0,S^\top]^\top$ be the augmented matrix of the LFS slope parameters, where $S=[I_{p_2},0_{p_2\times (q-p_2)}]$. Hence, $z_{1t} = Sz_t$ and $w_t = A^0 z_t+(u_t^\top,0_{1\times q_1})^\top=A^0 z_t+\bar u_t$. Similarly, let $A_1^0=[\Pi_1^0,S^\top]^\top$ and $A_2^0=[\Pi_2^0,S^\top]^\top$ be the augmented matrix of TFS parameters such that $w_t=(A_1^0\indrhoo+A_2^0\indinvrhoo)z_t+\bar u_t$.\\
\indent All convergence results, if not stated otherwise, are uniformly in $\gamma \in \Gamma$, and all $\op(1)$ terms are uniform in $\gamma$. $``{\ind}"$ stands for weak convergence in Skorokhod metric, $``\indb"$ for weak convergence in Skorokhod metric under the bootstrap measure, and $``\inpb"$ for weak convergence in probability under the bootstrap measure.
 

\begin{lemma}\label{lem:1}[ULLN] 
 If (i) $\{a_t\}$ and $\{q_t\}$ are scalar  strictly stationary and $\rho$-mixing series, with mixing coefficient $\rho(m)=\mathcal O(m^{-A})$ for some $A>\frac{a}{a-1}$ and $1<a\leq r$;(ii) $\|a_t\|_r<\infty$ for some $r>1$; (iii) $q_t$ has a continuous distribution, with pdf $f(\cdot)$ bounded: $\sup_{x \in \Gamma} |f(x)|<\infty$, then $\sup_{\gamma \in \Gamma}\left| T^{-1} \sum_{1\gamma} a_t - E[a_t \indgam] \right| \xrightarrow{p} 0$.
\end{lemma}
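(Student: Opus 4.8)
The plan is to reduce to a non-negative summand and then run the classical Glivenko--Cantelli monotonicity argument, which lifts a pointwise (in $\gamma$) law of large numbers to a uniform one; this is essentially the route of \cite{hansen1996}, whose hypotheses the ones here are designed to match.

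\emph{Step 1: reduction.} Write $a_t=a_t^+-a_t^-$. Each $a_t^\pm$ is a measurable function of the strictly stationary $\rho$-mixing pair $(a_t,q_t)$, hence itself strictly stationary and $\rho$-mixing with the same coefficients, and $\|a_t^\pm\|_r\le\|a_t\|_r<\infty$; so by the triangle inequality it suffices to prove the claim for a non-negative series $b_t\ge 0$ with $\|b_t\|_r<\infty$. Put $\Phi_T(\gamma)=T^{-1}\sum_{1\gamma}b_t$ and $\Phi(\gamma)=E[b_t\indgam]$. Since $b_t\ge 0$, both $\gamma\mapsto\Phi_T(\gamma)$ and $\gamma\mapsto\Phi(\gamma)$ are non-decreasing on $\Gamma$, which is a compact interval.

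\emph{Step 2: pointwise convergence and continuity of the limit.} For each fixed $\gamma$, $b_t\indgam$ is strictly stationary, $\rho$-mixing (a measurable function of $(b_t,q_t)$), and integrable, since $E[b_t\indgam]\le E[b_t]<\infty$; hence $\Phi_T(\gamma)\xrightarrow{p}\Phi(\gamma)$ for each fixed $\gamma$ and at the two endpoints of $\Gamma$ — either by ergodicity, which $\rho$-mixing with $\rho(m)\to0$ implies, together with Birkhoff's theorem, or by a direct weak law of large numbers for $\rho$-mixing sequences, for which the rate $A>\tfrac{a}{a-1}$ with $1<a\le r$ supplies the required covariance control. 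Moreover, because $q_t$ has a continuous distribution, $\mathbf{1}[q_t\le\gamma]\to\mathbf{1}[q_t\le\gamma_0]$ almost surely as $\gamma\to\gamma_0$ (the event $\{q_t=\gamma_0\}$ is null), and $0\le b_t\indgam\le b_t\in L^1$, so dominated convergence shows that $\Phi$ is continuous on $\Gamma$, hence uniformly continuous.

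\emph{Step 3: monotone sandwich.} Fix $\varepsilon>0$. By uniform continuity of $\Phi$, choose a grid $\gamma_{\min}=\tau_0<\tau_1<\cdots<\tau_N=\gamma_{\max}$ with $\Phi(\tau_j)-\Phi(\tau_{j-1})<\varepsilon$ for all $j$. For $\gamma\in[\tau_{j-1},\tau_j]$, monotonicity of $\Phi_T$ and $\Phi$ gives
\[
\Phi_T(\tau_{j-1})-\Phi(\tau_j)\ \le\ \Phi_T(\gamma)-\Phi(\gamma)\ \le\ \Phi_T(\tau_j)-\Phi(\tau_{j-1}),
\]
so that $|\Phi_T(\gamma)-\Phi(\gamma)|\le\max_{0\le j\le N}|\Phi_T(\tau_j)-\Phi(\tau_j)|+\varepsilon$. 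Taking the supremum over $\gamma\in\Gamma$ and letting $T\to\infty$, Step~2 applied at the finitely many points $\tau_j$ yields $\sup_{\gamma\in\Gamma}|\Phi_T(\gamma)-\Phi(\gamma)|\le\varepsilon+o_p(1)$; since $\varepsilon>0$ is arbitrary, $\sup_{\gamma\in\Gamma}|\Phi_T(\gamma)-\Phi(\gamma)|\xrightarrow{p}0$. Reassembling the $a_t^+$ and $a_t^-$ parts from Step~1 completes the proof. The one substantive ingredient is the pointwise-in-$\gamma$ law of large numbers of Step~2 under dependence — this is where strict stationarity and the $\rho$-mixing rate enter — while the sign decomposition, the continuity of $\Phi$, and the Glivenko--Cantelli-type sandwich are routine once that is in hand.
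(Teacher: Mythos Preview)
Your proof is correct and follows essentially the same route as the paper's: both arguments are the bracketing/Glivenko--Cantelli scheme of \cite{hansen1996}, using ergodicity (from $\rho$-mixing) for the pointwise law of large numbers and a finite grid to control the increments uniformly. The only cosmetic difference is that the paper cites Hansen's proof verbatim and replaces one step---bounding $E[|a_t|\mathbf{1}[\gamma_k<q_t\le\gamma_{k+1}]]$ via H\"older and the density of $q_t$---while you write the argument out self-containedly, using the sign decomposition $a_t=a_t^+-a_t^-$ to obtain monotonicity and then the uniform continuity of $\Phi$ on compact $\Gamma$ to build the grid; these are equivalent packagings of the same idea.
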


\begin{proof}[\textbf{Proof of Lemma \ref{lem:1}}.] This uniform law of large numbers (ULLN) can be proven using the same steps as the proof of Lemma 1 in \cite{hansen1996}, with a slight modification as we do not assume that $a_t$ has a continuous and bounded pdf. First, note that $\rho$-mixing implies ergodicity. Second, set in the proof of their Lemma 1 $w_t = (a_t, q_t)$,  $\phi(w_t) =a_t$, and $\{w_t \leq \gamma\}=\indgam$. Follow the steps in \cite{hansen1996}, until their equation (15). Then note that $||a_t||_{r}=K<\infty$ by Assumption \ref{assn:1}(c), and for some $\epsilon>0$, set $K_\epsilon = (2K/\epsilon)^{r/(r-1)}$. Since $q_t$ is assumed to have a continuous and bounded pdf, there exists an $\epsilon$ such that $\int_{\gamma_k}^{\gamma_{k+1}} f(x) dx \leq 1/K_{\epsilon} = (\epsilon/(2K))^{r/(1-r)}$.  Therefore, replace equation (15) by the inequality below (derived using H\"older's inequality with $p=r$ and $q=r/(r-1)$): 
\begin{align*}
&E[|a_t| \1[\gamma_k < q_t \leq \gamma_{k+1}]] \leq \|a_t\|_{r} \| \1[\gamma_k < q_t \leq \gamma_{k+1}]]\|_{r/(r-1)} \\
= & K \left(\int_{\gamma_k}^{\gamma_{k+1}} f(x) dx \right)^{(r-1)/r}\leq K (\epsilon/(2K)) = \epsilon/2.
\end{align*}
The rest of the proof is as in \cite{hansen1996}, where only the last equation in their proof should be replaced by:
\begin{align*}
E|f_{\epsilon,k}^u(w_t) - f_{\epsilon,k}^l(w_t)| \leq 2 E[|a_t| \1[\gamma_k < q_t \leq \gamma_{k+1}]] \leq \epsilon.
\end{align*}

\end{proof}

\begin{lemma}\label{lem:2}[FCLT] 
If the assumptions in Lemma \ref{lem:1} hold but with $a_t$ being a vector of m.d.s, and additionally (i) $\|a_t\|_{2r}<\infty$ for some $r>1$; (ii) $E[a_t a_t' \indgam] = F_\gamma$, a p.d. matrix of constants, (iii) $\inf_{\gamma \in \Gamma} \det  F_\gamma>0$, then: $T^{-1/2} \sum_{1\gamma} a_t \Rightarrow \mathcal J_{1\gamma},$ a vector of Gaussian processes with covariance function $ E[a_t a_t^\top \mathbf 1[q_t \leq (\gamma_1 \wedge\gamma_2)]]$. 
\end{lemma}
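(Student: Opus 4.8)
The plan is to establish the functional central limit theorem by combining finite-dimensional convergence (via a martingale CLT argument for $\rho$-mixing m.d.s.) with stochastic equicontinuity (tightness in the Skorokhod topology), following the blueprint of Theorem 3 in \cite{hansen1996}. First I would verify the finite-dimensional distributions: for any finite collection $\gamma_1 < \dots < \gamma_k$ in $\Gamma$, the stacked vector $T^{-1/2}(\sum_{1\gamma_1} a_t, \dots, \sum_{1\gamma_k} a_t)^\top$ converges to a centered Gaussian vector. Since each $a_t \mathbf{1}[q_t \leq \gamma_j]$ is itself an m.d.s. with respect to $\mathfrak{F}_t$ (because $q_t$ is $\mathfrak{F}_t$-measurable and $a_t$ is an m.d.s.), and since $(a_t, q_t)$ is strictly stationary and $\rho$-mixing with summable mixing coefficients under the assumptions of Lemma \ref{lem:1}, a standard CLT for stationary ergodic martingale differences applies. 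The moment bound $\|a_t\|_{2r} < \infty$ for $r > 1$ supplies the uniform integrability of the squared terms needed for the conditional-variance convergence; indeed, by the ULLN of Lemma \ref{lem:1} applied to each product $a_{t,i} a_{t,j} \mathbf{1}[q_t \leq \gamma]$ (which has an $r$-th moment by Cauchy-Schwarz since $a_t$ has a $2r$-th moment), $T^{-1}\sum_{t=1}^T a_t a_t^\top \mathbf{1}[q_t \le \gamma] \xrightarrow{p} F_\gamma$ uniformly, pinning down the limiting covariance function as $E[a_t a_t^\top \mathbf{1}[q_t \le \gamma_1 \wedge \gamma_2]]$.

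The second and more delicate step is tightness. Here I would appeal directly to the moment-based sufficient condition for tightness of partial-sum processes indexed by $\gamma$, exactly as in \cite{hansen1996}: it suffices to bound, for $\gamma_1 \le \gamma \le \gamma_2$, a quantity of the form $E\big[\|T^{-1/2}\sum_{t}a_t(\mathbf{1}[q_t \le \gamma] - \mathbf{1}[q_t \le \gamma_1])\|^2 \,\|T^{-1/2}\sum_{t}a_t(\mathbf{1}[q_t \le \gamma_2] - \mathbf{1}[q_t \le \gamma])\|^2\big]$ by $K\,(G(\gamma_2) - G(\gamma_1))^2$ for some continuous nondecreasing $G$. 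Using the m.d.s. structure to kill cross-terms, the inner sums of squares reduce to quantities controlled by $E[\|a_t\|^2 \mathbf{1}[\gamma_1 < q_t \le \gamma_2]]$, and by Hölder's inequality (with exponents $r$ and $r/(r-1)$) together with the bounded, continuous density of $q_t$ this is bounded by $\|a_t\|_{2r}^2 \big(\int_{\gamma_1}^{\gamma_2} f(x)\,dx\big)^{(r-1)/r}$; the $\rho$-mixing with the rate $\rho(m) = \mathcal{O}(m^{-A})$, $A > a/(a-1)$, is what makes the cross-time contributions summable so that the variance of the partial sum is $\mathcal{O}(T)$ rather than larger. Taking $G(\gamma) = C\int_{-\infty}^{\gamma} f(x)\,dx$, which is continuous because $f$ is bounded, gives the required modulus-of-continuity bound and hence tightness in the Skorokhod space.

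The main obstacle is the tightness/stochastic-equicontinuity step under $\rho$-mixing rather than independence: one must carefully control the higher-order mixed moments of $a_t \mathbf{1}[\gamma_1 < q_t \le \gamma_2]$ across both the time index $t$ and the threshold index $\gamma$ simultaneously, ensuring the mixing rate is strong enough that the relevant fourth-order sums are $\mathcal{O}(T^2 (G(\gamma_2)-G(\gamma_1))^2)$. This is where the precise strengthening $\|a_t\|_{2r} < \infty$ over the $\|a_t\|_r$ of Lemma \ref{lem:1}, and the quantitative mixing rate, are both used; the argument parallels \cite{hansen1996} almost verbatim once one observes that replacing his continuous-pdf assumption on $a_t$ by the Hölder bound (as already done in the proof of Lemma \ref{lem:1}) goes through unchanged. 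With finite-dimensional convergence and tightness both in hand, the FCLT $T^{-1/2}\sum_{1\gamma} a_t \Rightarrow \mathcal{J}_{1\gamma}$ with the stated covariance function follows from the standard weak-convergence criterion in $D[\Gamma]$, and the positive-definiteness conditions (ii)--(iii) guarantee the limit process is well-defined and non-degenerate.
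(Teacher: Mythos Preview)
Your proposal is correct and follows essentially the same route as the paper: both invoke Theorem~3 (tightness/stochastic equicontinuity) and then Theorem~1 (weak convergence) of \cite{hansen1996}, with the only wrinkle being that the moment condition $\|a_t\|_{2r}<\infty$ replaces Hansen's assumption and that the H\"older bound from the proof of Lemma~\ref{lem:1} substitutes for the continuous-pdf requirement on $a_t$. The paper simply cites these two theorems in one sentence, whereas you have unpacked the finite-dimensional convergence plus tightness argument that underlies them; the content is the same.
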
 
\begin{proof}[\textbf{Proof of Lemma \ref{lem:2}}.] This functional central limit theorem (FCLT) follows directly from Theorem 3 and then Theorem 1 in \cite{hansen1996}. Note that only $\|a_t\|_{2r}<\infty$ is needed, as evident from replacing $x_t \epsilon_t$ with $a_t$ in the first two equations of the proof of Theorem 3 in \cite{hansen1996}.
 \end{proof}
Note that, Lemmas \ref{lem:1} and \ref{lem:2} imply that $\sup_{\gamma\in\Gamma}|T^{-1}\sum_{2\gamma}a_t-E[a_t\indinvgam]|\inp0$, and $T^{-1/2}\sum_{2\gamma}a_t\Rightarrow \cJ_{2\gamma}$ since $\sum_{2\gamma}(\cdot)=\sum_{t=1}^T(\cdot)-\sum_{1\gamma}(\cdot)$ by definition.
\begin{lemma}\label{lem:3}
 Suppose Assumption \ref{assn:1} holds. Then: $(i)$  $T^{-1}\sum_{1\gamma}z_t z_t^\top \inp M_{1\gamma}$ and $T^{-1}\sum_{2\gamma}z_t z_t^\top \inp M_{2\gamma}$, and  $(ii)$ $T^{-1/2}\sum_{i\gamma} v_t \otimes z_t \Rightarrow \mathcal G_{i\gamma}$. \\
If, additionally, Assumption \ref{assn:2} holds, then, under the bootstrap measure: $(iii)$ $T^{-1} \sum_{i\gamma} z_t z_t^\top \eta_t = \opb(1)$ and $(iv)$ $T^{-1/2} \sum_{i\gamma} v_t \eta_t\,  \otimes z_t \indb \mathcal G_{i\gamma}$.
\end{lemma}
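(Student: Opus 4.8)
The plan is to read off $(i)$ and $(ii)$ directly from Lemmas \ref{lem:1} and \ref{lem:2} by checking their hypotheses, and to obtain the bootstrap statements $(iii)$--$(iv)$ from conditional versions of the same arguments together with a maximal inequality that exploits the monotone structure of $\indgam$. In all four cases the $i=2$ statement follows from the $i=1$ statement because $\sum_{2\gamma}(\cdot)=\sum_{t=1}^T(\cdot)-\sum_{1\gamma}(\cdot)$, exactly as in the remark following Lemma \ref{lem:2}. For $(i)$, apply Lemma \ref{lem:1} entrywise with $a_t$ a generic component of $z_tz_t^\top$: strict stationarity and $\rho$-mixing at the required rate hold by Assumption \ref{assn:1}(b); $\|z_tz_t^\top\|\le\|z_t\|^2$, so $\|z_tz_t^\top\|_r<\infty$ by Assumption \ref{assn:1}(c); and $q_t$ has a continuous bounded density by Assumption \ref{assn:1}(e). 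For $(ii)$, apply Lemma \ref{lem:2} with $a_t=v_t\otimes z_t$: it is an m.d.s.\ relative to $\mathfrak F_t$ since $E[v_t\mid\mathfrak F_t]=0$ by Assumption \ref{assn:1}(a) and $z_t$ is $\mathfrak F_t$-measurable; $\|v_t\otimes z_t\|=\|v_t\|\,\|z_t\|$, so Cauchy--Schwarz with Assumption \ref{assn:1}(c) gives $\|v_t\otimes z_t\|_{2r}<\infty$; and $E[(v_t\otimes z_t)(v_t\otimes z_t)^\top\indgam]=H_{1\gamma}$, which is positive definite with determinant bounded away from zero over $\Gamma$ by Assumption \ref{assn:1}(f) together with continuity of $f$ and $M_{1\gamma}$. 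The limiting covariance function then matches that of $\mathcal G_{1\gamma}$.

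For $(iii)$, condition on the data and order the observations by the value of $q_t$. Then $S_\gamma:=\sum_{t:q_t\le\gamma}z_tz_t^\top\eta_t$ is, along this ordering, a martingale of conditionally independent mean-zero increments (Assumption \ref{assn:2}(a)), so Doob's $L^2$ maximal inequality yields $E^b[\sup_{\gamma\in\Gamma}\|T^{-1}S_\gamma\|^2]\le K\,T^{-2}\sum_{t=1}^T\|z_t\|^4$. Since $E\|z_t\|^4<\infty$ by Assumption \ref{assn:1}(c) and $\rho$-mixing implies ergodicity, $T^{-1}\sum_{t=1}^T\|z_t\|^4=O_p(1)$, so this bound is $O_p(T^{-1})$; a conditional Markov inequality then gives $\sup_{\gamma}\|T^{-1}\sum_{i\gamma}z_tz_t^\top\eta_t\|=\opb(1)$.

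For $(iv)$, I would establish a conditional FCLT for $T^{-1/2}\sum_{1\gamma}v_t\eta_t\otimes z_t$. Conditionally on the data the summands are independent across $t$ and mean zero, and, using $E^b\eta_t^2=1$, the conditional covariance function is $T^{-1}\sum_{t=1}^T(v_tv_t^\top\otimes z_tz_t^\top)\mathbf{1}[q_t\le\gamma_1\wedge\gamma_2]$, which converges in probability, uniformly, to $E[(v_tv_t^\top\otimes z_tz_t^\top)\mathbf{1}[q_t\le\gamma_1\wedge\gamma_2]]$ -- the covariance function of $\mathcal G_{1\gamma}$ -- by Lemma \ref{lem:1} applied to the entries of $v_tv_t^\top\otimes z_tz_t^\top$, whose $r$-th moment is finite by Cauchy--Schwarz and Assumption \ref{assn:1}(c). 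Finite-dimensional convergence then follows from a Lyapunov CLT for conditionally independent arrays with exponent $4$: $T^{-2}\sum_{t=1}^T E^b\|v_t\eta_t\otimes z_t\|^4=(E^b\eta_t^4)\,T^{-2}\sum_{t=1}^T\|v_t\|^4\|z_t\|^4=O_p(T^{-1})$ by Assumptions \ref{assn:2}(a)--(b) and the ergodic theorem. Combining these with tightness of the partial-sum process gives $T^{-1/2}\sum_{1\gamma}v_t\eta_t\otimes z_t\indb\mathcal G_{1\gamma}$.

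The main obstacle is precisely the tightness/stochastic equicontinuity step in $(iv)$: proving that the bootstrap partial-sum process is asymptotically equicontinuous in the Skorokhod metric, uniformly, under a data-dependent probability measure. The structural feature that makes this tractable is that the only source of index dependence is the monotone indicator $\indgam$, so the relevant function class is VC of index one and standard chaining/maximal-inequality arguments carry over conditionally on the data; this is exactly the route used for the fixed-regressor bootstrap FCLT in \cite{hansen1996}. The strengthened moment conditions in Assumption \ref{assn:2}(b) are what allow the conditional moment bounds on process increments $\gamma_1<q_t\le\gamma_2$ to be pushed through uniformly in $\gamma$.
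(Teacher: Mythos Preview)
Your proposal is correct and, for parts $(i)$ and $(ii)$, essentially identical to the paper's: both verify the hypotheses of Lemmas~\ref{lem:1} and~\ref{lem:2} entrywise and use $\|v_t\otimes z_t\|_{2r}\le\|v_t\|_{4r}\|z_t\|_{4r}$.

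For $(iii)$ and $(iv)$ the routes differ slightly. For $(iii)$ the paper simply applies Chebyshev's inequality at each fixed $\gamma$, bounding $P^b(\|T^{-1}\sum_{1\gamma}z_tz_t^\top\eta_t\|>K)$ by $K^{-2}T^{-1}$ times a quantity that is $O_p(1)$ uniformly in $\gamma$ via Lemma~\ref{lem:1}. Your Doob maximal-inequality argument after reordering by $q_t$ is different and arguably cleaner, since it delivers the uniform-in-$\gamma$ statement directly rather than relying on a pointwise bound that happens to be uniform. For $(iv)$ the paper takes a shorter path: it invokes Lemma~\ref{lem:2} (i.e.\ Hansen's FCLT, whose proof already covers the fixed-regressor bootstrap) after checking $E^b[v_t\eta_t\otimes z_t]=0$, the moment condition $(E^b\|v_t\eta_t\otimes z_t\|^{r})^{1/r}=O_p(1)$, and $\Var^b(T^{-1/2}\sum_{1\gamma}v_t\eta_t\otimes z_t)=T^{-1}\sum_{1\gamma}v_tv_t^\top\otimes z_tz_t^\top\inp H_{1\gamma}$. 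Your explicit fidi-plus-tightness decomposition is a valid alternative that unpacks exactly what that invocation is doing; the Lyapunov and chaining arguments you sketch are precisely the ingredients behind Hansen's result, and your use of Assumption~\ref{assn:2}(b) to secure the higher moments needed for the conditional increment bounds is appropriate. The paper's approach is more economical; yours makes the mechanics more transparent.
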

\begin{proof}[\textbf{Proof of Lemma \ref{lem:3}}]
Part $(i)$ follows from Assumptions \ref{assn:1} $(b)$, $(c)$ and $(e)$, ensuring that the assumptions of Lemma \ref{lem:1} are satisfied for elements of $z_t z_t^\top$. Hence, $ T^{-1}\sum_{i\gamma} z_t z_t^\top \inp M_{i\gamma}$. \\
\indent Part $(ii)$ follows from Assumptions \ref{assn:1} $(a)$--$(f)$, ensuring that the conditions of Lemma \ref{lem:2} are satisfied since $\|v_t \otimes z_t\|_{2r} \leq \|v_t \|_{4r}  \|z_t \|_{4r}<\infty$. So, $ T^{-1/2}\suma v_t \otimes z_t \ind \mathcal{G}_{1\gamma}$. \\
\indent We will show Parts $(iii)$ and $(iv)$ only for $i=1$; for $i=2$, the proofs follow the same arguments by noting that, for example, $\sum_{2\gamma}z_tz_t^\top\eta_t=\sum_{t=1}^Tz_tz_t^\top\eta_t-\sum_{1\gamma}z_tz_t^\top\eta_t$. To prove Part $(iii)$, note that $E^b[T^{-1}\sum_{1\gamma}z_tz_t^\top\eta_t]=0$. Hence, Chebyshev's and Minkowski's inequalities imply $P^b(\|T^{-1}\sum_{1\gamma}z_tz_t^\top\eta_t\|>K)\leq K^{-2}T^{-1}(T^{-1}\sum_{1\gamma}\|z_t\|^4+(E^b|\eta_t|)^2T^{-1}\sum_{1\gamma,t\neq s}\|z_t\|^2\|z_s\|^2)=K^{-2}T^{-1}(O_p(1)+O(1)O_p(1))=o_p(1)$, where the second to last equality follows from Assumptions \ref{assn:1} $(b)$, $(c)$, $(e)$, \ref{assn:2} $(a)$ and Lemma \ref{lem:1}. So, $T^{-1}\sum_{1\gamma}  z_t z_t^\top \eta_t=\op^b(1)$.\\
\indent Finally, to prove Part $(iv)$ for $i=1$, we apply Lemma \ref{lem:2} and verify that $Var^b(T^{-1/2} \sum_{1\gamma} v_t \eta_t \otimes z_t)\inp H_{1\gamma}$. First, $E^b(v_t \eta_t\,  \otimes z_t)= 0$. Conditions $(ii)$ and $(iii)$ in Lemma \ref{lem:2} are satisfied by Assumption \ref{assn:1} $(b)$, $(c)$, $(e)$ and Assumption \ref{assn:2} $(a)$. Condition $(i)$ is satisfied because $(E^b\| v_t \eta_t \otimes z_t \|^{r})^{1/(r)} = (E^b|\eta_t|^{r})^{1/(r)} \|v_t\|\|z_t||=\Op(1)$ by Assumptions \ref{assn:1} $(c)$ and \ref{assn:2}. Finally, $Var^b( T^{-1/2} \suml_{1\gamma} v_t \eta_t\,  \otimes z_t) = T^{-1} \suml_{1\gamma} v_t v_t^\top \otimes z_t z_t^\top \inp H_{1\gamma},$ where the last statement follows by applying Lemma \ref{lem:1}. So, by Lemma \ref{lem:2}, $T^{-1/2} \sum_{1\gamma} v_t \eta_t\,  \otimes z_t \indb \mathcal G_{1\gamma}$.
\end{proof}
\subsection{Proofs of GMM Results}
\subsubsection{Asymptotic Distribution of GMM}
In order to simplify exposition for these proofs, define the quantity
$$
\tilde A=\begin{cases}
A^0 & \text{if LFS}\\
A_1^0\indrhoo+A_2^0\indinvrhoo & \text{if TFS}
\end{cases}.
$$
Hence, $w_t=\tilde Az_t+\bar u_t$.
\begin{proof}[\textbf{Proof of Theorem \ref{theo:GMM.1}}]
\indent First, we show that $\hat N_{i\gamma}\xrightarrow{p}N_{i\gamma}$. Note, $\hat N_{i\gamma} = T^{-1}\sum_{i\gamma}w_tz_t^\top=\tilde A\Big(T^{-1}\sum_{i\gamma}z_tz_t^\top\Big)+T^{-1}\sum_{i\gamma}\bar u_tz_t^\top=\tilde AM_{i\gamma}+o_p(1)=N_{i\gamma}$, where the second to last equality follows from Lemma \ref{lem:3} $(i)$--$(ii)$ and the last equality from Assumption \ref{assn:1} $(a)$ implying that $N_{1\gamma}=\mathbb E[w_tz_t^\top\1\{q_t\leq\gamma\}]=\mathbb E[(A^0z_t+\bar u_t)z_t^\top\1\{q_t\leq\gamma\}]=A^0\mathbb E[z_tz_t^\top\1\{q_t\leq\gamma\}]=A^0M_{1\gamma}$ and, similarly, $N_{2\gamma}=A^0M_{2\gamma}$ in case of a LFS. In case of a TFS, some more algebra provides the appropriate result.\\
\indent Next, we show that $\hat H_{\epsilon,i\gamma}\xrightarrow{p}H_{\epsilon,i\gamma}$. Note that $\hat H_{\epsilon,i\gamma}=T^{-1}\sum_{i\gamma}z_tz_t^\top\hat\epsilon_t^2$ where $\hat\epsilon_t=y_t-w_t^\top\hat\theta_{(1)}$. Hence, $\hat H_{\epsilon,i\gamma}=T^{-1}\sum_{i\gamma}z_tz_t^\top\epsilon_t^2-2[T^{-1}\sum_{i\gamma}z_tz_t^\top\otimes\epsilon_tw_t^\top][I_q\otimes(\hat\theta_{(1)}-\theta^0)]+[I_q\otimes(\hat\theta_{(1)}-\theta^0)^\top][T^{-1}\sum_{i\gamma}z_tz_t^\top\otimes w_tw_t^\top][I_q\otimes (\hat\theta_{(1)}-\theta^0)]$. By Assumptions 4.1 $(b)$, $(c)$, $(e)$ and Lemma \ref{lem:1} it holds that $T^{-1}\sum_{i\gamma}z_tz_t^\top\epsilon_t^2=H_{\epsilon,i\gamma}+o_p(1)$. Moreover, for any $r>1$, $\|z_tz_t^\top\otimes w_tw_t^\top\|_r\leq\|I_q\otimes \tilde A\|^2\cdot\|z_tz_t^\top\otimes z_tz_t^\top\|_r+2\|I_q\otimes\tilde A\|\cdot\|z_tz_t^\top\otimes z_t\bar u_t^\top\|_r+\|z_tz_t^\top\otimes\bar u_t\bar u_t^\top\|_r$ by Minkowski's inequality and sub-multiplicativity of the Frobenius norm. Using H\"older's inequality and Assumption 4.1 $(c)$ it follows that $\|z_tz_t^\top\otimes z_tz_t^\top\|_r\leq\|z_t\|_{4r}^4<\infty$, $\|z_tz_t^\top\otimes z_t\bar u_t^\top\|_r\leq\|z_t\|_{4r}^3\|\bar u_t\|_{4r}<\infty$ and $\|z_tz_t^\top\otimes\bar u_t\bar u_t^\top\|_r \leq\|z_t\|_{4r}^2\|\bar u_t\|_{4r}^2<\infty.$ Additionally, $\|I_q\otimes\tilde A\|=O(1)$. Hence, $T^{-1}\sum_{i\gamma}z_tz_t^\top\otimes w_tw_t^\top=O_p(1)$ by Lemma \ref{lem:2}. By standard arguments for GMM estimators of linear models $\hat\theta_{(1)}-\theta^0=O_p(T^{-1/2})$. Therefore, $[I_q\otimes(\hat\theta_{(1)}-\theta)^\top][T^{-1}\sum_{i\gamma}z_tz_t^\top\otimes w_tw_t^\top][I_q\otimes (\hat\theta_{(1)}-\theta^0)]=O_p(T^{-1/2})O_p(1)O_p(T^{-1/2})=o_p(1)$. Similarly, $[T^{-1}\sum_{i\gamma}z_tz_t^\top\otimes\epsilon_tw_t^\top][I_q\otimes(\hat\theta_{(1)}-\theta^0)]=o_p(1)$. Hence, $\hat H_{\epsilon,i\gamma}=H_{\epsilon,i\gamma}+o_p(1)$.\\
\indent Last, we show that $T^{1/2}(\hat\theta_{i\gamma}-\theta^0)\Rightarrow\mathcal (N_{i\gamma}H_{\epsilon,i\gamma}^{-1}N_{i\gamma}^\top)^{-1}N_{i\gamma}H_{\epsilon,i\gamma}^{-1}\mathcal G_{\epsilon,i\gamma}$. Since $\hat\theta_{i\gamma}=\theta^0+(\hat N_{i\gamma}\hat H_{\epsilon,i\gamma}^{-1}\hat N_{i\gamma}^\top)^{-1}$ $(\hat N_{i\gamma}\hat H_{\epsilon,i\gamma}^{-1}[T^{-1}\sum_{i\gamma}z_t\epsilon_t])$, $T^{-1/2}\sum_{i\gamma}z_t\epsilon_t\Rightarrow\mathcal G_{\epsilon,i\gamma}$ by Lemma \ref{lem:3} $(ii)$, parts $(i)$ and $(ii)$ above, the continuous mapping theorem and Slutksy's theorem it follows that $T^{1/2}(\hat\theta_{i\gamma}-\theta^0)\Rightarrow(N_{i\gamma}H_{\epsilon,i\gamma}^{-1}N_{i\gamma}^\top)^{-1}N_{i\gamma}H_{\epsilon,i\gamma}^{-1}\mathcal G_{\epsilon,i\gamma}$.\\
\indent Using parts $(i)$--$(iii)$ and continuous mapping theorem and Slutsky's theorem concludes the proof.
\end{proof}
\subsubsection{Bootstrap Validity for GMM Tests}
\begin{lemma}\label{lem:GMM.1}
Suppose Assumptions \ref{assn:1} and \ref{assn:2} hold. Then, under $\mathbb H_0$ and for $i=1,2$, $(i)$ $T^{-1/2}\sum_{i\gamma}z_ty_t^b\xRightarrow{d^b_p}\cG_{\epsilon,i\gamma}$, $(ii)$ $T^{1/2}\hat\theta_{(1)}^b=O_p^b(1)$, and $(iii)$ $\hat H_{\epsilon,i\gamma}^b\xrightarrow{p^b}H_{\epsilon,i\gamma}$ 
\end{lemma}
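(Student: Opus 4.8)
\noindent\textbf{Proof proposal (Lemma \ref{lem:GMM.1}).} The plan is to exploit the linearity of the modified bootstrap DGP. Under $\mathbb H_0$ we have $y_t=w_t^\top\theta^0+\epsilon_t$, so $\hat\epsilon_{t,(2)}=\epsilon_t-w_t^\top(\hat\theta_{(2)}-\theta^0)$ and hence the bootstrap response is $y_t^b=\hat\epsilon_{t,(2)}\eta_t=\epsilon_t\eta_t-w_t^\top(\hat\theta_{(2)}-\theta^0)\eta_t$. Writing $w_t=\tilde Az_t+\bar u_t$ as in the proof of Theorem \ref{theo:GMM.1}, and using the standard GMM rates $\hat\theta_{(1)}-\theta^0=O_p(T^{-1/2})$ and $\hat\theta_{(2)}-\theta^0=O_p(T^{-1/2})$ established there, the whole argument reduces to controlling $\eta_t$-weighted partial sums of products of $z_t$, $\bar u_t$ and $\epsilon_t$, for which Lemma \ref{lem:3} parts $(iii)$--$(iv)$ and Chebyshev/Minkowski bounds in the bootstrap measure (as in the proof of Lemma \ref{lem:3}$(iii)$) are the tools.

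For $(i)$, I would decompose
\[
T^{-1/2}\sum_{i\gamma}z_ty_t^b=T^{-1/2}\sum_{i\gamma}z_t\epsilon_t\eta_t-\Big(T^{-1}\sum_{i\gamma}z_tw_t^\top\eta_t\Big)\,T^{1/2}(\hat\theta_{(2)}-\theta^0).
\]
The first term is the $\epsilon$-block of $T^{-1/2}\sum_{i\gamma}v_t\eta_t\otimes z_t$, which $\xRightarrow{d^b_p}\mathcal G_{\epsilon,i\gamma}$ by Lemma \ref{lem:3}$(iv)$. In the second term, $z_tw_t^\top=z_tz_t^\top\tilde A^\top+z_t\bar u_t^\top$; the first piece is $o_p^b(1)$ by Lemma \ref{lem:3}$(iii)$ together with $\tilde A=O(1)$, and $T^{-1}\sum_{i\gamma}z_t\bar u_t^\top\eta_t=o_p^b(1)$ by the same Chebyshev argument as in Lemma \ref{lem:3}$(iii)$ using $\|z_t\|_{4r},\|\bar u_t\|_{4r}<\infty$ from Assumption \ref{assn:1}$(c)$. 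Since $T^{1/2}(\hat\theta_{(2)}-\theta^0)=O_p(1)$, the second term is $o_p^b(1)$ uniformly in $\gamma$, giving $(i)$. For $(ii)$, note that the wild fixed-regressor bootstrap leaves $w_t$ and $z_t$ unchanged, so $\hat\theta_{(1)}^b=(\hat N\hat M^{-1}\hat N^\top)^{-1}\hat N\hat M^{-1}\big(T^{-1}\sum_{t=1}^Tz_ty_t^b\big)$ with the same $\hat N$ and $\hat M$ as in the data. By $(i)$ applied over the full sample, $T^{-1/2}\sum_{t=1}^Tz_ty_t^b\xRightarrow{d^b_p}\mathcal G_\epsilon=O_p^b(1)$, while $\hat N\xrightarrow{p}N$ and $\hat M\xrightarrow{p}M$ with $NM^{-1}N^\top$ nonsingular under Assumption \ref{assn:1}; hence $T^{1/2}\hat\theta_{(1)}^b=O_p^b(1)$.

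For $(iii)$, write $\hat\epsilon_{t,(1)}^b=y_t^b-w_t^\top\hat\theta_{(1)}^b=\epsilon_t\eta_t-w_t^\top(\hat\theta_{(2)}-\theta^0)\eta_t-w_t^\top\hat\theta_{(1)}^b$ and expand $\hat H_{\epsilon,i\gamma}^b=T^{-1}\sum_{i\gamma}\hat\epsilon_{t,(1)}^{b\,2}z_tz_t^\top$ into the leading term $T^{-1}\sum_{i\gamma}\epsilon_t^2\eta_t^2z_tz_t^\top$ plus cross and quadratic terms. Every cross/quadratic term carries at least one factor $\hat\theta_{(2)}-\theta^0=O_p(T^{-1/2})$ or $\hat\theta_{(1)}^b=O_p^b(T^{-1/2})$ (by $(ii)$) multiplying a $\gamma$-indexed average whose bootstrap mean is a sample average covered by the ULLN of Lemma \ref{lem:1} and whose bootstrap variance is $o_p(1)$, so these terms are $o_p^b(1)$ uniformly in $\gamma$. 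For the leading term, $E^b\big[T^{-1}\sum_{i\gamma}\epsilon_t^2\eta_t^2z_tz_t^\top\big]=T^{-1}\sum_{i\gamma}\epsilon_t^2z_tz_t^\top\xrightarrow{p}H_{\epsilon,i\gamma}$ uniformly in $\gamma$ by Lemma \ref{lem:1}, and the elementwise bootstrap variance equals $T^{-1}(E^b[\eta_t^4]-1)\,T^{-1}\sum_{i\gamma}\epsilon_t^4z_{tj}^2z_{tk}^2=o_p(1)$; finiteness of $E[\epsilon_t^4\|z_t\|^4]$, and of the analogous moments appearing in the cross/quadratic terms (which involve up to sixth powers of $z_t$ through $w_t$), is precisely what is delivered by the $8r$-th moment conditions of Assumption \ref{assn:2}$(b)$ via H\"older's inequality, while $E^b[\eta_t^4]<\infty$ is Assumption \ref{assn:2}$(a)$. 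A pointwise Chebyshev bound in the bootstrap measure, combined with a grid/chaining argument over $\gamma$ as in the proof of Lemma \ref{lem:1}, then upgrades this to $\hat H_{\epsilon,i\gamma}^b\xrightarrow{p^b}H_{\epsilon,i\gamma}$.

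The main obstacle is part $(iii)$: keeping track of the several cross terms produced by squaring $\hat\epsilon_{t,(1)}^b$ and, more substantively, establishing the bootstrap-variance bounds uniformly in $\gamma$. Because the $\eta_t^2$-weighted averages are not directly of the form handled by the ULLN, one must combine a pointwise Chebyshev inequality in the bootstrap measure with a chaining argument over $\gamma$ in the spirit of \cite{hansen1996}; this is also the step that genuinely requires the strengthened moment bounds of Assumption \ref{assn:2}$(b)$ rather than the $4r$-th moments of Assumption \ref{assn:1}$(c)$.
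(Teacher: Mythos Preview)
Your proposal is correct and follows essentially the same route as the paper's proof: the same decomposition of $y_t^b$ in part $(i)$, the same use of $\hat N,\hat M$ and part $(i)$ to obtain $(ii)$, and the same expansion of $\hat\epsilon_{t,(1)}^b$ into a leading term $T^{-1}\sum_{i\gamma}z_tz_t^\top\epsilon_t^2\eta_t^2$ plus negligible cross/quadratic pieces for $(iii)$. If anything, your treatment of the leading term is cleaner than the paper's---you correctly exploit that the off-diagonal $E^b[(\eta_t^2-1)(\eta_s^2-1)]$ vanish, whereas the paper bounds by absolute values---and you are more explicit than the paper about why uniformity in $\gamma$ requires a chaining step on top of the pointwise Chebyshev bound; the paper simply invokes Lemma~\ref{lem:1} at that point.
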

\begin{proof}[\textbf{Proof of Lemma \ref{lem:GMM.1}}]
\indent \textit{Part $(i)$}. Note  $T^{-1/2}\sum_{i\gamma}z_ty_t^b=T^{-1/2}\sum_{i\gamma}z_t\eta_t\epsilon_t-\Big(T^{-1}\sum_{i\gamma}z_tw_t^\top\eta_t\Big)\Big(T^{1/2}[\hat\theta_{(2)}-\theta^0]\Big)$. From Lemma \ref{lem:3} $(iii)$, respectively standard arguments for GMM estimators for linear models it follows that $T^{-1/2}\sum_{i\gamma}z_t\eta_t\epsilon_t\xRightarrow{d^b_p}\cG_{\epsilon,i\gamma}$, respectively $T^{1/2}[\hat\theta_{(2)}-\theta^0]=O_p(1)$. Moreover, $T^{-1}\sum_{i\gamma}z_tw_t^\top\eta_t =\Big(T^{-1}\sum_{i\gamma}z_tz_t^\top\eta_t\Big)\tilde A^{\top}+T^{-1}\sum_{i\gamma}z_t\bar u_t^\top\eta_t=o_p^b(1)O(1)+o_p^b(1)=o_p^b(1)$ by Lemma \ref{lem:3} $(iii)$--$(iv)$ and since $\tilde A=O(1)$. Thus, $T^{-1/2}\sum_{i\gamma}^Tz_ty_t^b\xRightarrow{d^b_p}\mathcal G_{\epsilon,i\gamma}$ under the null hypothesis.\\
\indent \textit{Part $(ii)$}. Recall that $\hat\theta_{(1)}=(\hat N\hat M^{-1}\hat N^\top)^{-1}(\hat N\hat M^{-1}\sum_{t=1}^Tz_ty_t^b)$ where $\hat M\xrightarrow{p}M$ by Lemma \ref{lem:3}, $\hat N\xrightarrow{p}N$ as shown in the proof of Theorem \ref{theo:GMM.1} and $T^{-1/2}\sum_{t=1}^Tz_ty_t^b\xRightarrow{d_p^b}\cG_\epsilon$ by part $(i)$. Hence, $T^{1/2}\hat\theta_{(1)}^b=O_p^b(1)$.\\
\indent \textit{Part (iii)}. Note that $\hat\epsilon_t^b=\epsilon_t\eta_t-w_t^\top(\hat\theta_{(2)}-\theta^0)\eta_t-w_t^\top\hat\theta_{(1)}^b$. Hence,
\begin{eqnarray}
\hat H_{\epsilon,i\gamma}^b & = & T^{-1}\sum_{i\gamma}z_tz_t^\top\Big(\epsilon_t^2\eta_t^2+w_t^\top(\hat\theta_{(2)}-\theta^0)(\hat\theta_{(2)}-\theta^0)^\top w_t\eta_t^2+w_t^\top\hat\theta_{(1)}^b\hat\theta_{(1)}^{b\top}w_t\nonumber\\
& & \qquad\qquad-2\epsilon_t\eta_t^2w_t^\top(\hat\theta_{(2)}-\theta^0)-2\epsilon_t\eta_tw_t^\top\hat\theta_{(1)}^b+2\eta_tw_t^\top(\hat\theta_{(2)}-\theta^0)\hat\theta_{(1)}^{b\top}w_t\Big)\nonumber\\
& = & I+II+III-2IV-2V+2VI.\label{equ:PL4.1}
\end{eqnarray}
We proceed by first showing that $II,...,VI=o_p^b(1)$ and then that $I=H_{\epsilon,i\gamma}+o_p^b(1)$.\\
\indent Note that $II=\Big(I_q\otimes(\hat\theta_{(2)}-\theta^0)^\top\Big)\Big[T^{-1}\sum_{i\gamma}z_tz_t^\top\otimes w_tw_t^\top\eta_t^2\Big]\Big(I_q\otimes(\hat\theta_{(2)}-\theta^0)\Big)$. Using Markov's inequality, the term in brackets satisfies $P^b(\|T^{-1}\sum_{i\gamma}z_tz_t^\top\otimes w_tw_t^\top\eta_t^2\|\geq K)\leq K^{-1}T^{-1}\sum_{i\gamma}^T\|z_tz_t^\top\otimes w_tw_t^\top\|=O_p(1)$ where the last equality was shown in the proof of Theorem \ref{theo:GMM.1}. Hence, $T^{-\alpha}[T^{-1}\sum_{i\gamma}^Tz_tz_t^\top\otimes w_tw_t^\top\eta_t^2]=o_p^b(1)$ for any $\alpha>0$. By standard arguments for GMM-estimators in linear models, $\hat\theta_{(2)}-\theta^0=O_p(T^{-1/2})$ under Assumption 4.1. Hence, $II=O_p(T^{-1/2})o_p^b(T^\alpha)O_p(T^{-1/2})=T^{\alpha-1} O_p(1)o_p^b(1)O_p(1)=o_p^b(1)$ for $0<\alpha\leq1$.\\
\indent Moreover, $III=\Big(I_q\otimes\hat\theta_{(1)}^{b\top}\Big)\Big[T^{-1}\sum_{i\gamma}z_tz_t^\top\otimes w_tw_t^\top\Big]\Big(I_q\otimes\hat\theta_{(1)}^b\Big).$ By part $(ii)$, $T^{1/2}\hat\theta_{(1)}^b=O_p^b(1)$. As shown above, $T^{-1}\sum_{i\gamma}z_tz_t^\top\otimes w_tw_t^\top=O_p(1)$. Hence, $III=O_p^b(T^{-1/2})O_p(1)O_p^b(T^{-1/2})=o_p^b(1)$. \\
\indent By the same arguments as for terms $II$ and $III$ above, $IV=O_p(T^{-1/2})o_p^b(T^\alpha)=o_p^b(1)$ and $V=O_p^b(T^{-1/2})o_p^b(T^\alpha)=o_p^b(1)$ for any $0<\alpha\leq\tfrac{1}{2}$, respectively $VI=O_p(T^{-1/2})o_p^b(T^\alpha)O_p^b(T^{-1/2})=o_p^b(1)$ for any $0<\alpha \leq1$. Hence, $\hat H_{\epsilon,i\gamma}^b=T^{-1}\sum_{i\gamma}z_tz_t^\top\epsilon_t^2\eta_t^2+o_p^b(1)$, for any $0<\alpha\leq\tfrac{1}{2}$.\\
\indent Thus, it is left to show that $T^{-1}\sum_{i\gamma}z_tz_t^\top\epsilon_t^2\eta_t^2-T^{-1}\sum_{i\gamma}z_tz_t^\top\epsilon_t^2=o_p^b(1)$ since $T^{-1}\sum_{i\gamma}z_tz_t^\top\epsilon_t^2=H_{\epsilon,i\gamma}+o_p(1)$ by Assumptions 4.1 $(b)$, $(c)$, $(e)$ and Lemma \ref{lem:1}. Since $E^b[T^{-1}\sum_{i\gamma}z_tz_t^\top\epsilon_t^2\eta_t^2]=T^{-1}\sum_{i\gamma}z_tz_t^\top\epsilon_t^2$, we verify the claim by Chebychev's inequality: $P^b(\|T^{-1}\sum_{i\gamma}z_tz_t^\top\epsilon_t^2(\eta_t^2-1)\|>K)\leq K^{-2}E^b\|T^{-1}\sum_{i\gamma}z_tz_t^\top\epsilon_t^2(\eta_t^2-1)\|^2\leq K^{-2}T^{-1}[T^{-1}\sum_{i\gamma,t\neq s}\|z_tz_t^\top\epsilon_t^2z_sz_s^\top\epsilon_s^2\|E^b|(\eta_t^2-1)(\eta_s^2-1)|]$. By Assumption \ref{assn:2} and applying H\"older's inequality with $p=q=2r$ iteratively, $E^b|(\eta_t^2-1)(\eta_s^2-1)|<\infty$ and $\|z_tz_t^\top z_sz_s^\top\epsilon_t^2\epsilon_s^2\|_r\leq\|z_t\|_{8r}^4\|\epsilon_t\|_{8r}^4<\infty$ for some $r>1$. Hence, $P^b(\|T^{-1}\sum_{i\gamma}z_tz_t^\top\epsilon_t^2(\eta_t^2-1)\|>K)\leq T^{-1}O_p(1)=o_p(1)$ by Lemma \ref{lem:1}, concluding the proof.
\end{proof}

\begin{proof}[\textbf{Proof of Theorem \ref{theo:GMM.2}}]
First, by Lemma \ref{lem:GMM.1} $(iii)$, the first result in the proof of Theorem \ref{theo:GMM.1}, Slutky's theorem and the continuous mapping theorem it follows that $\hat N_{i\gamma}\hat H_{\epsilon,i\gamma}^{b^{-1}}\hat N_{i\gamma}^\top\xrightarrow{p^b}N_{i\gamma}H_{\epsilon,i\gamma}^{-1}N_{i\gamma}^\top\equiv V_{i\gamma}$ for $i=1,2$. Next, recall that $\hat\theta_{i\gamma,(2)}^b=(\hat N_{i\gamma}\hat H_{\epsilon,i\gamma}^{b^{-1}}\hat N_{i\gamma}^\top)^{-1}(\hat N_{i\gamma}\hat H_{\epsilon,i\gamma}^{b^{-1}}T^{-1}\sum_{i\gamma}z_ty_t^b)$ for $i=1,2$. Hence, $T^{1/2}(\hat\theta_{1\gamma,(2)}^b-\hat\theta_{2\gamma,(2)}^b)\xRightarrow{d_p^b}V_{1\gamma}^{-1}N_{1\gamma}H_{\epsilon,1\gamma}\cG_{\epsilon,1\gamma}-V_{2\gamma}^{-1}N_{2\gamma}H_{\epsilon,2\gamma}\cG_{\epsilon,2\gamma}$, by Lemma \ref{lem:GMM.1}, Slutsky's theorem and the continuous mapping theorem. Putting these results together yields the claim.
\end{proof}
\subsection{Proofs of 2SLS Results with a LFS}
\subsubsection{Asymptotic Distribution of 2SLS Test Statistics with a LFS}\label{sec:ProofsLFSAsyms}
\begin{definition}\label{def:QuantsLFS}
Let $C_{i\gamma}=A^0M_{i\gamma}A^{0\top}$, $R_{i\gamma} =  M_{i\gamma}M^{-1}$, $C_\gamma = [C_{1\gamma}^{-1}, -C_{2\gamma}^{-1}]$, and $Q_\gamma =C_{1\gamma} C^{-1} C_{2\gamma}$.  Also, define the Gaussian processes $\mathcal B_{i\gamma}=A^0 [(\tilde \theta^{0\top} \otimes I_q)\mathcal G_{i\gamma} - R_{i\gamma}(\check \theta^{0\top} \otimes I_q)\mathcal G]$ for $i=1,2$, and $\mathcal B= \mathcal B_{1\gamma} + \mathcal B_{2\gamma}$, as well as the processes $\mathcal B_\gamma = \vect(\mathcal B_{1\gamma}, \mathcal B_{2\gamma})$ and $\mathcal E_\gamma = C_\gamma \mathcal B_\gamma$. Let $V_{\mathcal B_{i\gamma}}=A^0 D_{i\gamma} H_{i\gamma} D_{i\gamma}^{\top} A^0$ for $i=1,2$, and $V_{\mathcal B}=A^0 D H D^\top A^0$ and $V_{\mathcal B, 12,\gamma} = A^0 D_{1\gamma} H_{1\gamma}D^\top A^{0\top}$, $V_{\mathcal B,2\gamma}$ denote the covariances of $\mathcal B_{i\gamma}$, $\mathcal B$, and the covariance between $\mathcal B_{1\gamma}$ and $\mathcal B_{2\gamma}$. Then, $V_{\mathcal B_\gamma}  = \Bigg(\begin{array}{cc} {\scriptstyle V_{\mathcal B_{1\gamma}}} & {\scriptstyle V_{\mathcal B_{12 \gamma}}} \\ {\scriptstyle V_{\mathcal B_{12\gamma}} ^\top} & {\scriptstyle V_{\mathcal B_{2\gamma}}}\end{array}\Bigg),$ and $V_{\gamma} = C_{\gamma}  V_{\mathcal B_\gamma} C_\gamma^\top,$ respectively. Define $\bar D_{i\gamma} = [\tilde \theta^{\top} \otimes I_q] - [\check \theta^{\top} \otimes \hat R_{i\gamma}]$, where $\tilde \theta = [1,\hat \theta_x]$ and $\check \theta = [0,\hat \theta_x]$. Then $\hat V_\gamma$ is defined as $V_{\gamma}$, but replacing $C_{i\gamma}$ with $\hat C_{i\gamma}$, $A^0$ with $\hat A$,  $D_{i\gamma}$ by $\bar D_{i\gamma}$, and $H_{1\gamma}$ by $\hat H_{1\gamma} = T^{-1} \suml_{\gamma} \hat v_t \hat v_t^\top \otimes z_t z_t^\top$, and $H$ by $\hat H = T^{-1} \suml_{t=1}^T \hat v_t \hat v_t^\top \otimes z_t z_t^\top$.
\end{definition}
\begin{lemma}\label{lem:4}
Suppose Assumption \ref{assn:1} holds, $y_t$ is generated by \eqref{equ:TMTS.1} and $x_t$ is generated by the LFS \eqref{equ:TMTS.2}. Then, under $\mathbb H_0$ and for $i=1,2$, $(i)$ $T^{-1}\sum_{i\gamma}\hat w_t \hat w_t^\top \inp C_{i\gamma}$; $(ii)$ $T^{-1/2}\sum_{i\gamma} \hat w_t \tilde \epsilon_t \Rightarrow \mathcal B_{i\gamma}$, and \\$T^{-1/2}\vect(\sum_{1\gamma} \hat w_t \tilde \epsilon_t,\sum_{2\gamma} \hat w_t \tilde \epsilon_t)  \Rightarrow \mathcal B_\gamma.$
\end{lemma}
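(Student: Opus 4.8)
The plan is to use the fact that with a linear first stage the fitted regressors are exactly linear in the instruments, $\hat w_t=\hat A z_t$ with $\hat A=[\hat\Pi,S^\top]^\top$, so that both sums in the statement reduce to fixed linear combinations of the averages $T^{-1}\sum_{i\gamma}z_tz_t^\top$ and the partial sums $T^{-1/2}\sum_{i\gamma}(v_t\otimes z_t)$ already controlled by Lemma \ref{lem:3}. I would first dispose of the two facts about the first stage that do not depend on $\gamma$: from $\hat\Pi-\Pi^0=(\sum_{t=1}^T z_tz_t^\top)^{-1}(\sum_{t=1}^T z_tu_t^\top)$, Lemma \ref{lem:3}$(i)$ (at $\gamma\to\infty$), Lemma \ref{lem:3}$(ii)$, Assumption \ref{assn:1}$(a)$ (which gives $E[z_tu_t^\top]=0$) and Assumption \ref{assn:1}$(f)$ (so $M$ is invertible), one gets $\hat\Pi\xrightarrow{p}\Pi^0$, hence $\hat A=A^0+o_p(1)$, as well as the representation $T^{1/2}(\hat\Pi-\Pi^0)=\big(T^{-1}\sum_{t=1}^T z_tz_t^\top\big)^{-1}T^{-1/2}\sum_{t=1}^T z_tu_t^\top=O_p(1)$.

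For part $(i)$, write $T^{-1}\sum_{i\gamma}\hat w_t\hat w_t^\top=\hat A\big(T^{-1}\sum_{i\gamma}z_tz_t^\top\big)\hat A^\top$. By Lemma \ref{lem:3}$(i)$ the middle factor converges to $M_{i\gamma}$ uniformly in $\gamma$, and $M_{i\gamma}$ is uniformly bounded by Assumption \ref{assn:1}$(e)$; combined with $\hat A\xrightarrow{p}A^0$ this gives $T^{-1}\sum_{i\gamma}\hat w_t\hat w_t^\top=A^0M_{i\gamma}A^{0\top}+o_p(1)=C_{i\gamma}+o_p(1)$ uniformly in $\gamma$.

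For part $(ii)$, under $\mathbb H_0$ one has $y_t-\hat w_t^\top\theta^0=\tilde\epsilon_t$, and using $x_t-\hat x_t=u_t-(\hat\Pi-\Pi^0)^\top z_t$ together with the Kronecker identities $z_t(\epsilon_t+u_t^\top\theta_x^0)=(\tilde\theta^{0\top}\otimes I_q)(v_t\otimes z_t)$ and $z_t(u_t^\top\theta_x^0)=(\check\theta^{0\top}\otimes I_q)(v_t\otimes z_t)$, one obtains
\begin{align*}
T^{-1/2}\sum_{i\gamma}\hat w_t\tilde\epsilon_t
&=\hat A\Big[(\tilde\theta^{0\top}\otimes I_q)\,T^{-1/2}\sum_{i\gamma}(v_t\otimes z_t)\\
&\qquad-\Big(T^{-1}\sum_{i\gamma}z_tz_t^\top\Big)\Big(T^{-1}\sum_{t=1}^T z_tz_t^\top\Big)^{-1}(\check\theta^{0\top}\otimes I_q)\,T^{-1/2}\sum_{t=1}^T(v_t\otimes z_t)\Big].
\end{align*}
The first bracketed term converges weakly to $(\tilde\theta^{0\top}\otimes I_q)\mathcal G_{i\gamma}$ by Lemma \ref{lem:3}$(ii)$ and the continuous mapping theorem; in the second term $T^{-1}\sum_{i\gamma}z_tz_t^\top\to M_{i\gamma}$ uniformly, $\big(T^{-1}\sum_{t=1}^T z_tz_t^\top\big)^{-1}\to M^{-1}$, and $T^{-1/2}\sum_{t=1}^T(v_t\otimes z_t)\Rightarrow\mathcal G$, so a Slutsky argument in $D(\Gamma)$ gives weak convergence to $R_{i\gamma}(\check\theta^{0\top}\otimes I_q)\mathcal G$ (recall $R_{i\gamma}=M_{i\gamma}M^{-1}$). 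Premultiplying by $\hat A\xrightarrow{p}A^0$ then yields $T^{-1/2}\sum_{i\gamma}\hat w_t\tilde\epsilon_t\Rightarrow\mathcal B_{i\gamma}$. The joint statement follows because every term above is a continuous image of the single joint weak limit of $\big(T^{-1/2}\sum_{1\gamma}(v_t\otimes z_t),\,T^{-1/2}\sum_{2\gamma}(v_t\otimes z_t)\big)$ — equivalently of $(\mathcal G_{1\gamma},\mathcal G_{2\gamma})$, recalling $\mathcal G=\mathcal G_{1\gamma}+\mathcal G_{2\gamma}$ — paired with the deterministic-limit factors $\hat A$, $T^{-1}\sum_{i\gamma}z_tz_t^\top$ and $\big(T^{-1}\sum_{t=1}^T z_tz_t^\top\big)^{-1}$; the continuous mapping theorem then delivers $T^{-1/2}\vect\big(\sum_{1\gamma}\hat w_t\tilde\epsilon_t,\sum_{2\gamma}\hat w_t\tilde\epsilon_t\big)\Rightarrow\mathcal B_\gamma$.

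The main obstacle is keeping every approximation uniform in $\gamma\in\Gamma$ and making rigorous the steps of the form ``weakly convergent process times a process that is uniformly consistent for a deterministic limit'' in the Skorokhod topology; this is exactly where Assumption \ref{assn:1}$(e)$ (ensuring $M_{i\gamma}$ is continuous and bounded in $\gamma$) and the uniformity built into Lemmas \ref{lem:1}--\ref{lem:3} are needed. The remainder is bookkeeping with Kronecker-product identities and the elementary algebra of the OLS first-stage estimator.
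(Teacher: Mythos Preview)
Your proposal is correct and follows essentially the same route as the paper: write $\hat w_t=\hat A z_t$, reduce Part~$(i)$ to $\hat A\,\hat M_{i\gamma}\,\hat A^\top$ and Part~$(ii)$ to $\hat A\big[(\tilde\theta^{0\top}\otimes I_q)\,T^{-1/2}\sum_{i\gamma}(v_t\otimes z_t)-\hat R_{i\gamma}(\check\theta^{0\top}\otimes I_q)\,T^{-1/2}\sum_{t}(v_t\otimes z_t)\big]$, then invoke Lemma~\ref{lem:3} and Slutsky/continuous mapping. One small bookkeeping slip: the invertibility of $M$ comes from Assumption~\ref{assn:1}$(d)$ (which gives $\inf_\gamma\det M_{1\gamma}>0$), not~$(f)$, which concerns the covariance of $v_t\otimes z_t$.
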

\begin{proof}[\textbf{Proof of Lemma \ref{lem:4}.}] \textit{Part (i).}  $T^{-1}\suml_{i\gamma}\hat w_t \hat w_t^\top = T^{-1} \suml_{i\gamma} \hat A z_t z_t^\top \hat A^\top$. By Assumption \ref{assn:1} and standard arguments, we have that $\hat A = A^0+ \op(1)$, so $T^{-1}\sum_{i\gamma}\hat w_t \hat w_t^\top = (A^0+\op(1)) T^{-1}\sum_{i\gamma} z_t z_t^\top (A^0+\op(1))^{\top}$.  By Lemma \ref{lem:3}, $ T^{-1}\sum_{i\gamma} z_t z_t^\top \inp M_{i\gamma}$, for $i=1,2$. Hence, $T^{-1}\sum_{i\gamma}\hat w_t \hat w_t^\top \inp A^0 M_{i\gamma} A^{0\top} = C_{i\gamma}$.\\
\indent\textit{Part (ii).} Note, 
\begin{align*}
T^{-1/2}\sum_{i\gamma} \hat w_t \tilde \epsilon_t & =  \hat A\,\big( T^{-1/2}\sum_{i\gamma} z_t (\epsilon_t+ u_t^\top \theta_x^0) -\hat R_{i\gamma}  T^{-1/2}\sum_{t=1}^T z_t u_t^\top \theta_x^0\big)\\
& =\hat A \Big([\tilde \theta^{0\top} \otimes I_q] T^{-1/2} \sum_{i\gamma} v_t \otimes z_t-\hat R_{i\gamma}[\check \theta^{0\top} \otimes I_q] T^{-1/2} \sum_{t=1}^Tv_t \otimes z_t\Big).
\end{align*}
By Lemma \ref{lem:3}, $\hat R_{i\gamma}\inp R_{i\gamma}$, $ T^{-1/2}\sum_{i\gamma} v_t \otimes z_t \ind \mathcal{G}_{i\gamma}$ and $ T^{-1/2}\sum_{t=1}^Tv_t \otimes z_t \ind \mathcal{G}$. Hence, $T^{-1/2}\suml_{i\gamma} \hat w_t \tilde \epsilon_t \ind \mathcal B_{i\gamma}$ and  $T^{-1/2}\vect(\sum_{1\gamma} \hat w_t \tilde \epsilon_t,\sum_{2\gamma} \hat w_t \tilde \epsilon_t)  \Rightarrow \mathcal B_\gamma.$
\end{proof}
\begin{theorem}[Asymptotic Distribution LFS]\label{thm:LFSAsyms}
Let $y_t$ be generated by \eqref{equ:TMTS.1} and $x_t$ be generated by the LFS \eqref{equ:TMTS.2}. Then, under $\mathbb H_0$ and Assumption \ref{assn:1},
\begin{itemize}
\item[$(i)$] $T^{1/2}(\hat \theta_{1\gamma}-\hat \theta_{2\gamma}) \ind \mathcal E_\gamma$,
\item[$(ii)$] $\sup_{\gamma\in\Gamma} LR_{T}(\gamma)\Rightarrow\sup_{\gamma\in\Gamma}[ \mathcal E^\top_\gamma\ Q_\gamma \ \mathcal E_\gamma/\sigma^2],$
\item[$(iii)$] $\sup_{\gamma\in\Gamma}W_{T}(\gamma) \Rightarrow\sup_{\gamma\in\Gamma}\mathcal E^\top_\gamma V_\gamma^{-1}\mathcal E_\gamma$, where $V_\gamma$ was defined in Definition \ref{def:QuantsLFS}.
\end{itemize}
\end{theorem}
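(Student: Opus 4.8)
The plan is to reduce all three parts to Lemma~\ref{lem:4}. The key observation is that under $\mathbb{H}_0$ one may rewrite $y_t = w_t^\top\theta^0+\epsilon_t = \hat w_t^\top\theta^0+\tilde\epsilon_t$, because $w_t-\hat w_t=\big((x_t-\hat x_t)^\top,0\big)^\top$ and $\tilde\epsilon_t=\epsilon_t+(x_t-\hat x_t)^\top\theta_x^0$. Hence the split- and full-sample 2SLS estimators satisfy $\hat\theta_{i\gamma}-\theta^0=\big(\sum_{i\gamma}\hat w_t\hat w_t^\top\big)^{-1}\big(\sum_{i\gamma}\hat w_t\tilde\epsilon_t\big)$ and $\hat\theta-\theta^0=\big(\sum_{t=1}^T\hat w_t\hat w_t^\top\big)^{-1}\big(\sum_{t=1}^T\hat w_t\tilde\epsilon_t\big)$. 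For part~$(i)$, I would write $T^{1/2}(\hat\theta_{1\gamma}-\hat\theta_{2\gamma})=\big[(T^{-1}\sum_{1\gamma}\hat w_t\hat w_t^\top)^{-1},\ -(T^{-1}\sum_{2\gamma}\hat w_t\hat w_t^\top)^{-1}\big]\,T^{-1/2}\vect\big(\sum_{1\gamma}\hat w_t\tilde\epsilon_t,\ \sum_{2\gamma}\hat w_t\tilde\epsilon_t\big)$, then use Lemma~\ref{lem:4}$(i)$ (with Assumption~\ref{assn:1}$(d)$,$(g)$ guaranteeing $\inf_{\gamma\in\Gamma}\det C_{i\gamma}>0$, so the inverses are well defined uniformly with probability approaching one), the joint functional convergence in Lemma~\ref{lem:4}$(ii)$, and the continuous mapping theorem (CMT), to get $T^{1/2}(\hat\theta_{1\gamma}-\hat\theta_{2\gamma})\Rightarrow C_{1\gamma}^{-1}\mathcal B_{1\gamma}-C_{2\gamma}^{-1}\mathcal B_{2\gamma}=C_\gamma\mathcal B_\gamma=\mathcal E_\gamma$ as processes in $\gamma$.

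For part~$(iii)$, the plan is first to show $\hat V_\gamma\inp V_\gamma$ uniformly in $\gamma$ by establishing uniform consistency of each building block appearing in Definition~\ref{def:QuantsLFS}: $\hat C_{i\gamma}\inp C_{i\gamma}$ (Lemma~\ref{lem:4}$(i)$); $\hat A\inp A^0$ and $\hat R_{i\gamma}\inp R_{i\gamma}$ (Lemma~\ref{lem:3} plus standard least-squares consistency), hence $\bar D_{i\gamma}\inp D_{i\gamma}$; and $\hat H_{1\gamma}\inp H_{1\gamma}$, $\hat H\inp H$ by applying the ULLN of Lemma~\ref{lem:1} to the entries of $\hat v_t\hat v_t^\top\otimes z_tz_t^\top$, with the estimation-error terms in $\hat v_t=(\hat\epsilon_t,\hat u_t^\top)^\top$ (arising from $\hat\theta-\theta^0=O_p(T^{-1/2})$ and $\hat\Pi-\Pi^0=O_p(T^{-1/2})$) absorbed exactly as the analogous terms are handled in the proof of Theorem~\ref{theo:GMM.1} via Minkowski and H\"older. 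Assumption~\ref{assn:1}$(f)$ then gives $\inf_{\gamma\in\Gamma}\det V_\gamma>0$. Combining $\hat V_\gamma\inp V_\gamma$ with part~$(i)$, Slutsky, and the CMT for the $\sup_{\gamma\in\Gamma}$ functional (continuous on the path space where the limit lives, as in \cite{hansen1996}) yields $\sup_{\gamma\in\Gamma}W_T(\gamma)=\sup_{\gamma\in\Gamma}\big(T^{1/2}(\hat\theta_{1\gamma}-\hat\theta_{2\gamma})\big)^\top\hat V_\gamma^{-1}\big(T^{1/2}(\hat\theta_{1\gamma}-\hat\theta_{2\gamma})\big)\Rightarrow\sup_{\gamma\in\Gamma}\mathcal E_\gamma^\top V_\gamma^{-1}\mathcal E_\gamma$.

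For part~$(ii)$, substituting the closed forms and simplifying the projection residual gives $SSR_0=\sum_{t=1}^T\tilde\epsilon_t^2-\big(\sum_{t=1}^T\hat w_t\tilde\epsilon_t\big)^\top\big(\sum_{t=1}^T\hat w_t\hat w_t^\top\big)^{-1}\big(\sum_{t=1}^T\hat w_t\tilde\epsilon_t\big)$ and, analogously, $SSR_1(\gamma)=\sum_{i=1}^2\big[\sum_{i\gamma}\tilde\epsilon_t^2-\big(\sum_{i\gamma}\hat w_t\tilde\epsilon_t\big)^\top\big(\sum_{i\gamma}\hat w_t\hat w_t^\top\big)^{-1}\big(\sum_{i\gamma}\hat w_t\tilde\epsilon_t\big)\big]$. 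Since $\sum_{t=1}^T\tilde\epsilon_t^2=\sum_{1\gamma}\tilde\epsilon_t^2+\sum_{2\gamma}\tilde\epsilon_t^2$, the sum-of-squares terms cancel and, after re-scaling, $SSR_0-SSR_1(\gamma)\Rightarrow \mathcal B_{1\gamma}^\top C_{1\gamma}^{-1}\mathcal B_{1\gamma}+\mathcal B_{2\gamma}^\top C_{2\gamma}^{-1}\mathcal B_{2\gamma}-\mathcal B^\top C^{-1}\mathcal B$ uniformly in $\gamma$, with $\mathcal B=\mathcal B_{1\gamma}+\mathcal B_{2\gamma}$ and $C=C_{1\gamma}+C_{2\gamma}$. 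A partitioned-regression (Chow-type) identity — resting on $C^{-1}C_{2\gamma}C_{1\gamma}^{-1}=C_{1\gamma}^{-1}-C^{-1}$ and its analogue $C_{2\gamma}^{-1}C_{1\gamma}C^{-1}=C_{2\gamma}^{-1}-C^{-1}$ — collapses this to $\big(C_{1\gamma}^{-1}\mathcal B_{1\gamma}-C_{2\gamma}^{-1}\mathcal B_{2\gamma}\big)^\top C_{1\gamma}C^{-1}C_{2\gamma}\big(C_{1\gamma}^{-1}\mathcal B_{1\gamma}-C_{2\gamma}^{-1}\mathcal B_{2\gamma}\big)=\mathcal E_\gamma^\top Q_\gamma\mathcal E_\gamma$. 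For the denominator, $\tilde\epsilon_t=(\epsilon_t+u_t^\top\theta_x^0)+z_t^\top(\Pi^0-\hat\Pi)\theta_x^0$, so $T^{-1}\sum_{t=1}^T\tilde\epsilon_t^2=T^{-1}\sum_{t=1}^T(\epsilon_t+u_t^\top\theta_x^0)^2+O_p(T^{-1})\inp\sigma^2$ by Lemma~\ref{lem:1}; together with $\hat\theta_{i\gamma}-\theta^0=O_p(T^{-1/2})$ and $T/(T-2p)\to1$, this gives $SSR_1(\gamma)/(T-2p)\inp\sigma^2$ uniformly. The CMT applied to $\sup_{\gamma\in\Gamma}$ of the ratio then yields the stated limit.

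The obstacle here is bookkeeping rather than a single sharp estimate. The three items that require care are: (a) the Chow-type matrix algebra that collapses the three quadratic forms into $\mathcal E_\gamma^\top Q_\gamma\mathcal E_\gamma$; (b) the uniform-in-$\gamma$ consistency $\hat V_\gamma\inp V_\gamma$, which needs the first- and second-stage estimation errors pushed through $\hat v_t$ and $\bar D_{i\gamma}$ with the Minkowski/H\"older arguments of Theorem~\ref{theo:GMM.1}; and (c) verifying $\inf_{\gamma\in\Gamma}\det C_{i\gamma}>0$ and $\inf_{\gamma\in\Gamma}\det V_\gamma>0$, which legitimizes the matrix inverses and the final $\sup_\gamma$–CMT step and is where Assumption~\ref{assn:1}$(d)$,$(f)$,$(g)$ enter.
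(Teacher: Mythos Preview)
Your proposal is correct and follows essentially the same strategy as the paper: reduce everything to Lemma~\ref{lem:4}, apply the continuous mapping theorem, and verify $\hat V_\gamma\inp V_\gamma$ and $T^{-1}SSR_1(\gamma)\inp\sigma^2$ for the denominator. The only noteworthy difference is in Part~$(ii)$: the paper expresses $\hat\theta_{i\gamma}-\hat\theta$ in terms of $\hat\theta_{1\gamma}-\hat\theta_{2\gamma}$ via $\hat\theta=\sum_i C^{-1}C_{i\gamma}\hat\theta_{i\gamma}+o_p(1)$, then simplifies $\sum_i(\hat\theta_{i\gamma}-\hat\theta)^\top C_{i\gamma}(\hat\theta_{i\gamma}-\hat\theta)$ using the identity $C_{2\gamma}C^{-1}C_{1\gamma}C^{-1}C_{2\gamma}+C_{1\gamma}C^{-1}C_{2\gamma}C^{-1}C_{1\gamma}=Q_\gamma$; you instead write both SSRs as $\sum\tilde\epsilon_t^2$ minus projection terms, take the difference at the limit level, and collapse $\mathcal B_{1\gamma}^\top C_{1\gamma}^{-1}\mathcal B_{1\gamma}+\mathcal B_{2\gamma}^\top C_{2\gamma}^{-1}\mathcal B_{2\gamma}-\mathcal B^\top C^{-1}\mathcal B$ to $\mathcal E_\gamma^\top Q_\gamma\mathcal E_\gamma$ via the Chow-type identity. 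Both algebraic routes are valid and yield the same $Q_\gamma$; yours is the standard partitioned-regression calculation, while the paper's stays in terms of the estimator differences throughout.
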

\begin{proof}[\textbf{Proof of Theorem \ref{thm:LFSAsyms}}]
\textit{Part (i).} $T^{1/2}(\hat \theta_{i\gamma} -\theta^0) = \hat C_{i\gamma}^{-1}( T^{-1/2} \sum_{i \gamma}\hat w_t \tilde \epsilon_t) \ind C_{i\gamma}^{-1} \mathcal B_{i\gamma}$ by Lemma \ref{lem:4}. So, $
T^{1/2}(\hat \theta_{1\gamma} - \hat \theta_{2\gamma}) \ind [C_{1\gamma}^{-1},-C_{2\gamma}^{-1}] \vect(\mathcal B_{1\gamma}, \mathcal B_{2\gamma}) = C_{\gamma} \, \mathcal B_\gamma = \mathcal E_\gamma.$\\
\indent \textit{Part (ii)}. Since $\hat \theta = \hat C^{-1} (T^{-1} \sum_{t=1}^T \hat w_t y_t)=\hat C^{-1} (\sum_{i=1}^2T^{-1} \sum_{i\gamma} \hat w_t y_t)$ and $\hat\theta_{i\gamma}=\hat C_{i\gamma}^{-1}(T^{-1}\sum_{i\gamma}\hat w_ty_y)$, it follows that $\hat \theta = \sum_{i=1}^2\hat C ^{-1} \hat C_{i\gamma} \hat \theta_{i\gamma}$. So $\hat \theta = \sum_{i=1}^2 C^{-1} C_{i\gamma} \hat \theta_{i\gamma}+\op(1)$, and therefore $\hat \theta_{1\gamma}-\hat \theta = C^{-1} C_{2\gamma} (\hat \theta_{1\gamma} - \hat \theta_{2\gamma})+\op(1)$ and $\hat \theta_{2\gamma} - \hat \theta =C^{-1} C_{1\gamma}(\hat \theta_{2\gamma} - \hat \theta_{1\gamma})+\op(1)$ . Hence,
 \begin{eqnarray*}
SSR_0-SSR_1(\gamma)&=&\sum_{i=1}^2  \left[\sum_{i\gamma}(y_t - \hat w_t^\top \hat \theta)^2 -(y_t - \hat w_t^\top \hat \theta_{i\gamma})^2 \right] \\
&=& \sum_{i=1}^2 (\hat \theta_{i\gamma}-\hat \theta)^\top \left[2 \sum_{i\gamma} \hat w_t \tilde \epsilon_t - \sum_{i\gamma} \hat w_t \hat w_t^\top (\hat \theta-\theta^0)-\sum_{i\gamma} \hat w_t \hat w_t^\top (\hat \theta_{i\gamma}-\theta^0)  \right] \\
 &=& \sum_{i=1}^2 T^{1/2} (\hat \theta_{i\gamma}-\hat \theta)^\top \left( T^{-1} \sum_{i\gamma} \hat w_t \hat w_t^\top \right) T^{1/2} (\hat \theta_{i\gamma}-\hat \theta) \\
&=& T^{1/2} (\hat \theta_{1\gamma} - \hat \theta_{2\gamma})  [ C_{2\gamma} C^{-1} C_{1\gamma} C^{-1} C_{2\gamma} + C_{1\gamma} C^{-1} C_{2\gamma} C^{-1} C_{1\gamma}]  T^{1/2}(\hat \theta_{1\gamma} - \hat \theta_{2\gamma})+\op(1)\\
&=&   T^{1/2}(\hat \theta_{1\gamma} - \hat \theta_{2\gamma})^\top Q_\gamma \, T^{1/2} (\hat \theta_{1\gamma} - \hat \theta_{2\gamma})+\op(1),
\end{eqnarray*}
where the last line follows because $C = \sum_{i=1}^2 C_{i\gamma}$, therefore $C^{-1}C_{1\gamma} = I_{p} - C^{-1}C_{2\gamma} $, $C_{1\gamma} C^{-1} C_{2\gamma} = (C- C_{2\gamma})C^{-1} (C- C_{1\gamma}) = C - C_{1\gamma} - C_{2\gamma} + C_{2\gamma} C^{-1} C_{1\gamma} = C_{2\gamma} C^{-1} C_{1\gamma}$, so $C_{2\gamma} C^{-1} C_{1\gamma} C^{-1} C_{2\gamma} + C_{1\gamma} C^{-1} C_{2\gamma} C^{-1} C_{1\gamma} = C_{2\gamma} C^{-1} C_{1\gamma} C^{-1} C_{2\gamma}  + C_{1\gamma} C^{-1} C_{2\gamma} (I_p -  C^{-1} C_{2\gamma})  = Q_\gamma +  (C_{2\gamma} C^{-1} C_{1\gamma}- C_{1\gamma} C^{-1} C_{2\gamma}) C^{-1} C_{2\gamma} = Q_\gamma.$ Since $T^{-1/2}(\hat \theta_{1\gamma} - \hat \theta_{2\gamma}) \ind \mathcal E_\gamma$, $SSR_0-SSR_1(\gamma)\ind \mathcal E^\top_\gamma Q_\gamma\, \mathcal E_\gamma$.\\
\indent Next, $SSR_1(\gamma)/(T-2p) = \sigma^2+\op(1)$, since, as shown below,  $T^{-1} SSR_1(\gamma)\inp \sigma^2$:
\begin{eqnarray*}
T^{-1} SSR_1(\gamma) & = & \sum_{i=1}^2 T^{-1}\sum_{i\gamma} (y_t - \hat w_t^\top \hat \theta_{i\gamma})^2 \\
&= & \sum_{i=1}^2 T^{-1}\sum_{i\gamma} (\tilde \epsilon_t -  \hat w_t^\top (\hat \theta_{i\gamma}-\theta^0))^2 \\
&= & T^{-1} \sum_{t=1}^T \tilde \epsilon_t^2 - 2 \sum_{i=1}^2 T^{-1}\sum_{i\gamma} \tilde \epsilon_t \hat w_t^\top (\hat \theta_{i\gamma}-\theta^0) + \sum_{i=1}^2 (\hat \theta_{i\gamma}-\theta^0)^\top (T^{-1} \sum_{i\gamma} \hat w_t \hat w_t^\top) (\hat \theta_{i\gamma}-\theta^0).
\end{eqnarray*}
By Lemma \ref{lem:4}, $T^{-1} \sum_{i\gamma} \hat w_t \hat w_t^\top \inp C_{i\gamma}$, and $ T^{-1}\suml_{i\gamma} \tilde \epsilon_t \hat w_t^\top =\op(1)$ and, as shown in \textit{Part (ii)} above, $\hat \theta_{i\gamma} -\theta^0=\op(1)$. Hence:
\begin{eqnarray} 
T^{-1} SSR_1(\gamma) & = & T^{-1} \sum_{t=1}^T (\epsilon_t + u_t^\top \theta_x^0 - z_t^\top (\hat \Pi-\Pi^0) \theta_x^0)^2 +\op(1)\nonumber\\ 
& = & T^{-1} \sum_{t=1}^T  (\epsilon_t + u_t^\top \theta_x^0)^2 + \theta_x^{0\top}(\hat \Pi-\Pi^0) T^{-1} \sum_{t=1}^T z_t z_t^\top (\hat \Pi-\Pi^0) \theta_x^0  \nonumber\\ 
& & -2 T^{-1} \sum_{t=1}^T z_t (\epsilon_t+ u_t^\top \theta_x^0)(\hat \Pi-\Pi^0) \theta_x^0+\op(1) \nonumber\\ 
&=&  T^{-1} \sum_{t=1}^T  (\epsilon_t + u_t^\top \theta_x^0)^2+\op(1),\label{equ:PT4.1}
 \end{eqnarray}
where the last equality used Lemma \ref{lem:3} $(ii)$ and the fact that $\hat \Pi-\Pi^0 = \op(1)$. We now apply Lemma \ref{lem:1} to $(\epsilon_t + u_t^\top \theta_x^0)^2$. First, $E[(\epsilon_t + u_t^\top \theta_x^0)^2]=\sigma^2$. Second, by Assumption \ref{assn:1} $(b)$, $(\epsilon_t + u_t^\top \theta_x^0)^2$ is strictly stationary with $\rho$-mixing coefficients satisfying condition $(i)$ in Lemma \ref{lem:1}. Third, by Minkowski's inequality, $\|(\epsilon_t + u_t^\top \theta_x^0)^2\|_{2} \leq \|\epsilon_t^2\|_{2} + \|(\theta_x^{0\top} u_t)^2\|_{2} + 2\| \epsilon_t u_t^\top \theta_x^0\|_{2} $. Note that $\|\epsilon_t^2\|_2 = \|\epsilon_t\|_4^2<K$, $\|(\theta_x^{0\top} u_t)^2\|_{2}\leq\|\theta_z^0\|^2\|u_t\|_4^2<K$ and $ \| \epsilon_t u_t^\top \theta_x^0\|_{2}\leq\|\theta_x^0\|\|\epsilon_t\|_4\|u_t\|_4^4<K$ by Assumption \ref{assn:1} $(c)$. Therefore, by Lemma \ref{lem:1},  $T^{-1} \suml_{t=1}^T  (\epsilon_t + u_t^\top \theta_x^0)^2\inp \sigma^2$, completing the proof of Part $(ii)$.\\
\indent \textit{Part (iii).}  We are left to show $\hat V_\gamma \inp V_\gamma$. Since $\hat C_{i\gamma} \inp C_{i\gamma}$, $\hat A \inp A^0$, $\hat \theta_x \inp \theta_x^0$, to show that $\hat V_\gamma \inp V_\gamma$, it suffices to show that $\hat H_{i\gamma} = T^{-1} \sum_{i\gamma} \hat v_t\hat v_t^\top \otimes z_tz_t^\top \inp H_{i\gamma}$ for $i=1,2$. We proceed to show that $\hat H_{\epsilon,i\gamma} = T^{-1} \suml_{i\gamma} z_t z_t^\top \hat \epsilon_t^2 \inp H_{\epsilon,i\gamma}$:
\begin{eqnarray}
\hat H_{\epsilon,i\gamma} &=& T^{-1} \sum_{1\gamma} z_t z_t^\top \hat \epsilon_t^2 = T^{-1} \sum_{1\gamma} z_t z_t^\top [\epsilon_t + w_t^\top (\theta^0 -\hat \theta)]^2 \label{equ:PT4.2}\\
&= &  T^{-1} \sum_{1\gamma} z_t z_t^\top \epsilon_t^2+T^{-1} \sum_{1\gamma} z_t z_t^\top [(\theta^0 -\hat \theta)^\top w_t w_t^\top (\theta^0 -\hat \theta)] \nonumber \\
&+&2 T^{-1} \sum_{1\gamma} z_t z_t^\top [ (\theta^0 -\hat \theta)^\top w_t \epsilon_t]\label{equ:PT4.3}. 
\end{eqnarray}
First, by Assumption \ref{assn:1} $(b)$, $(c)$ and $(e)$, and Lemma \ref{lem:1}, $T^{-1} \suml_{i\gamma} z_t z_t^\top \epsilon_t^2 \inp H_{\epsilon,i\gamma}$. Second, $ T^{-1} \suml_{i\gamma} z_t z_t^\top [(\theta^0 -\hat \theta)^\top w_t w_t^\top (\theta^0 -\hat \theta)]  =  [I_q \otimes (\theta^0 -\hat \theta)]^\top [ T^{-1} \suml_{i\gamma} z_t z_t^\top \otimes w_t w_t^\top ][I_q \otimes (\theta^0 -\hat \theta)]$. Note that $\theta^0 -\hat \theta=\Op(T^{-1/2})$. Moreover, the asymptotic behavior of the terms $II$ and $III$ in \eqref{equ:PL4.1}, implies that $T^{-1}\sum_{i\gamma}z_tz_t^\top\otimes w_tw_t^\top=O_p(1)$. Hence, $T^{-1} \suml_{i\gamma} z_t z_t^\top [(\theta^0 -\hat \theta)^\top w_t w_t^\top (\theta^0 -\hat \theta)] =\op(1)$. Similarly, $ T^{-1} \suml_{i\gamma} z_t z_t^\top  (\theta^0 -\hat \theta)^\top w_t \epsilon_t =\op(1)$. Therefore, $\hat H_{\epsilon,i\gamma} \inp H_{\epsilon,1\gamma}$. \\
\indent By similar arguments, $\hat H_{u,i\gamma} \inp H_{u,i\gamma}$ and $\hat H_{\epsilon u,i\gamma} \inp H_{\epsilon u,i\gamma}$, completing the proof of Part $(iii)$.
\end{proof}
\subsubsection{Bootstrap Validity for 2SLS Test Statistics with LFS}
\begin{lemma}\label{lem:5}
Let Assumptions \ref{assn:1}-\ref{assn:2} hold, $y_t$ be generated by \eqref{equ:TMTS.1}, and $x_t$ be generated by \eqref{equ:TMTS.2}. Then, under $\mathbb H_0$ and for $i=1,2$, $(i)$ $T^{-1/2} \sum_{i\gamma} v_t^b\,  \otimes z_t \indb \mathcal G_{i\gamma}$; $(ii)$ $T^{1/2} (\hat \Pi^b - \hat \Pi) = T^{1/2} (\hat \Pi - \Pi^0)+\opb(1)$; $(iii)$ $T^{-1} \sum_{i\gamma} \hat w_t^b \hat w_t^{b\top} \inpb C_{i\gamma}$; $(iv)$ $T^{-1/2} \sum_{i\gamma} \hat w_t^b \tilde \epsilon_t^b \indb \mathcal B_{i\gamma}$ and $\vect(T^{-1/2} \sum_{1\gamma} \hat w_t^b \tilde \epsilon_t^b,T^{-1/2} \sum_{2\gamma} \hat w_t^b \tilde \epsilon_t^b) \indb \mathcal B_\gamma$, where $\tilde \epsilon_t^b = y_t^b - \hat w_t^{b\top} \hat \theta$.
\end{lemma}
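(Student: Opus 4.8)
\textbf{Proof proposal for Lemma \ref{lem:5}.} The plan is to treat this lemma as the bootstrap counterpart of Lemmas \ref{lem:3}--\ref{lem:4}: each of the four claims is obtained by retracing the corresponding sample argument, replacing every average $T^{-1}\sum_{i\gamma}(\cdot)$ or $T^{-1/2}\sum_{i\gamma}(\cdot)$ by its $\eta_t$-weighted version — whose behaviour under the bootstrap measure is governed by Lemma \ref{lem:3} $(iii)$--$(iv)$ — and then showing that the extra terms created by recentring the bootstrap DGP at $(\hat\theta,\hat\Pi)$ rather than at $(\theta^0,\Pi^0)$ are $o_p^b(1)$. Throughout I would use that, under $\mathbb H_0$ and a LFS, $v_t^b=\hat v_t\eta_t$ with $\hat v_t=(\hat\epsilon_t,\hat u_t^\top)^\top$, $\hat\epsilon_t-\epsilon_t=-w_t^\top(\hat\theta-\theta^0)$, $\hat u_t-u_t=-(\hat\Pi-\Pi^0)^\top z_t$, and $\hat\theta-\theta^0=\Op(T^{-1/2})$, $\hat\Pi-\Pi^0=\Op(T^{-1/2})$ by Assumption \ref{assn:1} and standard 2SLS/OLS arguments, while $w_t=A^0z_t+\bar u_t$.

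For part $(i)$ I would split $T^{-1/2}\sum_{i\gamma}v_t^b\otimes z_t=T^{-1/2}\sum_{i\gamma}(v_t\eta_t)\otimes z_t+T^{-1/2}\sum_{i\gamma}((\hat v_t-v_t)\eta_t)\otimes z_t$: the first term converges to $\mathcal G_{i\gamma}$ by Lemma \ref{lem:3} $(iv)$, and, substituting the expressions above for $\hat v_t-v_t$, the second term becomes a finite sum of pieces of the form $[T^{-1/2}\sum_{i\gamma}(\cdot)\eta_t]\cdot\Op(T^{-1/2})$ in which the bracketed factor is either $O_p^b(1)$ (a bootstrap martingale-difference partial sum, using the extra moments of Assumption \ref{assn:2} $(b)$) or $T^{1/2}o_p^b(1)$ by Lemma \ref{lem:3} $(iii)$, so each such piece — and hence the second term — is $o_p^b(1)$. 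For part $(ii)$, since the bootstrap first stage is $x_t^b=\hat\Pi^\top z_t+\hat u_t\eta_t$, \eqref{equ:TMTS.8} applied to $x_t^b$ gives $\hat\Pi^b-\hat\Pi=(T^{-1}\sum_{t=1}^{T}z_tz_t^\top)^{-1}(T^{-1}\sum_{t=1}^{T}z_t\hat u_t^\top\eta_t)$; substituting $\hat u_t=u_t-(\hat\Pi-\Pi^0)^\top z_t$ yields $T^{-1/2}\sum_{t=1}^{T}z_t\hat u_t^\top\eta_t=T^{-1/2}\sum_{t=1}^{T}z_tu_t^\top\eta_t-(T^{-1}\sum_{t=1}^{T}z_tz_t^\top\eta_t)\,T^{1/2}(\hat\Pi-\Pi^0)$, with the second piece $=o_p^b(1)\cdot\Op(1)=o_p^b(1)$ by Lemma \ref{lem:3} $(iii)$. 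Together with $T^{-1}\sum_{t=1}^{T}z_tz_t^\top\inp M$ (Lemma \ref{lem:3} $(i)$, which also holds under the bootstrap measure as it involves no $\eta_t$), this gives $T^{1/2}(\hat\Pi^b-\hat\Pi)=M^{-1}(T^{-1/2}\sum_{t=1}^{T}z_tu_t^\top\eta_t)+o_p^b(1)$, the $\eta_t$-weighted analogue of $T^{1/2}(\hat\Pi-\Pi^0)$, which converges under the bootstrap measure to the same Gaussian limit by Lemma \ref{lem:3} $(iv)$.

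For part $(iii)$, writing $\hat w_t^b=\hat A^bz_t$ with $\hat A^b=[\hat\Pi^{b},S^\top]^\top$, part $(ii)$ gives $\hat\Pi^b-\hat\Pi=O_p^b(T^{-1/2})$ while $\hat\Pi-\Pi^0=\op(1)$, so $\hat A^b\inpb A^0$; since $T^{-1}\sum_{i\gamma}z_tz_t^\top\inp M_{i\gamma}$ does not involve $\eta_t$, $T^{-1}\sum_{i\gamma}\hat w_t^b\hat w_t^{b\top}=\hat A^b(T^{-1}\sum_{i\gamma}z_tz_t^\top)\hat A^{b\top}\inpb A^0M_{i\gamma}A^{0\top}=C_{i\gamma}$. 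For part $(iv)$, from $y_t^b=w_t^{b\top}\hat\theta+\hat\epsilon_t\eta_t$ and $w_t^b-\hat w_t^b=((\hat u_t\eta_t-(\hat\Pi^b-\hat\Pi)^\top z_t)^\top,0)^\top$ I would first derive $\tilde\epsilon_t^b=(\hat\epsilon_t+\hat\theta_x^\top\hat u_t)\eta_t-\hat\theta_x^\top(\hat\Pi^b-\hat\Pi)^\top z_t$, with $\hat\theta_x$ the block of $\hat\theta$ multiplying $\hat x_t$. Then $T^{-1/2}\sum_{i\gamma}\hat w_t^b\tilde\epsilon_t^b=\hat A^b\,T^{-1/2}\sum_{i\gamma}z_t\tilde\epsilon_t^b$ splits into a leading term $\hat A^b(\tilde\theta^{0\top}\otimes I_q)T^{-1/2}\sum_{i\gamma}(v_t\otimes z_t)\eta_t$ — obtained by writing $\hat\epsilon_t+\hat\theta_x^\top\hat u_t=\tilde\theta^{0\top}v_t$ plus remainders that, once summed, are of the type already shown to be $o_p^b(1)$ in part $(i)$, exactly as the analogous terms vanish in the proof of Lemma \ref{lem:4} $(ii)$ — which converges to $A^0(\tilde\theta^{0\top}\otimes I_q)\mathcal G_{i\gamma}$ by Lemma \ref{lem:3} $(iv)$, and a first-stage correction $-\hat A^b(T^{-1/2}\sum_{i\gamma}z_tz_t^\top)(\hat\Pi^b-\hat\Pi)\hat\theta_x=-\hat A^b(M_{i\gamma}+\op(1))T^{1/2}(\hat\Pi^b-\hat\Pi)\hat\theta_x$, which by part $(ii)$ and $\hat\theta_x\inp\theta_x^0$ converges to $-A^0R_{i\gamma}(\check\theta^{0\top}\otimes I_q)\mathcal G$. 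Hence $T^{-1/2}\sum_{i\gamma}\hat w_t^b\tilde\epsilon_t^b\indb A^0[(\tilde\theta^{0\top}\otimes I_q)\mathcal G_{i\gamma}-R_{i\gamma}(\check\theta^{0\top}\otimes I_q)\mathcal G]=\mathcal B_{i\gamma}$, and the joint statement follows in the same way from the joint bootstrap convergence of $(T^{-1/2}\sum_{1\gamma}(v_t\otimes z_t)\eta_t,\,T^{-1/2}\sum_{2\gamma}(v_t\otimes z_t)\eta_t,\,T^{-1/2}\sum_{t=1}^{T}(v_t\otimes z_t)\eta_t)$ provided by Lemma \ref{lem:3} $(iv)$, plus the continuous mapping theorem.

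The step I expect to be the main obstacle is the bookkeeping of the three intertwined sources of noise in the bootstrap objects: the sample first-stage error $\hat\Pi-\Pi^0$ and the sample structural error $\hat\theta-\theta^0$ (both buried inside $\hat u_t,\hat\epsilon_t$, hence inside $v_t^b,x_t^b,y_t^b$), and the bootstrap first-stage error $\hat\Pi^b-\hat\Pi$ (buried inside $x_t^b-\hat x_t^b$, hence inside $\tilde\epsilon_t^b$). Showing that each enters only through $o_p^b(1)$ remainders rests crucially on Lemma \ref{lem:3} $(iii)$, i.e.\ $T^{-1}\sum_{i\gamma}z_tz_t^\top\eta_t=o_p^b(1)$: it is precisely this rate that turns the product of an $O_p^b(1)$-scaled quantity (such as $T^{1/2}(\hat\Pi^b-\hat\Pi)$) or an $O_p(T^{1/2})$-scaled quantity (such as $T^{-1/2}\sum_{i\gamma}z_tz_t^\top$) against an $\Op(T^{-1/2})$ estimation error into something negligible. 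Everything else is a routine application of Lemmas \ref{lem:1}--\ref{lem:3}, Slutsky's theorem, and the continuous mapping theorem, along the lines already carried out for Lemma \ref{lem:4} and Theorem \ref{thm:LFSAsyms}.
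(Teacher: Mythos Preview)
Your proposal is correct and follows essentially the same route as the paper: for each of parts $(i)$--$(iv)$ you perform the same decomposition as the paper (splitting $v_t^b=v_t\eta_t+(\hat v_t-v_t)\eta_t$, writing $\hat\Pi^b-\hat\Pi$ via the bootstrap normal equations, factoring $\hat w_t^b=\hat A^bz_t$, and expanding $\tilde\epsilon_t^b$ into the $\eta_t$-weighted score plus the first-stage correction), and you invoke exactly the same ingredients, namely Lemma~\ref{lem:3} $(iii)$--$(iv)$ to kill the $T^{-1}\sum_{i\gamma}z_tz_t^\top\eta_t$-type remainders and to obtain the bootstrap FCLT. If anything, your treatment of part $(ii)$ is slightly more careful than the paper's: you correctly arrive at $T^{1/2}(\hat\Pi^b-\hat\Pi)=M^{-1}T^{-1/2}\sum_tz_tu_t^\top\eta_t+o_p^b(1)$ and note that this has the \emph{same limit} as $T^{1/2}(\hat\Pi-\Pi^0)$ under the bootstrap measure, whereas the paper writes the conclusion as a literal equality up to $o_p^b(1)$, which should be read in that distributional sense.
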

\begin{proof}[\textbf{Proof of Lemma \ref{lem:5}}] 
\textit{Part (i).} Since $\hat u_t = x_t-\hat x_t = u_t - (\hat \Pi-\Pi^0)^\top z_t$, 
\begin{eqnarray*}
T^{-1/2} \sum_{i\gamma} u_t^b\,  \otimes z_t  & = & T^{-1/2} \sum_{i\gamma} \hat u_t \eta_t \otimes z_t \\ 
&&=   T^{-1/2} \sum_{i\gamma} u_t \eta_t \otimes z_t  -  (T^{1/2}(\hat \Pi-\Pi^0)^\top \otimes I_q)\, \big(T^{-1} \sum_{i\gamma} z_t \eta_t \otimes z_t\big)  \\
&&=    T^{-1/2} \sum_{i\gamma} u_t \eta_t \otimes z_t + \Op(1) \times \vect(T^{-1}\sum_{i\gamma} z_t z_t^\top \eta_t) \\
&& = T^{-1/2} \sum_{i\gamma} u_t \eta_t \otimes z_t  + \opb(1).
\end{eqnarray*}
where the last equality follows from Lemma \ref{lem:3} $(iii)$. Similarly, $T^{-1/2} \suml_{i\gamma} \epsilon_t^b z_t =  T^{-1/2} \suml_{i\gamma} \epsilon_t \eta_t z_t +\op^b(1)$, therefore $T^{-1/2} \suml_{i\gamma} v_t^b\,  \otimes z_t \indb \mathcal G_{i\gamma}$ by Lemma \ref{lem:3} $(iv)$.\\
\indent \textit{Part (ii).} Since $\vect(z_tu_t^{b\top})=u_t^b\otimes z_t$, from \textit{Part (i)}, and Lemma \ref{lem:3} $(ii)$ and $(iv)$, $T^{1/2} (\hat \Pi^b - \hat \Pi)  = \hat M^{-1} (T^{-1/2} \sum_{t=1}^T z_t u_t^{b\top}) = \hat M^{-1} (T^{-1/2} \sum_{t=1}^T z_t u_t^{\top} +\opb(1)) =  T^{1/2}(\hat \Pi-\Pi^0)+\opb(1).$\\
\indent \textit{Part (iii).} $T^{-1} \sum_{i\gamma} \hat w_t^b \hat w_t^b = \hat A^{b} (T^{-1} \sum_{i\gamma} z_t z_t^\top) \hat A^{b\top}$, where $\hat A^b = [\hat \Pi^b,S^\top]^\top$. By \textit{Part (ii)}, $T^{1/2}(\hat \Pi^b - \Pi^0)=\Op^b(1)$, so $\hat \Pi^b \inpb \Pi^0$, therefore $\hat A^b \inpb A^0$, and  $T^{-1} \sum_{i\gamma} \hat w_t^b \hat w_t^{b\top} \inpb C_{i\gamma}$.\\
\indent \textit{Part (iv).} By \textit{Parts} $(i)$ and $(ii)$ and Lemma \ref{lem:4} $(i)$-$(ii)$, and recalling that $[\tilde \theta^\top \otimes I_q] - [\check \theta^\top \otimes \hat R_{i\gamma}] \inp [\tilde \theta^{0\top}\otimes I_q] - [\check \theta^{0\top} \otimes R_{i\gamma}]$, we have:
\begin{align*}
T^{-1/2} \sum_{i\gamma} \hat w_t^b \tilde \epsilon_t^b& = \hat A^b \left( T^{-1/2} \sum_{i\gamma} z_t (\epsilon_t^b + (w_t^b - \hat w_t^b)^\top \hat \theta)\right) \\
&= \hat A^b  \left( T^{-1/2} \sum_{i\gamma} z_t (\epsilon_t^b + u_t^{b\top}\hat\theta_x) - [T^{-1} \sum_{i\gamma} z_t z_t^\top] T^{1/2}(\hat \Pi^b-\hat \Pi) \hat \theta_x\right) \\
&= \hat A^b  \left( T^{-1/2} \sum_{i\gamma} z_t (\epsilon_t^b + u_t^{b\top}\hat\theta_x) - \hat M_{i\gamma}\hat M^{-1} T^{-1/2}\sum_{t=1}^Tz_tu_t^\top\hat \theta_x\right) +o_p^b(1)\\
& = A^0 \Big([\tilde \theta^\top \otimes I_q] T^{-1/2}\sum_{i\gamma}v_t^b\otimes z_t - \hat R_{i\gamma} [\check \theta^\top \otimes I_q] \Big( T^{-1/2} \sum_{t=1}^T v_t^b \otimes z_t \Big)\Big)  + \opb(1) \\
& = A^0 \Big([\tilde \theta^{0\top} \otimes I_q] T^{-1/2}\sum_{i\gamma}v_t^b\otimes z_t - R_{i\gamma} [\check \theta^{0\top} \otimes I_q] \Big( T^{-1/2} \sum_{t=1}^T v_t^b \otimes z_t \Big)\Big)  + \opb(1) \\
&  \indb \mathcal B_{1\gamma}.
\end{align*}
Hence, also $\vect(T^{-1/2} \sum_{1\gamma} \hat w_t^b \tilde \epsilon_t^b,T^{-1/2} \sum_{2\gamma} \hat w_t^b \tilde \epsilon_t^b) \indb \mathcal B_\gamma$.
\end{proof}

\begin{proof}[\textbf{Proof of Theorem \ref{thm:1}.}] \textit{Part (i).} We have 
\begin{eqnarray}\nonumber
SSR_0^b-SSR_1^b(\gamma) & = & \sum_{i=1}^2  \left[   \sum_{i\gamma}(y_t^b - \hat w_t^{b\top} \hat \theta^b)^2 -(y_t^b - \hat w_t^{b\top} \hat \theta_{i\gamma}^b)^2 \right] \\ \nonumber
&=& \sum_{i=1}^2 (\hat \theta_{i\gamma}^b-\hat \theta^b)^\top \left[2 \suml_{i\gamma} \hat w_t^b \tilde \epsilon_t^b - \sum_{i\gamma} \hat w_t^b \hat w_t^{b\top } (\hat \theta^b-\hat \theta)-\sum_{i\gamma} \hat w_t^b \hat w_t^{b\top} (\hat \theta_{i\gamma}^b-\hat \theta)  \right] 
\end{eqnarray}
\begin{eqnarray} \nonumber
 &=& \sum_{i=1}^2 T^{1/2} (\hat \theta_{i\gamma}^b-\hat \theta^b)^\top \left(  T^{-1} \sum_{i\gamma} \hat w_t^b \hat w_t^{b\top} \right)T^{1/2} (\hat \theta_{i\gamma}^b-\hat \theta^b) \\  \nonumber
&=& \sum_{i=1}^2 T^{1/2} (\hat \theta_{i\gamma}^b-\hat \theta^b)^\top C_{i\gamma}  T^{1/2} (\hat \theta_{i\gamma}^b-\hat \theta^b)+\op^b(1)\\ 
&=& T^{1/2}(\hat \theta_{1\gamma}^b - \hat \theta_{2\gamma}^b)^\top Q_\gamma  T^{1/2} (\hat \theta_{1\gamma}^b - \hat \theta_{2\gamma}^b)+\op^b(1), \label{equ:PT3.1}
\end{eqnarray}
where the second to last line follows by Lemma \ref{lem:5} $(iii)$. By Lemma \ref{lem:5} $(iii)$ and $(iv)$, 
 \begin{eqnarray}\nonumber
 T^{1/2} (\hat \theta_{1\gamma}^b - \hat \theta_{2\gamma}^b) & = & T^{1/2} (\hat \theta_{1\gamma}^b - \hat \theta) -T^{1/2} (\hat \theta_{2\gamma}^b - \hat \theta) \\ \nonumber
&= & (T^{-1} \sum_{1\gamma} \hat w_t^b \hat w_t^{b\top})^{-1} T^{-1/2} \sum_{1\gamma} \hat w_t^b \tilde \epsilon_t^{b\top} -(T^{-1} \sum_{2\gamma} \hat w_t^b \hat w_t^{b\top})^{-1} T^{-1/2} \sum_{2\gamma} \hat w_t^b \tilde \epsilon_t^{b\top}
\\ 
& \indb & C_{1\gamma}^{-1} \mathcal B_{1\gamma} -   C_{2\gamma}^{-1} \mathcal B_{2\gamma} = \mathcal E_\gamma.\label{equ:PT3.2}
\end{eqnarray}
Using \eqref{equ:PT3.2} in \eqref{equ:PT3.1}, we have: $SSR_0^b-SSR_1^b(\gamma) \indb \mathcal E^\top_\gamma Q_\gamma \mathcal E_\gamma.$\\
\indent It remains to show that $SSR_1^b(\gamma)/(T-2p) \inpb \sigma^2$, or, equivalently, that $T^{-1} SSR_1^b(\gamma) \inpb \sigma^2$:
\begin{eqnarray*}
T^{-1} SSR_1^b(\gamma) & = & \sum_{i=1}^2 T^{-1}\sum_{i\gamma} (y_t^b - \hat w_t^{b\top} \hat \theta_{i\gamma}^b)^2\\
& = & \sum_{i=1}^2 T^{-1}\sum_{i\gamma} (\tilde \epsilon_t^b -  \hat w_t^{b\top} (\hat \theta_{i\gamma}^b-\hat \theta))^2 \\
& = & T^{-1} \sum_{t=1}^T (\tilde \epsilon_t^b)^2 - 2 \sum_{i=1}^2 T^{-1}\sum_{i\gamma} \tilde \epsilon_t^b \hat w_t^{b\top} (\hat \theta_{i\gamma}^b-\hat \theta) \\
&&+ \sum_{i=1}^2 (\hat \theta_{i\gamma}^b-\hat \theta)^\top \Big(T^{-1} \sum_{i\gamma} \hat w_t^b \hat w_t^{b\top}\Big) (\hat \theta_{i\gamma}^b-\hat \theta).
\end{eqnarray*}
By Lemma \ref{lem:5} $(iii)$ and $(iv)$, $T^{-1} \sum_{i\gamma} \hat w_t^b \hat w_t^{b\top} \inpb C_{i\gamma}$, and $ T^{-1}\suml_{i\gamma} \tilde \epsilon_t^b \hat w_t^{b\top} =\Op^b(T^{-1/2})$. From \eqref{equ:PT3.2}, $\hat \theta_{i\gamma}^b-\hat \theta=\Op^b(T^{-1/2}).$ Therefore, $T^{-1} SSR_1^b(\gamma) =  T^{-1} \sum_{t=1}^T (\tilde \epsilon_t^b)^2+\opb(1) = T^{-1} \suml_{t=1}^T (\epsilon_t^b + u_t^{b\top} \hat \theta_x - z_t^\top (\hat \Pi^b-\hat \Pi)^\top \hat \theta_x)^2 +\opb(1)= T^{-1} \suml_{t=1}^T  (\epsilon_t^b + u_t^{b\top} \theta_x^0)^2 +\opb(1)$, where the last equality used Lemma \ref{lem:5} $(ii)$, which implies $\hat \Pi^b-\hat \Pi=\op(1)$, and the fact that $\hat\theta_x\inp\theta_x^0$. We now show that $T^{-1} \suml_{t=1}^T  (\epsilon_t^b + u_t^{b\top} \theta_x^0)^2 \inpb \sigma^2$, which then completes the proof of \textit{Part (i)}.\\
\indent Since $T^{-1} \suml_{t=1}^T  (\epsilon_t^b + u_t^{b\top} \theta_x^0)^2 = \tilde \theta^{0\top}( T^{-1} \suml_{t=1}^T  \hat v_t \hat v_t^\top \eta_t^2 )\tilde \theta^0$, we analyze $T^{-1} \suml_{t=1}^T  \hat v_t \hat v_t^\top \eta_t^2$. First consider $T^{-1} \suml_{t=1}^T  \hat \epsilon_t^2 \eta_t^2$, the first element of this matrix. We have: 
\begin{eqnarray}
T^{-1} \sum_{t=1}^T  \hat \epsilon_t^2 \eta_t^2 & =& T^{-1} \sum_{t=1}^T (\epsilon_t - w_t^\top(\hat \theta - \theta^0))^2 \eta_t^2 \nonumber\\
& = & T^{-1} \sum_{t=1}^T \epsilon_t^2 \eta_t^2 + (\hat \theta - \theta^0)^\top A^0 \left(T^{-1}  \sum_{t=1}^T z_t z_t^\top \eta_t^2\right ) A^{0\top} (\hat \theta - \theta^0) \nonumber\\
 & & - 2 \left(T^{-1} \sum_{t=1}^T \epsilon_t z_t^\top \eta_t^2 \right) A^{0\top} (\hat \theta - \theta^0)\nonumber
\end{eqnarray}
\begin{eqnarray} 
& = & T^{-1} \sum_{t=1}^T \epsilon_t^2 \eta_t^2 + \op(1)\Big(T^{-1}  \sum_{t=1}^T z_t z_t^\top \eta_t^2\Big) \op(1) - \Big(T^{-1} \sum_{t=1}^T \epsilon_t z_t^\top \eta_t^2\Big) \op(1).\label{equ:PT3.3}
\end{eqnarray}
First, we show that $T^{-1} \suml_{t=1}^T \epsilon_t^2 \eta_t^2\inpb E[\epsilon_t^2]$. Note that $E^b[T^{-1} \suml_{t=1}^T \epsilon_t^2 (\eta_t^2-1)] =0$. Hence, we have  $\Var^b(T^{-1} \suml_{t=1}^T \epsilon_t^2 (\eta_t^2-1))=E^b[(T^{-1} \suml_{t=1}^T \epsilon_t^2 (\eta_t^2-1))^2]=E^b[(\eta_t^2-1)^2]T^{-2}\sum_{t=1}^T\epsilon_t^4=o_p(1),$ since $\epsilon_t^4$ satisfies the assumptions of Lemma \ref{lem:1} and $E^b[(\eta_t^2-1)^2]<\infty$ by Assumption \ref{assn:2} $(i)$. Thus, $P^b(|T^{-1} \suml_{t=1}^T \epsilon_t^2 (\eta_t^2-1)|>K)\leq K^{-2}E^b[(\eta_t^2-1)^2]T^{-2}\sum_{t=1}^T\epsilon_t^4=o_p(1)$. Hence, $T^{-1}\sum_{t=1}^T\epsilon_t^2\eta_t^2\xrightarrow{p^b}E[\epsilon_t^2]$.\\
\indent Second, we show that $T^{-1}\suml_{t=1}^T z_t z_t^\top \eta_t^2\xrightarrow{p^b}M$. Note $E^b[T^{-1}\suml_{t=1}^T z_t z_t^\top\eta_t^2]=T^{-1}\suml_{t=1}^T z_t z_t^\top\inp E[z_tz_t^\top]$ by Lemma \ref{lem:1}. Hence, $P^b(\|T^{-1}\suml_{t=1}^T z_t z_t^\top(\eta_t^2-1)\|>K)\leq K^{-2}T^{-1}(T^{-1}\sum_{t,s}\|z_t\|^2\|z_s\|^2E^b[|(\eta_t^2-1)(\eta_s^2-1)|]=o_p(1)$ where the last equality follows from Assumption \ref{assn:2} $(i)$, implying $E^b[|(\eta_t^2-1)(\eta_s^2-1)|]\leq \tilde K<\infty$ for some $\tilde K >0$ and for all $s,t$, and Lemma \ref{lem:1} applied to $\|z_t\|^2\|z_s\|^2$, which ensures $T^{-1}\sum_{t,s}\|z_t\|^2\|z_s\|^2=O_p(1)$. Thus, $T^{-1}\sum_{t=1}^Tz_tz_t^\top\eta_t^2\inpb E[z_tz_t^\top]$.\\
\indent Third, we show that $T^{-1} \sum_{t=1}^T \epsilon_t z_t \eta_t^2=\opb(1)$. Note that $E^b[T^{-1} \sum_{t=1}^T \epsilon_t z_t (\eta_t^2-1)]=0$. Hence, $P^b(\|T^{-1} \sum_{t=1}^T \epsilon_t z_t(\eta_t^2-1)\|>K)\leq K^{-2}T^{-1}(T^{-1}\sum_{t=1}^T\|z_t\|\|z_s\||\epsilon_t\epsilon_s|)=o_p^b(1)$, by the same arguments as before. Thus, $T^{-1}\suml_{t=1}^T \epsilon_t z_t\eta_t\inpb0$. Substituting these results into \eqref{equ:PT3.3},  $T^{-1} \suml_{t=1}^T  \hat \epsilon_t^2 \eta_t^2 \inpb E[\epsilon_t^2]$. Next, 
\begin{eqnarray*}
T^{-1} \sum_{t=1}^T  \hat u_t \hat u_t^\top \eta_t^2 &=& T^{-1} \sum_{t=1}^T (u_t + (\hat \Pi- \Pi^0)^\top z_t) (u_t + (\hat \Pi- \Pi^0)^\top z_t)^\top \eta_t^2 \\
&=& T^{-1} \sum_{t=1}^T u_t u_t^\top  \eta_t^2 +  (\hat \Pi- \Pi^0)^\top  T^{-1}  \sum_{t=1}^T z_t z_t^\top \eta_t^2 (\hat \Pi- \Pi^0)\\
& & + [T^{-1} \sum_{t=1}^T u_t z_t^\top \eta_t^2 (\hat \Pi- \Pi^0)]+ (\hat \Pi- \Pi^0)]^\top[T^{-1} \sum_{t=1}^T u_t z_t\eta_t^2 \\
& \inpb & E[u_t u_t^\top],  
\end{eqnarray*}
by similar arguments as for $T^{-1} \suml_{t=1}^T  \hat \epsilon_t^2 \eta_t^2 \inpb E[\epsilon_t^2]$. Similarly, $T^{-1} \suml_{t=1}^T  \hat u_t \hat \epsilon_t^\top \eta_t^2 \inpb E[u_t \epsilon_t]$. Therefore, $T^{-1} \suml_{t=1}^T  (\epsilon_t^b + u_t^{b\top} \theta_x^0)^2 = \tilde \theta^{0\top}( T^{-1} \suml_{t=1}^T  \hat v_t \hat v_t^\top \eta_t^2 )\tilde \theta^0 =\tilde \theta^{0\top} E[v_t v_t^\top] \tilde \theta^{0} +\op^b(1)=\sigma^2+\op^b(1)$, completing the proof of \textit{Part (i)}.\\
\textit{Part (ii).} From \eqref{equ:PT3.2}, $T^{1/2}(\hat \theta_{1\gamma}^b - \hat \theta_{2\gamma}^b) \indb \mathcal E_\gamma$, so it remains to show that $\hat V_{\gamma}^b \inpb V_\gamma$. We will only show that $ \hat V_{\mathcal B,i\gamma}^b = \hat A^b\{(\tilde\theta^{b\top}\otimes I_q)\hat H_{i\gamma}^b(\tilde\theta^{b}\otimes I_q)+(\check\theta^{b\top}\otimes I_q)\hat H^b(\check\theta^{b}\otimes I_q)-(\tilde\theta^{b\top}\otimes I_q)\hat H_{i\gamma}^b(\check\theta^{b}\otimes I_q)-(\check\theta^{b\top}\otimes I_q)\hat H_{i\gamma}^b(\tilde\theta^{b}\otimes I_q) \}\hat A^{b\top}\inp A^0 \{(\tilde\theta^{0\top}\otimes I_q) H_{i\gamma}(\tilde\theta^{0}\otimes I_q)+(\check\theta^{0\top}\otimes I_q)H(\check\theta^{0}\otimes I_q)-(\tilde\theta^{0\top}\otimes I_q)H_{i\gamma}(\check\theta^{0}\otimes I_q)-(\check\theta^{0\top}\otimes I_q)H_{i\gamma}(\tilde\theta^{0}\otimes I_q) \} A^{0\top}= V_{\mathcal B,i\gamma}$, where $\hat H_{i\gamma}^b = T^{-1} \sum_{i\gamma} \hat v_t^b \hat v_t^b \otimes z_t z_t^\top$; the rest follows by similar arguments. We have already shown that $\hat A^b \inpb A^0$, $\tilde\theta^b\inp \tilde\theta^0$, and $\check\theta^b\inp \check\theta^0$. Since the proof for $\hat H_{i\gamma}^b  \inpb H_{i\gamma}$ is similar to $\hat H_{\epsilon,i\gamma}^b  \inpb H_{\epsilon,i\gamma}$, where $\hat H_{\epsilon,i\gamma}^b = T^{-1}\suml_{i\gamma} (\hat \epsilon_t^b)^2 z_t z_t^\top$, we only show $\hat H_{\epsilon,i\gamma}^b  \inpb H_{\epsilon,i\gamma}$. As in Equation \eqref{equ:PT4.3} in the proof of Theorem \ref{thm:LFSAsyms}, replacing $\epsilon_t$, $w_t$ and $\theta^0$ by $\epsilon_t^b$, $w_t^b$ and $\hat\theta$:
\begin{eqnarray*}
\hat H_{\epsilon,i\gamma}^b & = & T^{-1} \sum_{i\gamma} z_t z_t^\top (\epsilon_t^b)^2+ T^{-1} \sum_{i\gamma} z_t z_t^\top [w_t^{b\top}(\hat \theta_{i\gamma}^b-\hat \theta)]^2 + 2 T^{-1} \sum_{i\gamma} z_t z_t^\top [\epsilon_t^b  w_t^{b\top}(\hat \theta_{i\gamma}^b-\hat \theta)] \\
& = &T^{-1} \sum_{i\gamma} z_t z_t^\top (\epsilon_t^b)^2 + [I_q \otimes ((\hat \theta_{i\gamma}^b-\hat \theta)^{\top} \hat A^b)]\, \left(T^{-1} \sum_{i\gamma} z_t z_t^\top  \otimes z_t z_t^{\top}\right)\, [I_q \otimes (\hat A^{b\top}(\hat \theta_{i\gamma}^b-\hat \theta))] \\
& &+ 2 [I_q \otimes ((\hat \theta_{i\gamma}^b-\hat \theta)^{\top} \hat A^b)] \, \left(T^{-1} \sum_{i\gamma} z_t z_t^\top \epsilon_t^b  \otimes z_t\right).
\end{eqnarray*}
We have already shown that $T^{-1}\sum_{t=1}^Tz_tz_t^\top\otimes z_tz_t^\top=O_p(1)$ (in the proof of Theorem \ref{theo:GMM.1}), $\hat \theta_{i\gamma}^b-\hat\theta=\Op^b(T^{-1/2})$ (from equation \eqref{equ:PT3.2}), and $\hat A^b =A^0+\op^b(1)$ (proof of Part $(ii)$ in Lemma \ref{lem:3}). Recall that $\epsilon_t^b=\hat\epsilon_t\eta_t$, so
\begin{eqnarray*}
\hat H_{\epsilon,i\gamma}^b &=& T^{-1} \sum_{i\gamma} z_t z_t^\top \hat \epsilon_t^2 \eta_t^2 +\Op^b(T^{-1/2})\left(T^{-1} \sum_{i\gamma} z_t z_t^\top \hat\epsilon_t\eta_t  \otimes z_t \right)+\op^b(1) 
\end{eqnarray*}
As before, 
\begin{eqnarray*}
T^{-1} \suml_{i\gamma} z_t z_t^\top \hat \epsilon_t^2 \eta_t^2 & = & T^{-1} \sum_{i\gamma} z_t z_t^\top \epsilon_t^2 \eta_t^2 \nonumber\\
&&+ [I_q \otimes ((\hat \theta - \theta^0)^\top A^0)]\, \left(T^{-1}  \sum_{i\gamma} z_t z_t^\top \otimes z_t z_t^\top \eta_t^2 \right)\, [I_q \otimes (A^{0\top} (\hat \theta - \theta^0))] \nonumber\\
&&- 2 [I_q \otimes ((\hat \theta - \theta^0)^\top A^0)] \,\left( T^{-1} \sum_{i\gamma} z_t z_t^\top \otimes \epsilon_t z_t \eta_t^2 \,\right) \nonumber\\
& \inpb & H_{\epsilon,i\gamma}
\end{eqnarray*}
where the convergence was already shown in the Proof of Theorem \ref{theo:GMM.2} (cf. the asymptotic behaviour of the terms $I$, $II$ and $IV$ in Equation \eqref{equ:PL4.1}). By similar arguments, 
\begin{eqnarray*}
T^{-1} \sum_{i\gamma} z_t z_t^\top \hat \epsilon_t \eta_t \otimes z_t & =&  T^{-1} \sum_{i\gamma} z_t z_t^\top \otimes z_t \epsilon_t \eta_t - (T^{-1} \sum_{i\gamma} z_t z_t^\top \otimes z_t z_t^\top \eta_t ) \, [I_q \otimes A^{0\top}(\hat \theta - \theta^0)] \\
& = & o_p^b(T^\alpha)+o_p^b(T^\alpha)O_p(T^{-1/2}).
\end{eqnarray*}
Hence, $O_p(T^{-1/2})\Big(T^{-1} \sum_{1\gamma} z_t z_t^\top \hat\epsilon_t\eta_t \otimes z_t\Big)=O_p^b(T^{-1/2})(o_p^b(T^\alpha)+o_p^b(T^\alpha)O_p(T^{-1/2}))=o_p^b(1)$ if $\alpha\leq\frac{1}{2}$. Hence, we conclude that $\hat H^b_{\epsilon,i\gamma}=\mathbb E[z_tz_t^\top\epsilon_t^2\indgam]+o_p^b(1)$ for any $\alpha\leq\frac{1}{2}$.
\end{proof}
\subsection{Proofs of 2SLS Results with a TFS}\label{sec:ProofsTFS}
\begin{theorem}\label{theo:0}
Under Assumption \ref{assn:1} and the TFS in \eqref{equ:TMTS.3}, (i) $ \hat\rho-\rho^0 \inp 0$; (ii) $T(\hat\rho-\rho^0)=\mathcal O_p(1)$; (iii) $T^{1/2}\vect(\hat\Pi_i-\Pi_i^0) = T^{1/2}\vect(\hat\Pi_{i\rho^0}-\Pi_i^0) +\op(1)$.
\end{theorem}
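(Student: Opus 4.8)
The plan is to treat $\hat\rho$ as an M-estimator of the least-squares criterion with the slopes concentrated out. Writing $\hat\rho=\arg\min_{\rho\in\Gamma}S_T(\rho)$ with $S_T(\rho)=\sum_{t=1}^T\|x_t\|^2-G_T(\rho)$ and
\[
G_T(\rho)=\sum_{i=1}^2\operatorname{tr}\Big[\big(\textstyle\sum_{i\rho}z_tx_t^\top\big)^\top\big(\sum_{i\rho}z_tz_t^\top\big)^{-1}\big(\sum_{i\rho}z_tx_t^\top\big)\Big],
\]
we equivalently have $\hat\rho=\arg\max_{\rho\in\Gamma}G_T(\rho)$; over the trimmed set $\Gamma$, Assumption \ref{assn:1}$(d)$ keeps $(T^{-1}\sum_{i\rho}z_tz_t^\top)^{-1}$ uniformly bounded in probability. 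For \emph{Part (i)}, apply Lemma \ref{lem:1} componentwise to $z_tz_t^\top$ and to $z_tx_t^\top$ (their entries satisfy the moment/mixing requirements by Assumption \ref{assn:1}$(b)$,$(c)$,$(e)$, using $x_t=\Pi_1^{0\top}z_t\indrhoo+\Pi_2^{0\top}z_t\indinvrhoo+u_t$), so that $T^{-1}G_T(\rho)$ converges uniformly on $\Gamma$ to a continuous non-stochastic limit $G(\rho)$. Evaluating the limits with $E[z_tu_t^\top\mid\mathfrak F_t]=0$, $G(\rho)$ is uniquely maximized at $\rho^0$: any $\rho\neq\rho^0$ forces one of the two pooled regressions to mix the two first-stage slopes, and the resulting fit loss is strictly positive because $\Pi_1^0\neq\Pi_2^0$ (Assumption \ref{assn:1}$(g)$) and the increments of $M_{1\rho}$ have eigenvalues bounded away from zero (Assumption \ref{assn:1}$(d)$). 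Standard argmax consistency then gives $\hat\rho\inp\rho^0$.

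For \emph{Part (ii)}, fix $\epsilon>0$; by Part (i) we may work on $\{|\rho-\rho^0|\le\epsilon\}$. For $\rho>\rho^0$ let $D_\rho=\{t\le T:\rho^0<q_t\le\rho\}$ and $N_T(\rho)=|D_\rho|$; since $q_t$ has a bounded continuous density (Assumption \ref{assn:1}$(e)$), $N_T(\rho)$ is of order $T|\rho-\rho^0|$ and $\sum_{t\in D_\rho}z_tz_t^\top$ has smallest eigenvalue bounded below by $cN_T(\rho)$, uniformly, by Assumption \ref{assn:1}$(d)$. Expanding the concentrated criterion and using $\hat\Pi_{i\rho^0}-\Pi_i^0=\Op(T^{-1/2})$ and $\sqrt T$-consistency of the pooled quantities, one obtains a lower bound of the form
\[
S_T(\rho)-S_T(\rho^0)\ \ge\ \operatorname{tr}\Big[(\Pi_1^0-\Pi_2^0)^\top\big(\textstyle\sum_{t\in D_\rho}z_tz_t^\top\big)(\Pi_1^0-\Pi_2^0)\Big]\ -\ K\Big\|\textstyle\sum_{t\in D_\rho}z_tu_t^\top\Big\|\ -\ \op\!\big(N_T(\rho)\big)\ -\ \Op(1),
\]
and symmetrically for $\rho<\rho^0$. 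The first term is a ``signal'' of order $N_T(\rho)$, while $\sum_{t\in D_\rho}z_tu_t^\top=\sum_{t=1}^T z_tu_t^\top(\indrhoo[\rho]-\indrhoo)$ is a martingale-difference sum (each summand being $\mathfrak F_t$-predictable times $u_t$), hence of order $N_T(\rho)^{1/2}$. I would then peel over dyadic shells $V_j=\{\rho:2^{j-1}<T|\rho-\rho^0|\le 2^j\}$, $j=1,\dots,\lfloor\log_2(\epsilon T)\rfloor$: on $V_j$ the signal is $\gtrsim c'2^{j-1}$, whereas a maximal inequality for the $\rho$-indexed partial-sum process $\rho\mapsto\sum_{t=1}^T z_tu_t^\top\mathbf 1[q_t\le\rho]$ (of the type underlying Lemma \ref{lem:2}, exploiting orthogonal increments in $\rho$ with variance $\asymp T\,(\Delta F)$) gives $\max_{\rho\in V_j}\|\sum_{t\in D_\rho}z_tu_t^\top\|=\Op(2^{j/2})$. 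Hence $P(\hat\rho\in V_j)\le C/2^{j}$ by Markov's inequality, and $\sum_{j\ge J}P(\hat\rho\in V_j)\le C2^{-J}\to0$ as $J\to\infty$, so $T(\hat\rho-\rho^0)=\Op(1)$.

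For \emph{Part (iii)}, given the rate, choose $K$ with $P(T|\hat\rho-\rho^0|>K)<\epsilon$; on the complement the observations strictly between $\hat\rho$ and $\rho^0$ lie in $\{t:|q_t-\rho^0|\le K/T\}$, whose cardinality is $\Op(1)$ (a Poisson-type thinning of $q_t$, not depending on $\hat\rho$). Therefore $\sum_{i\hat\rho}z_tz_t^\top$ and $\sum_{i\hat\rho}z_tx_t^\top$ differ from $\sum_{i\rho^0}z_tz_t^\top$ and $\sum_{i\rho^0}z_tx_t^\top$ by sums of $\Op(1)$ terms, each of which is $o_p(T^{1/2})$ because $\max_{t\le T}\|z_t\|^2$ and $\max_{t\le T}\|z_tx_t^\top\|$ are $o_p(T^{1/2})$ under the $4r$-moment condition with $r>1$ (since $1/(4r)<1/2$). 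Writing $\hat\Pi_i-\Pi_i^0=(\sum_{i\hat\rho}z_tz_t^\top)^{-1}\sum_{i\hat\rho}z_tu_t^\top$ together with the analogous identity at $\rho^0$, and using $(\sum_{i\hat\rho}z_tz_t^\top)^{-1}=\Op(T^{-1})$, a first-order perturbation expansion yields $\hat\Pi_i-\hat\Pi_{i\rho^0}=o_p(T^{-1/2})$, i.e. $T^{1/2}\vect(\hat\Pi_i-\Pi_i^0)=T^{1/2}\vect(\hat\Pi_{i\rho^0}-\Pi_i^0)+\op(1)$.

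The main obstacle is Part (ii): producing the linear-in-$N_T(\rho)$ lower bound on the criterion increment with a genuinely positive constant (this is where Assumptions \ref{assn:1}$(d)$ and $(g)$ are essential, and where one must carefully absorb the $\op(N_T(\rho))$ cross-terms coming from the estimated slopes), and, crucially, establishing the maximal inequality over each shell for partial sums indexed by $\mathbf 1[q_t\le\rho]$ rather than by time. The latter is exactly the point where change-point techniques (Hájek--Rényi/Doob bounds along the time index) must be re-expressed in terms of the $q_t$-ordering via the empirical-process machinery of Lemmas \ref{lem:1}--\ref{lem:2}, so that strict stationarity and $\rho$-mixing (Assumption \ref{assn:1}$(b)$) together with the density bound (Assumption \ref{assn:1}$(e)$) can be brought to bear.
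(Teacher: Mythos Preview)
Your outline is sound and genuinely different from the paper's argument, especially in Part~(ii); let me compare.

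\textbf{Part (i).} You use a standard argmax-consistency route: uniform convergence of the concentrated criterion $T^{-1}G_T(\rho)$ on the trimmed set $\Gamma$, together with unique identification of $\rho^0$ as the maximizer of the limit. The paper instead works directly with the residual decomposition $d_t=\hat u_t-u_t$: it shows $T^{-1}\sum u_t^\top d_t=o_p(1)$ (so $T^{-1}\sum d_t^\top d_t=o_p(1)$), and then derives a contradiction by bounding $T^{-1}\sum d_t^\top d_t$ from below on an $\epsilon$-neighborhood of $\rho^0$ using Assumption~\ref{assn:1}$(d)$,$(g)$. Both routes are valid; yours is the textbook one.

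\textbf{Part (ii).} Here the approaches diverge sharply. You propose a peeling argument over dyadic shells $V_j$, needing a maximal inequality of the form $\max_{\rho\in V_j}\|\sum_{t\in D_\rho}z_tu_t^\top\|=O_p(2^{j/2})$ for the $\rho$-indexed partial sums. You rightly flag this as the crux, and note that Lemma~\ref{lem:2} (a FCLT) does not give it directly; one would need increment bounds at scale $2^j/T$, which requires tools beyond what the paper records. The paper sidesteps this entirely: it introduces an auxiliary three-regime model with thresholds at both $\rho$ and $\rho^0$, and invokes the \cite{baiperron1998} identity to write the criterion difference as quadratic forms in $(\hat\Pi_{2\rho^0}-\hat\Pi_\Delta)$ and $(\hat\Pi_{1\rho}-\hat\Pi_\Delta)$, scaled by $T(\zeta^0-\zeta)$. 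All stochastic remainders are then handled by the uniform LLN (Lemma~\ref{lem:1}) and the smallness of the neighborhood $\epsilon$ inherited from Part~(i), so no shell-by-shell maximal inequality is needed. The payoff of the paper's route is precisely that it avoids your obstacle; the payoff of your route would be a more modular rate argument if the maximal inequality were available.

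\textbf{Part (iii).} Your argument is the same as the paper's in spirit, and in fact more careful: you correctly note that the finitely many ``extra'' terms are each $o_p(T^{1/2})$ via the $4r$-moment bound (so that $\max_t\|z_t\|^2=O_p(T^{1/(2r)})=o_p(T^{1/2})$), whereas the paper simply asserts they are ``uniformly bounded''.

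In short: your plan is correct in outline, but the maximal inequality you need in Part~(ii) is a genuine additional ingredient not supplied by Lemmas~\ref{lem:1}--\ref{lem:2}. The paper's Bai--Perron decomposition trick is worth learning as an alternative that avoids it.
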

\begin{proof}[\textbf{Proof of Theorem \ref{theo:0}.}] 
\noindent Wlog, assume that $\hat\rho\leq\rho^0$ (the proofs for $\hat\rho>\rho^0$ are similar and omitted for simplicity).\\
\indent \textit{Part (i).} Let $ Q(\hat \Pi_{1\hat \rho},\hat \Pi_{2\hat \rho}, \hat \rho)=\tr\Big(T^{-1} \suml_{t=1}^T \hat u_t\hat u_t^\top\Big)= T^{-1} \suml_{t=1}^T \hat u_t^\top \hat u_t = T^{-1} \suml_{i=1}^2 \suml_{i\hat \rho} (x_t^\top - z_t^\top \hat \Pi_{i\hat \rho})(x_t - \hat \Pi_{i\hat \rho}^\top z_t)$, and $d_t = \hat u_t - u_t$. Then, by definition, $Q(\hat \Pi_{1\hat \rho},\hat \Pi_{2\hat \rho},\hat \rho) = T^{-1} \suml_{t=1}^T (u_t + d_t)^\top (u_t + d_t) \leq  Q(\Pi_{1}^0,\Pi_{2}^0,\rho^0) =T^{-1} \sum_{t=1}^T u_t^\top u_t$ with probability one. This implies that $2 T^{-1} \suml_{t=1}^T u_t^\top d_t + T^{-1} \suml_{t=1}^T d_t^\top d_t \leq 0$ with probability one. We now prove consistency in two steps. In \textit{part (i1)}, we show that $T^{-1} \suml_{t=1}^T u_t^\top d_t=\op(1)$, implying $T^{-1} \suml_{t=1}^T d_t^\top d_t=o_p(1)$, since $T^{-1} \suml_{t=1}^T d_t^\top d_t\geq0$. In \textit{part (i2)}, we show that if $\hat \rho \stackrel{p}\nrightarrow \rho^0$, then, with strictly positive probability, $T^{-1} \suml_{t=1}^T d_t^\top d_t>K$ for some $K>0$, contradicting  $T^{-1} \suml_{t=1}^T d_t^\top d_t =o_p(1)$, and therefore yielding $\hat \rho \inp \rho^0$.\\
\indent \textit{Part (i1).} Define $\sum_{\Delta}(\cdot)=\suml_{t=1}^T(\cdot)\mathbf{1}[\rho<q_t\leq \rho^0]$. Then, for any $\rho\leq\rho^0$ instead of just $\hat \rho$\footnote{Note that $\sum_\Delta()=0$ in case of $\rho=\rho^0$.}, and using $\vect(ABC)= (C^\top \otimes A) \vect(B)$, we have:
\begin{align*}
T^{-1} \suml_{t=1}^T u_t^\top d_t & =T^{-1}\suml_{1\rho }u_t^\top (\Pi_1^0 - \hat \Pi_{1\rho})^\top z_t + T^{-1}\suml_{\Delta }u_t^\top (\Pi_1^0 - \hat \Pi_{2\rho})^\top z_t+ T^{-1}\suml_{2\rho^0 }u_t^\top (\Pi_2^0 - \hat \Pi_{2\rho})^\top z_t\\
& =  (T^{-1} \suml_{1\rho }z_t\otimes u_t)^\top\vect(\Pi_1^0 - \hat \Pi_{1\rho}) +(T^{-1} \suml_{\Delta}z_t \otimes u_t)^\top\vect(\Pi_1^0 - \hat \Pi_{2\rho})\\
&\quad+(T^{-1} \suml_{2\rho^0 }z_t \otimes u_t)^\top \vect(\Pi_2^0 - \hat \Pi_{2\rho}). 
\end{align*}
Since $\rho\leq\rho^0$, it follows that $\hat\Pi_{1\rho}$ is computed based on observations corresponding to subsamples $\indrhoo$ only and, therefore, by standard arguments $\hat\Pi_{1\rho}-\Pi_1^0=o_p(1)$. On the other hand, $\hat\Pi_{2\rho}$ is constructed using observations from both subsamples satisfying $\indrhoo$, respectively $\indinvrho.$ Straightforward calculations give $\hat\Pi_2=(T^{-1}\sum_{2\rho}z_tz_t^\top)^{-1}[(T^{-1}\sum_\Delta z_tz_t^\top)\Pi_1^0+(T^{-1}\sum_{2\rho^0} z_tz_t^\top)\Pi_2^0+(T^{-1}\sum_{2\rho} z_tu_t^\top)]=O_p(1)[O_p(1)+O_p(1)+o_p(1)]=O_p(1)$, by Lemma \ref{lem:3} $(i)$ and $(ii)$. Hence, $\Pi_1^0 - \hat \Pi_{2\rho}=\Op(1)$, and $\Pi_2^0 -\hat \Pi_{2\rho} = \Op(1)$. Also by Lemma \ref{lem:3} $(ii)$, $T^{-1} \suml_{1\rho }z_t \otimes u_t,\,T^{-1} \suml_{2\rho^0 }z_t \otimes u_t,\,T^{-1} \suml_{\Delta}z_t \otimes u_t=\op(1)$. Therefore, $T^{-1} \suml_{t=1}^T u_t^\top d_t =\op(1)$ (uniformly in $\rho$). Because these results hold uniformly over $\rho$, we have $\op(1) + T^{-1} \suml_{t=1}^T d_t^\top d_t \leq 0$ uniformly over $\rho$, and therefore also at $\hat \rho$, so $T^{-1} \suml_{t=1}^T d_t^\top d_t=o_p(1)$.\\
\indent \textit{Part (i2).} 
By the continuity assumption \ref{assn:1} $(e)$, there exists an $\epsilon>0$ such that with positive probability, $q_t \in[\rho^0-\epsilon,\rho^0+\epsilon]$. If $\hat \rho \stackrel{p}\nrightarrow \rho^0$, then $\hat \rho < \rho^0-\epsilon$ because $\hat \rho \leq \rho^0$. Consequently, the residuals evaluated over the  sub-sample $q_t \in[\rho^0-\epsilon,\rho^0+\epsilon]$ will also be evaluated at $\hat \Pi_{2\hat\rho}=\hat \Pi_2$ since $\hat\Pi_{2}$ is the multivariate LS estimator in the sample $q_t>\hat\rho$. However, the true parameter values are $\Pi_1^0$  for $q_t \in [\rho^0-\epsilon,\rho^0]$, and $\Pi_2^0$ for $q_t \in(\rho^0,\rho^0+\epsilon]$. Let $\suml_{A} = \sum \mathbf{1}[q_t \in[\rho^0-\epsilon,\rho^0]]$, $\suml_B = \sum \mathbf{1}[q_t \in (\rho^0,\rho^0+\epsilon]]$, and $\suml_{AB} = \suml \mathbf{1}[q_t \in [\rho^0-\epsilon,\rho^0+\epsilon]]$. Denote by $\pi_{1,i}^0$, $\pi_{2,i}^0$ and $\hat\pi_{2,i}$ the $i$-th columns of $\Pi_1^0$, $\Pi_2^0$, respectively $\hat\Pi_2$. Further, let $\eta_1$ and $\eta_2$ be the minimum eigenvalues of $M_{1\rho^0}-M_{1\rho^0-\epsilon}$, respectively $M_{1\rho^0+\epsilon}-M_{1\rho^0}$. Recall that $\tr(ABC)=\vect(A^\top)^\top(I\otimes B)\vect(C)$ for conformable matrices $A$, $B$ and $C$. Then:
\begin{align*}
T^{-1} \sum_{AB} d_t^\top d_t & =T^{-1}\sum_{A}\tr\Big([\Pi_1^{0}-\hat\Pi_2]^\top z_tz_t^\top[\Pi_1^{0}-\hat\Pi_2]\Big)+T^{-1}\sum_{B}\tr\Big([\Pi_2^{0}-\hat\Pi_2]^\top z_tz_t^\top[\Pi_2^{0}-\hat\Pi_2]\Big)\\
& =\vect(\Pi_1^0-\hat\Pi_2)^\top\Big(T^{-1}\sum_AI_{p_1}\otimes z_tz_t^\top\Big)\vect(\Pi_1^0-\hat\Pi_2)\\
& \quad+\vect(\Pi_2^0-\hat\Pi_2)^\top\Big(T^{-1}\sum_BI_{p_1}\otimes z_tz_t^\top\Big)\vect(\Pi_2^0-\hat\Pi_2)
\end{align*}
\begin{align*}
& =\sum_{i=1}^{p_1}(\pi_{1,i}^0-\hat\pi_{2,i})^\top(M_{1\rho^0}-M_{1\rho^0-\epsilon})(\pi_{1,i}^0-\hat\pi_{2,i})\\
& \quad+\sum_{i=1}^{p_1}(\pi_{2,i}^0-\hat\pi_{2,i})^\top(M_{1\rho^0+\epsilon}-M_{1\rho^0})(\pi_{2,i}^0-\hat\pi_{2,i})+o_p(1)\\
& \geq\eta_1\sum_{i=1}^{p_1}\|\pi_{1,i}^0-\hat\pi_{2,i}\|^2+\eta_2\sum_{i=1}^{p_1}\|\pi_{2,i}^0-\hat\pi_{2,i}\|^2+o_p(1)\\
& \geq\min\{\eta_1,\eta_2\}\sum_{i=1}^{p_1}\frac{\|\pi_{1,i}^0-\pi_{2,i}^0\|^2}{2}+o_p(1)
\end{align*}
where the last equality used continuity of $M_{1\rho}$ at $\rho^0$ (Assumption \ref{assn:1} $(d)$) and the last inequality used the fact that for any vectors $a,b,c$ of the same length,  $(a-b)^\top (a-b) +(c-b)^\top(c-b) \geq (a-c)^\top(a-c)/2$. Next, $\min\{\eta_1,\eta_2\}>K>0$ by Assumption \ref{assn:1} $(d)$ and there exists at least one $i=1,...,p_1$ such that $\|\Pi_{1,i}^0-\Pi_{2,i}^0\|>0$ by Assumption \ref{assn:1} $(g)$. Hence, $\plim T^{-1} \sum_{t=1}^t d_t^\top d_t >K>0$ with strictly positive probability, reaching a contradiction. \\
\indent \textit{Part (ii).} Let $\zeta_0=T^{-1}\sum_{t=1}^T \mathbf 1[q_t \leq \rho^0]$, $\zeta=T^{-1}\sum_{t=1}^T \mathbf 1[q_t\leq \rho]$ and $\hat \zeta = T^{-1}\sum_{t=1}^T \mathbf 1[q_t\leq\hat \rho]$. Then $\rho_0-
\rho< \epsilon$, for some small $\epsilon$, can equivalently be written as $\zeta_0-\zeta<\omega$, for some small $\omega$ (where there is a one-to-one correspondence between $\omega$ and $\epsilon$), or in other words, the difference between the \textit{fraction} of observations below the $\rho^0$ and below the $\rho$ quantiles is smaller than $\omega$, as the distribution of $q_t$ is continuous. Define the set $V_\omega=\{\zeta:\zeta_0-\zeta<\omega\}$ for some $\omega>0$. Since $\hat\rho\inp\rho^0$ by Part $(i)$, it holds that $\hat\zeta\in V_\omega$, for large enough $T$ and small enough $\omega$. Thus, we only need to consider the case where $\zeta_0-\zeta<\omega$, for some small $\omega>0$. For $C>0$, define the set $V_{\omega}(C)=\{\zeta:\,\zeta^0-\zeta\leq\omega,\,T(\zeta^0-\zeta)>C\}$, i.e. the set of all values such that the \textit{number} of observations between the two quantiles is larger $C$. By construction $V_{\omega}(C)\subset V_{\omega}$. Thus, if we can show that $\hat \zeta\notin V_{\omega}(C)$, it follows that the number of observations $T\zeta^0 -T \zeta \leq C$, which means in turn that   $T(\hat\rho-\rho^0)=O_p(1)$. Hence, the quantiles $\hat \rho$ and $\rho$ cannot be more than a fixed number apart with large probability. To this extent, recall the definition $Q(\hat\Pi_{1\rho},\hat\Pi_{2\rho},\rho)$ of the multivariate SSR in Part $(i)$ and that $Q(\hat\Pi_{1\hat\rho},\hat\Pi_{2\hat\rho},\hat\rho)\leq Q(\hat\Pi_{1\rho^0},\hat\Pi_{2\rho^0},\rho^0)$ with probability one. To verify the claim, we show that for each $K>0$ there exist $C>0$ and $\epsilon>0$ such that $P(\min_{\zeta\in V_{\omega}(C)}\{Q(\hat\Pi_{1\rho},\hat\Pi_{2\rho},\rho)- Q(\hat\Pi_{1\rho^0},\hat\Pi_{2\rho^0},\rho^0)\}\leq0)<K$ for large $T$. That is, we are going to show that $\hat\zeta\notin V_{\omega}(C)$. Thus, it must follow that $T(\hat\zeta-\zeta^0)\leq C$ with large probability.\\
\indent To this extent, define $Q(\hat\Pi_{1\rho},\hat\Pi_\Delta,\hat\Pi_{2\rho^0},\rho,\rho^0)$ where $\hat\Pi_\Delta$ is constructed using observations $\rho<q_t\leq\rho^0$; i.e., next to $\rho$ we  introduce a second threshold at $\rho^0$ such that $\hat\Pi_\Delta$ is evaluated over samples associated with $\Pi_1^0$ only. Note that $Q(\hat\Pi_{1\hat\rho},\hat\Pi_{2\hat\rho},\hat\rho)-Q(\hat\Pi_{1\rho^0},\hat\Pi_{2\rho^0},\rho^0)=(Q(\hat\Pi_{1\hat\rho},\hat\Pi_{2\hat\rho},\hat\rho)-Q(\hat\Pi_{1\rho},\hat\Pi_\Delta,\hat\Pi_{2\rho^0},\rho,\rho^0))-(Q(\hat\Pi_{1\rho^0},\hat\Pi_{2\rho^0},\rho^0)-Q(\hat\Pi_{1\rho},\hat\Pi_\Delta,\hat\Pi_{2\rho^0},\rho,\rho^0))$. Define $\hat Q_\Delta=\frac{1}{T(\zeta^0-\zeta)}\sum_\Delta z_tz_t^\top$ and $\hat M_\Delta=(\zeta^0-\zeta)\hat Q_\Delta$. Then, by \cite{baiperron1998}, pp. 70, 
\begin{align*}
\frac{Q(\hat\Pi_{1\hat\rho},\hat\Pi_{2\hat\rho},\hat\rho)-Q(\hat\Pi_{1\rho},\hat\Pi_\Delta,\hat\Pi_{2\rho^0},\rho,\rho^0)}{T(\zeta^0-\zeta)} & = \tr[(\hat \Pi_{2\rho^0}-\hat \Pi_\Delta)^\top ( \hat Q_\Delta - \hat Q_\Delta \hat M_{2\rho}^{-1} \hat M_\Delta)(\hat \Pi_{2\rho^0}-\hat \Pi_\Delta)] \\
\frac{Q(\hat\Pi_{1\rho^0},\hat\Pi_{2\rho^0},\rho^0)-Q(\hat\Pi_{1\rho},\hat\Pi_\Delta,\hat\Pi_{2\rho^0},\rho,\rho^0)}{T(\zeta^0-\zeta)} & = \tr[(\hat \Pi_{1\rho}-\hat \Pi_\Delta)^\top ( \hat Q_\Delta - \hat Q_\Delta \hat M_{1\rho^0}^{-1} \hat M_\Delta)(\hat \Pi_{1\rho}-\hat \Pi_\Delta)].
\end{align*}
By Lemma \ref{lem:3} $(i)$ and $(ii)$, $\hat M_{1\rho^0}=M_{1\rho^0}+o_p(1)$ and $\hat M_{2\rho}=M_{2\rho}+o_p(1)$ so that their inverses are $O_p(1)$. Thus, also on $V_{\omega}(C)$. Similarly, $\hat\Pi_{1\rho}=\Pi_{1}^0+o_p(1)$ (since it only uses observations satisfying $q_t\leq\rho<\rho^0$) and $\hat\Pi_{2\rho^0}=\Pi_2^0+o_p(1)$. Straightforward calculations show that $\hat\Pi_\Delta=\Pi_1^0+(T^{-1}\sum_\Delta z_tz_t^\top)^{-1}(T^{-1}\sum_\Delta z_tu_t^\top)$ if $\rho<\rho^0$. By Lemma 3, we have $\hat M_\Delta=T^{-1}\sum_\Delta z_tz_t^\top=T^{-1}\sum_{1\rho^0}z_tz_t^\top-T^{-1}\sum_{1\rho}z_tz_t^\top=M_{1\rho^0}-M_{1\rho}+o_p(1)$ and, similarly $T^{-1}\sum_\Delta z_tu_t^\top=o_p(1)$. For the former term, we have that $M_{1\rho^0}-M_{1\rho}$ has eigenvalues bounded away from 0 on $V_{\omega}(C)$ by Assumption \ref{assn:1} $(d)$ and its inverse is therefore bounded (in the sense that its largest eigenvalue is smaller/equal than some finite constant). Hence, it follows that $\hat\Pi_\Delta=\Pi_1^0+o_p(1)$ on $V_{\omega}(C)$. Moreover, on $V_{\omega}(C)$, we have
\begin{align*}
\|\hat M_\Delta\| &=\| E[z_t z_t^\top (\indrhoo - \indrho)]+\op(1)\| \leq (E\|z_t\|^4)^{1/2} \left(\int_{\rho}^{\rho^0} f(x)dx\right)^{1/2}+ \op(1) \\
&  \leq M\left(\int_{\rho^0-\epsilon}^{\rho^0} f(x)dx \right)^{1/2} +\op(1) = M \sqrt{f(b) \epsilon} +\op(1) =\sqrt{\epsilon}\Op(1),  
\end{align*}
where the second-to-last equality holds by continuity and boundedness of $f(x)$ (Assumption \ref{assn:1} $(e)$) for some $b \in [\rho^0-\epsilon,\rho^0]$. Moreover, $\hat Q_\Delta=\frac{1}{T(\zeta^0-\zeta)}\sum_\Delta z_tz_t^\top=O_p(1)$ on $V_{\omega}(C)$ for large enough $C$ by Lemma \ref{lem:3} since $T(\zeta^0-\zeta)>C>0$ such that $[T(\zeta^0-\zeta)]^{-1}$ is bounded.
Thus, on $V_\omega(C)$ it follows that
\begin{align*}
\frac{Q(\hat\Pi_{1\hat\rho},\hat\Pi_{2\hat\rho},\hat\rho)-Q(\hat\Pi_{1\rho},\hat\Pi_\Delta,\hat\Pi_{2\rho^0},\rho,\rho^0)}{T(\zeta^0-\zeta)} & =\tr[(\Pi_2^0-\Pi_1^0)^\top\hat Q_\Delta(\Pi_2^0-\Pi_1^0)^\top]+\sqrt{\epsilon}O_p(1)+o_p(1)\\
\frac{Q(\hat\Pi_{1\rho^0},\hat\Pi_{2\rho^0},\rho^0)-Q(\hat\Pi_{1\rho},\hat\Pi_\Delta,\hat\Pi_{2\rho^0},\rho,\rho^0)}{T(\zeta^0-\zeta)} & = o_p(1)
\end{align*}
and consequently that
$$
\frac{Q(\hat\Pi_{1\hat\rho},\hat\Pi_{2\hat\rho},\hat\rho)-Q(\hat\Pi_{1\rho^0},\hat\Pi_{2\rho^0},\rho^0)}{T(\zeta^0-\zeta)}=\tr[(\Pi_2^0-\Pi_1^0)^\top\hat Q_\Delta(\Pi_2^0-\Pi_1^0)^\top]+\sqrt{\epsilon}O_p(1)+o_p(1).
$$
Finally, $\hat Q_\Delta=\frac{M_{1\rho^0}-M_{1\rho}}{\zeta^0-\zeta}+o_p(1)$ on $V_\omega(C)$ for large enough $C$. Since $M_{1\rho^0}-M_{1\rho}$ has smallest eigenvalue bounded away from zero (Assumption \ref{assn:1} $(d)$) and since $\rho^0-\rho>0$ it follows that $\tr[(\Pi_2^0-\Pi_1^0)^\top\frac{M_{1\rho^0}-M_{1\rho}}{\zeta^0-\zeta}(\Pi_2^0-\Pi_1^0)^\top]>K>0$. Hence, $\frac{Q(\hat\Pi_{1\hat\rho},\hat\Pi_{2\hat\rho},\hat\rho)-Q(\hat\Pi_{1\rho^0},\hat\Pi_{2\rho^0},\rho^0)}{T(\zeta^0-\zeta)}>K>0$. Hence, $\hat\zeta\notin V_\omega(C)$ and therefore, $T(\zeta^0-\hat \zeta) \leq C$ with large probability, hence $T(\rho^0-\rho)=O_p(1)$.\\
\indent \textit{Part (iii).} Since any partial sum in the expression of $\hat \Pi_i$ differs by the partial sum in the expression of $\hat \Pi_{i\rho^0}$ by $|T(\rho^0-\hat \rho)|<D$ terms, for some $D>0$, which are uniformly bounded by Assumption \ref{assn:1}(c), it follows that $T^{1/2}\vect(\hat\Pi_i-\Pi_i^0) = T^{1/2}\vect(\hat\Pi_{i\rho^0}-\Pi_i^0)+\op(1)$, for $i=1,2$. The rest of the proof follows standard arguments.
\end{proof}
\noindent \textbf{\textit{By Theorem \ref{theo:0} and its proof, wlog, we treat in what follows $\hat \rho$ as if it was equal to $\rho^0$.}}
\subsubsection{Asymptotic Distribution of 2SLS Test Statistics with a TFS}\label{sec:ProofsTFSAsyms}
\begin{definition}\label{def:QuantsTFS}
Let $A_i^0=[\Pi_i^0,S^\top]^\top$ be the augmented matrices of the FS slope parameters, where $S=[I_{p_2},\mathbf{0}_{p_2\times q_1}]$,  $q_1=q-p_2$, and $\hat A_i = [\hat \Pi_i,S^\top]^\top$. Hence, $z_{1t} = Sz_t$ and $w_t = A_1^0 z_t \indrhoo + A_2^0 z_t\indinvrhoo+ \bar u_t$. Let $ A_t^0 = A_1^0 \indrhoo+ A_2^0 \indinvrhoo$, $ \Pi_t = \Pi_1^0 \indrhoo+ \Pi_2^0 \indinvrhoo$.\\
\indent Let $\wedge$ and $\vee$ define the minimum and maximum operators. Let $C_{1\gamma}= A_1^0 M_{1,\mi} A_1^{0\top} +  A_2^0 (M_{1\gamma}- M_{1,\mi}) A_2^{0\top}$ and $C_{2\gamma} = A_1^{0}  (M_{1,\ma}- M_{1 \gamma}) A_1^{0\top}  + A_2^{0}  M_{2, \ma } A_2^{0\top}$. Also, $C_\gamma = [C_{1\gamma}^{-1}, -C_{2\gamma}^{-1}]$, $C = C_{1\gamma}+C_{2\gamma}$,  $R_{i\gamma} = M_{i\gamma} M_{i\rho^0}^{-1}$ for $i=1,2$,   $D=[1,0_{1\times p_1}]\otimes I_q$, and $Q_\gamma =C_{1\gamma} C^{-1} C_{2\gamma}$. Let $F_{i\gamma} = [\check \theta^{0\top} \otimes R_{i\gamma}]$ and $D_{i\gamma}= [\tilde \theta^{0\top} \otimes I_q]-F_{i\gamma}$.
 Also define the Gaussian processes:
\begin{align*}
\mathcal B_1(\gamma) = \begin{cases} A_1^0 \big( D_{1\gamma}\mathcal G_1(\gamma) -F_{1\gamma} (\mathcal G_1(\rho^0)-\mathcal G_1(\gamma)\big),&\gamma \leq \rho^0 \\
\mathcal B  - A_2^0\big( D_{2\gamma}\mathcal G_2(\gamma) -F_{2\gamma} (\mathcal G_1(\gamma)-\mathcal G_1(\rho^0)) &\gamma >\rho^0,
\end{cases}
\end{align*}
where $\mathcal B = \sum_{i=1}^2 A_i^0D\mathcal G_i(\rho^0)$. Let $V_{\mathcal B} =  \sum_{i=1}^2 A_i^0DH_{i,\rho^0}D^\top A_i^{0\top}$, 
\begin{align*}
V_{\mathcal B,1\gamma}& =\begin{cases} A_1^0 [ D_{1\gamma} H_{1\gamma}D_{1\gamma}+ F_{1\gamma} ( H_{1,\rho^0}-H_{1\gamma})F_{1\gamma}]A_1^{0\top}, & \gamma \leq \rho^0, \\ 
 A_1^0\, D \, H_{1\rho^0} D A_1^0 + A_2^0 [ (D+ F_{2\gamma}) (H_{1\gamma}-H_{1\rho^0})(D+ F_{2\gamma})^\top \\
\quad+ (D-D_{2\gamma})H_{2\gamma} (D-D_{2\gamma})^\top] A_2^{0\top},  &   \gamma > \rho^0 
\end{cases},
\\
V_{\mathcal B,12\gamma} & = \begin{cases} 
A_1^0  [ D_{1\gamma} H_{1\gamma}-F_{1\gamma} (H_{1,\rho^0}-H_{1\gamma})]\, D^\top A_1^{0\top} -V_{\mathcal B,1\gamma},& \gamma \leq \rho^0\\
A_2^0[ (D-D_{2\gamma}) H_{2\gamma} D_{2\gamma}^\top - (F_{2\gamma}+D) (H_{1\gamma}-H_{1\rho^0})F_{2\gamma}^\top] A_2^{0\top}, & \gamma > \rho^0
\end{cases},
\end{align*}
and $V_{\mathcal B,2\gamma} = V_{\mathcal B}-V_{\mathcal B, 1\gamma}-V_{\mathcal B, 12,\gamma}- V_{\mathcal B, 12,\gamma}^\top$ denote the covariances of $\mathcal B$, $\mathcal B_{1\gamma}$, the covariance between $\mathcal B_{1\gamma}$ and $\mathcal B_{2\gamma}$, as well as the covariance of $\mathcal B_{2\gamma}$. Then $V_{\mathcal B,\gamma}  = \begin{bmatrix} V_{\mathcal B,1\gamma} & V_{\mathcal B,12, \gamma} \\ V_{\mathcal B,12, \gamma} ^\top & V_{\mathcal B, 2\gamma}\end{bmatrix},$ and $V_{\gamma} = C_{\gamma}  V_{\mathcal B,\gamma} C_\gamma^\top$. \\
Let $\bar F_{i\gamma} = [\check \theta^{\top} \otimes \hat R_{i\gamma}]$ and $\bar D_{i\gamma}= [\tilde \theta^{\top} \otimes I_q]-\bar F_{i\gamma}$, where $\tilde \theta^\top = [1,\hat \theta_x]$ and $\check \theta = [0,\hat \theta^\top]$.  where $\tilde \theta = [1,\hat \theta_x]$ and $\check \theta = [0,\hat \theta_x]$. Then $\hat V_\gamma$ is defined as $V_{\gamma}$, but replacing $C_{i\gamma}$ with $\hat C_{i\gamma}$, $A_i^0$ with $\hat A_i$,  $D_{i\gamma}$ by $\bar D_{i\gamma}$, $F_{\i\gamma}$ by $\bar F_{i\gamma}$, $H_{i\gamma}$ by $\hat H_{i\gamma} = T^{-1} \suml_{i\gamma} \hat v_t \hat v_t^\top \otimes z_t z_t^\top$.
\end{definition}

\noindent \textbf{\textit{With this new notation, we now reprove Lemmas \ref{lem:4} and \ref{lem:5}, and Theorems  \ref{thm:LFSAsyms} and \ref{thm:1}, for $x_t$ generated by the TFS \eqref{equ:TMTS.3} instead of the LFS \eqref{equ:TMTS.2}.}}
\begin{proof}[\textbf{Proof of Lemma \ref{lem:4}.}]
\noindent \textit{Part (i).  } $T^{-1} \sum_{i\gamma} z_t z_t^\top \inp M_{i\gamma}$ still holds, as the result is not specific to a LFS or TFS. So,
\begin{align*}
T^{-1} \suml_{1\gamma} \hat w_t \hat w_t^\top &= \hat A_1 T^{-1}\suml_{1,\mi}z_t z_t^\top \hat A_1^\top + \hat A_2 \left (T^{-1} \suml_{1\gamma} z_t z_t^\top - T^{-1} \suml_{1,\mi} z_t z_t^\top \right)\hat A_2^\top \\
& \inp A_1^0 M_{1,\mi} A_1^{0\top} +  A_2^0 (M_{1\gamma}- M_{1,\mi}) A_2^{0\top} = C_{1\gamma}  \\
T^{-1} \suml_{2\gamma} \hat w_t \hat w_t^\top &= \hat A_1 \left(T^{-1} \suml_{1,\ma}  z_t z_t^\top - T^{-1} \suml_{1,\gamma}z_t z_t^\top \right) \hat A_1^\top + \hat A_2 T^{-1} \suml_{2\ma} z_t z_t^\top  \hat A_2^\top \\
&\inp A_1^{0}  (M_{1,\ma}- M_{1 \gamma}) A_1^{0\top}  + A_2^{0}  M_{2\ma} A_2^{0\top}= C_{2\gamma}.
\end{align*}
\indent \textit{Part (ii).  } The result $T^{-1/2} \sum_{i\gamma} v_t \otimes z_t \Rightarrow \mathcal G_i(\gamma)$ still holds. But now, $$\tilde \epsilon_t = \epsilon_t +(\hat x_t-x_t)^\top \theta_x^0 =   \epsilon_t + u_t^\top \theta_x^0 - \indrhoo[z_t^\top (\hat \Pi_1 -\Pi_1^0)\theta_x^0]- \indinvrhoo[z_t^\top (\hat \Pi_2-\Pi_2^0) \theta_x^0].$$ Therefore, for $\gamma \leq \rho^0$,
\begin{align*}
&T^{-1/2} \suml_{1\gamma} \hat w_t \tilde \epsilon_t = A_1^0 \, \big( T^{-1/2} \suml_{1\gamma} z_t (\epsilon_t+ u_t^\top \theta_x^0) - M_{1\gamma} M_{1\rho^0}^{-1} T^{-1/2} \suml_{1\rho^0} z_t u_t^\top \theta_x^0)\big) \\
& \ind A_1^0 \big( [\tilde \theta^0 \otimes I_q] \,\mathcal G_1(\gamma) -[\check \theta^0 \otimes R_{1\gamma}] \mathcal G_1(\rho^0)\big) = A_1^0 \big( D_{1\gamma} \,\mathcal G_1(\gamma) -F_{1\gamma} (\mathcal G_1(\rho^0)-\mathcal G_1(\gamma)\big)\\
&=\mathcal B_1(\gamma). 
\end{align*}
For $\gamma >\rho^0$,
\begin{align*}
T^{-1/2} \suml_{1\gamma} \hat w_t \tilde \epsilon_t & =T^{-1/2}  \suml_{t=1}^T\hat w_t \tilde \epsilon_t -T^{-1/2} \suml_{2\gamma} \hat w_t \tilde \epsilon_t\\
&\ind A_1^0 \, D \, \mathcal G_1(\rho^0) + A_2^0\,  D\mathcal G_2(\rho^0)   - A_{2}^0 ( [\tilde \theta^0 \otimes I_q] \mathcal G_2(\gamma) -F_{2\gamma} \mathcal G_2(\rho^0)) \\
& =\mathcal B  - A_2^0\big( D_{2\gamma} \,\mathcal G_2(\gamma) -F_{2\gamma} (\mathcal G_1(\gamma)-\mathcal G_1(\rho^0)) = \mathcal B_1(\gamma). 
\end{align*}
Because  $T^{-1/2}  \suml_{t=1}^T\hat w_t \tilde \epsilon_t \ind A_1^0 \, D \, \mathcal G_1(\rho^0) + A_2^0\,  D\mathcal G_2(\rho^0) $,
$
T^{-1/2} \suml_{2\gamma} \hat w_t \tilde \epsilon_t \Rightarrow A_1^0 \, D \, \mathcal G_{1}(\rho^0) + A_2^0 \, D \, \mathcal G_2(\rho^0) - \mathcal B_1(\gamma)= \mathcal B - \mathcal B_1(\gamma)=\mathcal B_2(\gamma)$,
and   $\vect(T^{-1/2} \suml_{1\gamma} \hat w_t \tilde \epsilon_t, T^{-1/2} \suml_{2\gamma} \hat w_t \tilde \epsilon_t)\Rightarrow \mathcal B(\gamma)$.
\end{proof}
\begin{proof}[\textbf{Proof of Theorem \ref{thm:LFSAsyms}.}] \textit{Part (i)}. Because $T^{-1/2} (\hat \theta_{1\gamma}- \hat \theta_{2\gamma}) = \hat C_{1\gamma}^{-1} T^{-1/2} \, \sum_{1\gamma} \hat w_t \tilde \epsilon_t - \hat C_{2\gamma}^{-1}\,  T^{-1/2} \sum_{2\gamma} \hat w_t \tilde \epsilon_t$, the desired result follows directly from Lemma \ref{lem:3}.
 
\indent \textit{Part (ii)}. Follows the same steps as for the LFS proof until equation \eqref{equ:PT4.1}. Then note that because $\hat \Pi_i- \Pi_i^0 =\op(1)$, 
\begin{align*}
T^{-1} SSR_1(\gamma) &= T^{-1} \suml_{i=1}^2 \left(\suml_{i\rho^0} (\epsilon_t + u_t^\top \theta_x^0)^2 - 2 \suml_{i\rho^0} (\epsilon_t + u_t^\top \theta_x^0)z_t^\top (\hat \Pi_i- \Pi_i^0)\theta_x^0 \right.\\
&\qquad\left.+ \suml_{i\rho^0} \theta_x^{0\top}(\hat \Pi_i- \Pi_i^0)^\top  T^{-1} \suml_{i\rho^0} z_t z_t^\top  (\hat \Pi_i- \Pi_i^0) \theta_x^0\right) \\
& = \sigma^2 +\op(1),
\end{align*}    
 following the same arguments as in the LFS proof.
 
\indent \textit{ Part (iii).} It can be shown by similar arguments to the LFS, but now separately for cases $\gamma \leq \rho^0$ and $\gamma \geq \rho^0$, and taking to account the different parameter estimates in different regimes, that $\hat V_\gamma \inp V_\gamma$. Because of part (i) of this theorem, the desired result follows. 
\end{proof}
        
\begin{proof}[\textbf{Proof of Lemma \ref{lem:4} and Theorem \ref{thm:1}.}] As evident from the proof of Theorem \ref{thm:LFSAsyms} for a TFS, besides replacing $\hat \Pi$ with $\hat \Pi_i$, and $\hat \Pi^b$ with $\hat \Pi_i^b$, and re-deriving the terms involving these, there are no essential differences between the proofs for a LFS and a TFS, and for brevity we omit these proofs.
\end{proof}
\end{document}